\DeclareMathOperator*{\argmax}{arg\,max}
\definecolor{myred}{rgb}{0.6,0,0}
\definecolor{myblue}{rgb}{0,0,0.6}
\newtheorem{definition}{Definition}
\newtheorem{example}{Example}
\newtheorem{theorem}{Theorem}
\newtheorem{lemma}{Lemma}
\newtheorem*{lemma*}{Lemma}
\newtheorem*{theorem*}{Theorem}
\newtheorem{observation}{Observation}
\date{}
\title{On the Complexity of Majority Illusion in Social Networks}
\author{
	Umberto Grandi \\ University of Toulouse \and Grzegorz Lisowski \\ University of Warwick \and M.S. Ramanujan \\ University of Warwick \and Paolo Turrini \\ University of Warwick
}
\tikzset{
	between/.style args={#1 and #2}{
		at = ($(#1)!0.5!(#2)$)
	}
}
\tikzset{fontscale/.style = {font=\relsize{#1}}
}
\begin{document}
	\maketitle
	
	\begin{abstract}
		Majority illusion occurs in a social network when the majority of the network nodes belong to a certain type but each node's neighbours mostly belong to a different type, therefore creating the wrong perception, i.e., the illusion, that the majority type is different from the actual one. From a system engineering point of view, we want to devise algorithms to detect and, crucially, correct this undesirable phenomenon. In this paper we initiate the computational study of majority illusion in social networks, providing complexity results for its occurrence and avoidance. Namely, we show that identifying whether a network can be labelled 
		%in such a way 
		such that majority illusion is present, as well as the problem of removing an illusion by adding or deleting edges of the network, are NP-complete problems.
	\end{abstract}
	
	\section{Introduction}
	
	Social networks shape the way people think. Individuals' private opinions can change as a result of social influence and a well-placed minority view can become what most people come to believe  \citep{Stewart2019}. %The omnipresence of social media amplifies this phenomenon even further, as the flow of information coming from someone's connections is constantly skewed their own network structure.
	There is also a natural tendency for people to connect to individuals similar to them, the so-called homophily (see, e.g. \cite{mcpherson2001birds}), which adds to the potential for a social network to create information bubbles and is amplified even further in modern social media networks (\cite{lee2019homophily}). %It is also important to stress that the structure of a network itself can affect how influential particular individuals are. For instance, highly connected people might alter views of a disproportionate amount of members of society.
	The current vaccination debate has brought to the fore the dramatic effects that misperception can have in people's lives  \citep{covid-nature} and made it clear how important it is to design social networks where participants receive the most unbiased information possible.
	
	When individuals use their social network as a source of information, it can happen that minority groups are more ``visible" as a result of being better placed, which makes them overrepresented in many friendships' groups. Sometimes these minorities can be so well placed that many or even most individuals ``see" them as majorities - a phenomenon called {\em majority illusion}. Majority illusion was originally introduced by \citet{lerman2016majority} who studied the existence of social networks in which most agents belong to a certain binary type, but most of their peers belong to a different one. Thus, they acquire the wrong perception, i.e., the illusion, that the majority type is different from the actual one. 
	Figure \ref{fig:deadlock} gives an example of this.
	
	%When individuals form their opinions in a social network As it is argued by the authors of the mentioned study, such a phenomenon can heavily affect the outcome of opinion diffusion, especially in \emph{threshold} diffusion protocols. This model has originated from \citet{granovetter1978threshold}, and have gained substantial attention in the AI literature (see, e.g. \cite{chistikov2020convergence}, \cite{AulettaEtAlAIJ2020}). There, an agent switches their opinion if more than a certain fraction of their peers disagrees with them. One can observe that if this fraction is smaller or equal than $\frac{1}{2}$, all individual under majority illusion will immediately accept the unpopular opinion. Similarly, if the majority of agents is under illusion, then the minority opinion becomes the majority option in just one step of opinion diffusion. 
	
	%\begin{comment}
	
	%Observe how the minority opinion becomes the majority opinion if the mentioned threshold protocol is applied.
	
	\begin{figure}[H]	
		\centering
		\scalebox{0.8}{
			\begin{tikzpicture}
				[->,shorten >=1pt,auto,node distance=1.2cm,
				semithick]
				\node[shape=circle,draw=myred, pattern=crosshatch, pattern color=myred] (A)  {};
				\node[shape=circle,draw=myred,  right of=A, pattern=crosshatch, pattern color=myred] (B)  {};
				\node[shape=circle,draw=myred, below of=A, pattern=crosshatch, pattern color=myred] (C)  {};
				\node[shape=circle,draw=myred,  below of=B, pattern=crosshatch, pattern color=myred] (D)  {};
				
				\draw [thick,-] (B) to  (A) ;
				\draw [thick,-] (C) to  (A) ;
				\draw [thick,-] (D) to  (A) ;
				\draw [thick,-] (C) to  (B) ;
				\draw [thick,-] (C) to  (D) ;
				\draw [thick,-] (B) to  (D) ;

				\node[shape=circle,draw=myblue,  left of=C, fill=myblue] (E)  {};
				\node[shape=circle,draw=myblue, above of=A, fill=myblue] (E')  {};
				\node[shape=circle,draw=myblue, above of=B, fill=myblue] (F)  {};
				\node[shape=circle,draw=myblue, above of=B, right of=B, fill=myblue] (G)  {};
				
				\node[shape=circle,draw=black, right of=D, fill=myblue] (H)  {};
				
				\draw [thick,-] (E') to  (F) ;
				\draw [thick,-] (F) to  (A) ;
				\draw [thick,-] (E') to  (B) ;
				%\draw [thick,dashed,-] (E) to  (E') ;
				%\draw [thick,-] (E') to  (F) ;
				%\draw [thick,-] (F) to  (G) ;
				\draw [thick,-] (D) to  (H) ;

				\draw [thick,-] (E) to  (C) ;
				\draw [thick,-] (E') to  (A) ;
				
				\draw [thick,-] (F) to  (B) ;
				\draw [thick,-] (G) to  (B) ;
				
				%	\draw [thick,-] (I) to  (C) ;
				%	\draw [thick,-] (J) to  (C) ;
				
		\end{tikzpicture}}

		\caption{An instance of majority illusion. The well-placed red minority is seen as a majority by everyone}\label{fig:deadlock}

	\end{figure}
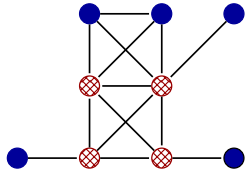

	Majority illusion has important consequences when paired with opinion formation. If for example individuals are influenced by the majority of their friends to change their mind, i.e., they abide to the well-known threshold model (\cite{granovetter1978threshold}), then majority illusion means that the overrepresented minorities become stable majorities.
	As such it is important to predict the occurrence of majority illusions in a network and, crucially, how a given network can be transformed so that this undesirable phenomenon is eliminated.
	
	Some analysis of majority illusion is already present in the literature.  \citet{lerman2016majority}, for example, studied network features that correlate with having many individuals under illusion. In particular, the study demonstrated how disassortative networks, i.e. those in which highly connected agents tend to link with lowly connected ones, increase the chances of majority illusion. 
	However, the computational questions of checking whether a network admits majority illusion and, crucially, how this can be corrected, are still unanswered. 
	
	Network transformation has shown important applications in the context of election manipulation (see, e.g., \cite{castiglioni2021election}), influence maximisation  \citep{zhou21maximizing}, anonymisation (see, e.g., \cite{kapron2011social}) and of $k$-core maximization (see, e.g.,  \cite{chitnis}, \cite{zhou2019k}).  Applying optimal network transformation techniques for illusion elimination is therefore a natural and important challenge. 
	%We address the discussed problems by analysing their computational complexity.

	%\end{comment}
	
	%%%%%%%%%%%%%%%%%%%%%%%%%%%%%%%%%%%%%%%%

	\paragraph{Our contribution.}
	In this paper we initiate the algorithmic analysis of majority illusion in social networks, focusing on two computational questions.
	First, we are interested in which networks allow for the possibility of illusion, i.e., whether there is a labelling of the nodes such that a specified fraction of agents is under illusion. We show that such problem is NP-complete for every fraction strictly greater than $\frac{1}{2}$ by a non-trivial reduction from the NP-complete problem 3-SAT. %The complexity of this problem for smaller fractions remains open.
	Further, we focus on the problem of eliminating illusion from a network, by modifying the agents' connectivity with a constraint on the number of edges which can be added or eliminated. We show that checking if it is possible to alter the structure of the network to ensure that at most a given fraction 
	%$q$ in $(0,1)$ 
	of agents is under illusion is NP-complete, reducing from the NP-complete problem 2P2N-SAT.

	\paragraph{Other related work.} Our results are also connected to a number of research lines in various AI-related areas.

	{\em Opinion Manipulation.} Our work is directly related to computational models of social influence, notably the work of \citet{AulettaEtAlAIJ2020}, where networks and initial distribution of opinions  are identified such that an opinion can become a consensus opinion following local majority updates. In this context, it is important to observe that when all nodes are under majority illusion, a synchronous majoritarian update causes an initial minority to evolve into a consensus in just one step. Other notable models include  \cite{DoucetteEtAlJAAMAS2019} who studied the propagation of possibly incorrect opinions with an objective truth value in a social network, and the stream of papers studying the computational aspects of exploiting (majoritarian) social influence via opinion transformation \citep{BredereckElkind2017,AulettaEtAlAIJ2020,AulettaEtAlTCS2021,CastiglioniAAAI2020}.
	
	{\em Network Manipulation.} An important research line has looked at how to transform a social network structure with applications in the voting domain. \cite{WilderVorobeychik2018}, e.g., studied how an external manipulator having a limited budget can select a set of agents to directly influence, to obtain a desired outcome of elections. In a similar setting, \cite{FaliszewskiEtAl2018} studied `` bribes" of voters' clusters. 
	
	{\em Social Choice on Social Networks.}
	Our research aligns with the work in computational social choice, in particular strategic voting \citep{MeirStrategicVoting} and iterative voting (e.g., \cite{MeirAIJ2017,ReijngoudEndrissAAMAS2012}) where decision-making happens sequentially. Of relevance are also the recently found connections between iterative voting and social networks (\cite{Wilczynski2019PollConfidentVI}, \cite{Baumeister2020ManipulationOO}).
	%, in which voters respond to full information about the vote distribution in its entirety and no social network is considered.

	There are also various other accounts of paradoxical effects in social networks which are related to our work, such as  the \emph{friendship paradox}, according to which, on average, individuals are less well-connected than their friends (see, e.g.\cite{hodas2013friendship}, \cite{alipourfard2020friendship}). Exploiting a similar paradox, \cite{santos2021biased} recently showed how false consensus leads to the lack of participation in team efforts.
	
	\paragraph{Paper structure.}
	Section~\ref{sec:preliminaries} provides the basic setup and definitions. Section~\ref{sec:verifying} focuses on checking whether illusion can occur in a network while Section~\ref{sec:eliminating} studies illusion elimination. Section~\ref{sec:conclusions} concludes the paper presenting various potential future directions. Some proofs are omitted and can be found in the appendix.

	%%%%%%%%%%%%%%%%%%%%%%%%%%%%%%%%%%%%%%%%
	
	\section{Preliminaries}\label{sec:preliminaries}
	
	Our model features a set $N$ of agents, connected in a graph $(N,E)$, with $E\subseteq N^2$. Throughout the paper we will consider \emph{undirected graphs}, requiring $E$ to be symmetric. Furthermore, we assume that $E$ is \emph{irreflexive}, i.e. that $E$ does not include self-loops. We call such a graph a \emph{social network}. For $i\in N$ we denote as $E(i)=\{j \in N : E(i,j)\}$ the set of agents that $i$ is following. %Also, $i - j$, means that $j\in E(i)$. 
	Furthermore, a network  $( N, E)$ is an \emph{extension} of $( N, E')$ if $E' \subseteq E$. Similarly, if $E \subseteq E'$, we say that $(N,E)$ is a \emph{subnetwork} of $(N,E')$.
	
	\paragraph{Labellings.}
	
	We will work with social networks where each of the agents has an opinion, which we model as a labelling (or a colouring) over two possible alternatives. 	So, we consider \emph{labelled social networks}, in which every node is assigned its alternative (colour). Throughout the paper we assume a binary set of colours $C=\{b,r\}$ (\emph{blue} and \emph{red}). 
	
	\begin{definition}[Labelled Social Network]
		A \emph{labelled social network} is a tuple $(N,E,f)$, where $(N,E)$ is a social network and $f: N \rightarrow C$ is a \emph{labelling} which assigns an alternative to each agent.
	\end{definition}
	
	Further, given a labelling $f$ of a social network $(N,E)$, we denote the set of red nodes $\{i \in N :  f(i)=r \} $ as $R_f$ and the set of blue nodes $\{i \in N :  f(i)=b \}$ as $B_f$. Moreover, for a set $S \subseteq N$,  $R^S_f$ is the set of red nodes in $S$, while $B^S_f$ is the set of blue nodes in $S$. We omit $f$ if clear from the context.

	We will further distinguish between the majority option in the entire social network and the majority option from an agent's perspective, while only considering \emph{strict} majority. It is worth noting that under such a definition, a majority winner does not exist if the number of nodes labelled blue is the same as of those labelled red. So, given a labelled social network \textit{SN}=$(N,E,f)$, we denote the colour adopted by the strict majority in  \textit{SN} as the \emph{majority winner} ($W_{\textit{SN}}$). Formally, a colour $c$ is a majority winner in \textit{SN} if and only if $| \{n \in N : f(n)=c   \} | > | \{ n' \in N : f(n') \neq c  \}   |$.
	Similarly, for an agent $i$, $W_{\textit{SN}}^i$ is the majority option in $i$'s (open) neighbourhood. Formally, a colour $c$ is a majority winner in $i$'s neighbourhood if and only if $| \{n \in E(i) : f(n)=c   \} | > | \{ n' \in E(i) : f(n') \neq c  \}   |$. Henceforth, where relevant, we will assume without loss of generality that blue is the majority winner in a network.
	
	We are now ready to define the concept of \emph{majority illusion}, that occurs when a certain number of agents has a wrong perception of which colour is the majority winner in the network.
	We say that an agent $i \in N$ is \emph{under illusion} if $W_{\textit{SN}}$ and $W^i_{\textit{SN}}$ exist, while $W_{\textit{SN}}^i \neq W_{\textit{SN}}$. %Then, we say that \emph{$q$-majority illusion} occurs in \textit{SN} for a fraction $q$, if at least $q \cdot |N|$ agents are under illusion.

	\begin{definition}[$q$-majority illusion]
		Let $q \in \mathbb{Q} \cap [0,1]$. Then, a \emph{$q$-majority illusion} is a labelled social network $\textit{SN}=(N,E,f)$ such that at least $q \cdot |N| $ agents are under illusion. 
	\end{definition}
	
	For a given social network $(N,E)$, fraction $q$ and a function $f: N \rightarrow C$, we say that $f$ \emph{induces} a $q$-majority illusion, if $(N,E,f)$ is a $q$-majority illusion. 
	When not confusing, we will sometimes only say that $f$ induces illusion. If there is a labelling of a network \textit{SN} which induces $q$-majority illusion, then we say that \textit{SN} \emph{admits} a $q$-majority illusion. Also, for a network $(N,E)$ and $n,n' \in N$ such that $E(n)={n'}$, we say that $n$ is a \emph{dependant} of $n'$. Let us further observe, that if a labelling $f$ induces 1-majority illusion for a network $(N,E)$ and $n$ is a dependant of $n'$, then $f(n')=r$. Finally, for a labelled network $(N,E,f)$ and $i\in N$ we define the \emph{margin of victory} for $i$ as $|B^{E(i)}_f| - |R^{E(i)}_f|$.
	
	%%%%%%%%%%%%%%%%%%%%%%%%%%%%%%%%%%%%%%%%%%

	\section{Verifying Illusion}\label{sec:verifying}

	We are interested in finding the complexity of checking, for a specific $q$, if a given network admits a $q$-majority illusion. 
	
	\begin{quote}
		\noindent \textsc{$q$-majority illusion}:\\
		\hspace*{-1em} \indent\textit{Input:} Social network $\textit{SN}=( N, E)$.\\
		\hspace*{-1em}\textit{Question:} Is there a colouring $f: N \rightarrow C$ such $f$ induces a $q$-majority illusion?
	\end{quote}

	We now prove that \textsc{$q$-majority illusion} is NP-complete for every rational  $q \in(\frac{1}{2},1]$, by providing a reduction from the NP-complete problem \textsc{3-SAT} for every such $q$. In \textsc{3-SAT} we check the satisfiability of a CNF formula in which all clauses have exactly three literals (see, e.g. \cite{papadimitriou2003computational}). We say that such a formula is in 3-CNF.
	We describe the constructions and sketch the main lines of the proof, which can be found in complete form in the appendix.
	%	{This is the problem of checking if a propositional formula in 3-CNF is satisfiable} (see, e.g. \cite{papadimitriou2003computational}). That is, given a propositional formula $\varphi$ in 3-CNF, we construct a poly-size directed graph $(N,E)_q$ for every  $q \in (0,1]$, such that $\varphi$ is satisfiable if and only if $(N,E)_q$ admits $q$-Majority Illusion.
	%	Without loss of generality we will assume that all of the clauses in the 3-CNF formula $\varphi$ have exactly 3 literals.
	
	Let $\varphi$ be a formula in 3-CNF. We commence with constructing a social network which we call the \emph{encoding} of $\varphi$, or $E_{\varphi}=(N,E,f)$. We will further show that it admits 1-majority illusion if and only if $\varphi$ is satisfiable, entailing the NP-hardness of \textsc{1-majority illusion}. Finally, for each  $q \in(\frac{1}{2},1]$ we construct $E^q_{\varphi}$ appending a non-trivial network construction to $E_{\varphi}$. We then conclude the proof showing that $E^q_{\varphi}$ admits a $q$-majority illusion iff $\varphi$ is satisfiable. 
	
	%%%%%%%
	
	\paragraph{Variable, clause, and balance gadgets.} 
	{For a formula $\varphi$ in 3-CNF, we denote the set of variables in $\varphi$ as $P_{\varphi}=\{p_1, \dots, p_m\}$, and the set of clauses in $\varphi$ 
		as $C_{\varphi}=\{C_1, \dots, C_n\}$.}
	The first step is to encode propositional variables. 
	For a variable $p_i$, we define a subnetwork called \emph{variable gadget} as depicted in Figure \ref{VariableGadget1}. {We refer to the nodes in the bottom pair of the gadget as \emph{literal nodes}. Also, we call the left literal node $p_i$, and the right $\neg p_i$.} 
	
	\begin{lemma}
		A labelling of a variable gadget (considered as a separate network) induces a 1-majority illusion only if exactly one of the nodes in the bottom pair is labelled $r$. 
	\end{lemma}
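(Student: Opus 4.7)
The plan is to argue by exhaustive case analysis on the colours assigned to the two literal nodes at the bottom of the variable gadget. There are four cases — both red, both blue, or the two asymmetric cases — and the lemma asks me to rule out the two ``matched'' cases. The asymmetric cases require no argument since the statement is only a necessary condition.

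Before starting the case analysis, I would read off the adjacency structure of the gadget from Figure~\ref{VariableGadget1}, listing the open neighbourhood of each node and identifying which nodes are adjacent to the literal pair. The working definition to keep in mind is that a 1-majority illusion on the gadget (viewed as a standalone network) requires two simultaneous constraints: blue must be the global majority winner of the gadget, and every vertex $v$ must satisfy $|R^{E(v)}_f| > |B^{E(v)}_f|$ strictly.

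For the case where both literal nodes are blue, I expect the gadget to contain an internal node whose neighbourhood is weighted toward the two literal nodes, so that once these are both blue its red-margin $|R^{E(v)}_f| - |B^{E(v)}_f|$ becomes non-positive, violating the illusion condition at $v$. The case where both literals are red is subtler, because red already dominates at the bottom and the global majority requirement then forces most internal nodes to be blue, but those same internal nodes appear as neighbours of the literals, so the literal nodes themselves should fail to see a red majority. A useful sanity check I would run globally is the double-counting identity
\[
\sum_{v \in N}\bigl(|R^{E(v)}_f| - |B^{E(v)}_f|\bigr) \;=\; 2|E_{rr}| - 2|E_{bb}|,
\]
where $E_{rr}$ and $E_{bb}$ are the monochromatic edges; combined with the pointwise bound that each summand is at least $1$ and with the global constraint $|R_f| < |B_f|$, this already rules out colourings that are too heavy on either side.

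The main obstacle I anticipate is the both-red case, where global and local constraints pull in opposite directions and the argument has to track the specific degrees and adjacencies of the gadget rather than rely on aggregate counting alone. I would not expect to need any heavy machinery here: once the gadget is spelled out explicitly, each case should reduce to checking a small handful of inequalities on the neighbourhoods of a couple of distinguished internal vertices.
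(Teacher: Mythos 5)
Your broad plan — case analysis on the colours of the two literal nodes, checking both the global-majority constraint and the local illusion constraint — is the same strategy the paper uses, but your anticipated mechanisms are inverted, and you miss the observation that actually drives the whole argument. The linchpin is that the four internal clique nodes each have a pendant dependant whose only neighbour is that clique node, so for the dependant to be under illusion that clique node must be red. Hence the 4-clique is \emph{forced} to be entirely red in any illusion-inducing labelling, leaving a budget of at most one further red node (since blue must strictly dominate $11$ nodes, so $|R_f| \le 5$). Once this is in hand, the both-red case is the \emph{easy} one, not the subtle one as you predict: four forced reds plus two red literals already give $|R_f| \ge 6$, so blue cannot be the global majority winner and the case dies on the global count alone, with no need for the local argument you sketch about the literals' own neighbourhoods (indeed the literals see only the two red clique nodes and would happily be under illusion).

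The both-blue case is the one that needs the local argument, and here your instinct is roughly right but underspecified. With both literals blue and the $4$-clique red, the only remaining flexibility is one further red among the five pendant nodes, but the two clique vertices adjacent to the literal pair each have neighbourhood of size six, split as three forced reds plus the two blue literals plus one pendant; at most one of those two pendants can be red under the budget, so at least one of these two clique vertices ends up with a $3$–$3$ tie and fails the strict-majority condition. You would reach this once you ``read off the adjacency structure,'' but as written your plan does not flag the forced colouring of the clique, which is exactly the fact that makes the case analysis collapse. Finally, the double-counting identity you propose as a sanity check is correct but too coarse to close either case here (for instance, both literals blue with four clique reds and one red pendant gives $2|E_{rr}|-2|E_{bb}|=14\ge 11$, so it does not detect the failure); the argument genuinely needs the pointwise tie at a specific clique vertex.
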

	\begin{comment}
		Observe that a labelling of this gadget (considered as a separate network) induces a 1-Majority Illusion only if exactly one of the nodes in the bottom pair is labelled $r$. 
		To see that this observation holds, notice that all nodes in the 4-clique need to be labelled $r$ for the 1-Majority Illusion to hold, as all of them have at least one dependant.%, i.e., {a node only linked to them}. 
		Further, as there are eleven nodes in the gadget, only five of them can be labelled $r$ for a labelling to induce the 1-Majority Illusion in such a structure (considered as a separate network). But 1-Majority Illusion will not be induced for this structure if nodes $p_i$ and $\neg p_i$ situated at the bottom of the variable gadget  are both labelled $b$. {Indeed, if, in some labelling of the gadget which induces 1-Majority Illusion, $p_i$ and $\neg p_i$ were labelled $b$, then at least one of the nodes in the 4-clique would not be under illusion, which is not possible. This holds as, by the definition of 1-Majority Illusion, all of the nodes in the clique are labelled with $r$ and at least six nodes in the gadget are labelled $r$, in a labelling of a variable gadget which induces illusion. It implies that one of the nodes in the clique would be linked to three red nodes and three blue nodes.} %
	\end{comment}
	
	We say that a labelling of a variable gadget is of type A if it induces a 1-majority illusion and $f(p_i)=r$. Symmetrically, we say that a labelling is of type B if it induces illusion and $f(\neg p_i)=r$. It is worth to observe that labellings of type A and of type B are unique. 
	
	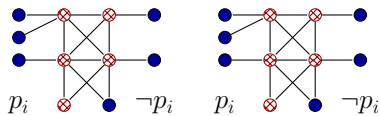
\begin{figure}[H]
		\centering
		
		\scalebox{0.5}{
			
			\begin{tikzpicture}
				[->,shorten >=1pt,auto,node distance=1.2cm,
				semithick]
				\node[shape=circle,draw=myred, pattern=crosshatch, pattern color=myred] (A) {};
				\node[shape=circle,draw=myred, pattern=crosshatch, pattern color=myred] (B) [ right of= A] {};
				\node[shape=circle,draw=myred, pattern=crosshatch, pattern color=myred] (C) [below of = A] { };
				\node[shape=circle,draw=myred, pattern=crosshatch, pattern color=myred] (D) [below of = B] { };
				
				\node[shape=circle,draw=myred, pattern=crosshatch, pattern color=myred] (E) [below of = C] { };
				\node (E') [left of = E] [fontscale=4] { $\displaystyle p_i$ };
				\node[shape=circle,draw=black, fill=myblue] (F) [below of = D] { };
				\node[shape=circle,draw=white] (F') [right of = F] [fontscale=4] {$\neg p_i$ };
				
				\node[shape=circle,draw=black, fill=myblue] (G) [left of = A]{};
				\node[shape=circle,draw=black, fill=myblue] (H) [right of = B]{};
				\node[shape=circle,draw=black, fill=myblue] (I) [ below of=H]{};
				%	\node[shape=circle,draw=black, fill=myblue] (J) [ below of=I]{};
				\node[shape=circle,draw=black, fill=myblue] (K) [ below of=G]{};
				\node[shape=circle,draw=black, fill=myblue] (TEST) at ($(G)!0.5!(K)$) {};
				
				\draw [thick,-]  (TEST) to (A) ;
				
				\draw [thick,-]  (A) to(G) ;
				\draw [thick,-]  (K) to(C) ;
				\draw [thick,-]  (B) to(H) ;
				\draw [thick,-]  (D) to(I) ;
				%	\draw [thick,-]  (D) to(J) ;
				
				\draw [thick,-]  (A) to(B) ;
				%	\draw [thick,->] (B) -- (D) ;
				%	\draw [thick,->] (D) -- (C) ;
				\draw [thick,-] (B) to  (C) ;
				\draw [thick,-] (A) to  (D) ;
				\draw [thick,-] (A) to  (C) ;
				\draw [thick,-] (C) to  (D) ;
				\draw [thick,-] (B) to  (D) ;

				\draw [thick,-] (C) to  (E) ;
				\draw [thick,-] (D) to  (F) ;
				\draw [thick,-] (C) to  (F) ;
				\draw [thick,-] (D) to  (E) ;
				%\draw [thick,->] (C) -- (A) ;
				%\node[draw=black, fit=(C) (B) (A) ](FIt1) {};
		\end{tikzpicture}}
		~
		\scalebox{0.5}{
			
			\begin{tikzpicture}
				[->,shorten >=1pt,auto,node distance=1.2cm,
				semithick]
				\node[shape=circle,draw=myred, pattern=crosshatch, pattern color=myred] (A) {};
				%	\node [shape=circle](G) [above of=A, yshift=-1cm] {\textcolor{white}{a}};
				\node[shape=circle,draw=myred, pattern=crosshatch, pattern color=myred] (B) [ right of= A] {};
				%		\node [shape=circle](G1) [above of=B, yshift=-1cm] {\textcolor{white}{a}};
				\node[shape=circle,draw=myred, pattern=crosshatch, pattern color=myred] (C) [below of = A] { };
				%			\node [shape=circle](G2) [below of=C, yshift=1cm] {\textcolor{white}{a}};
				%	\node[shape=circle,draw=black] (D) [below of = B] {d};
				\node[shape=circle,draw=myred, pattern=crosshatch, pattern color=myred] (D) [below of = B] { };
				
				\node[shape=circle,draw=myred, pattern=crosshatch, pattern color=myred] (E) [below of = C] { };
				\node[shape=circle,draw=white] (E') [left of = E] [fontscale=4] {$p_i$ };
				\node[shape=circle,draw=black, fill=myblue] (F) [below of = D] { };
				\node[shape=circle,draw=white] (F') [right of = F] [fontscale=4] {$\neg p_i$ };
				
				\node[shape=circle,draw=black, fill=myblue] (G) [left of = A]{};
				\node[shape=circle,draw=black, fill=myblue] (H) [right of = B]{};
				\node[shape=circle,draw=black, fill=myblue] (I) [ below of=H]{};
				%\node[shape=circle,draw=black, fill=myblue] (J) [ below of=I]{};
				\node[shape=circle,draw=black, fill=myblue] (K) [ below of=G]{};
				\node[shape=circle,draw=black, fill=myblue] (TEST) at ($(G)!0.5!(K)$) {};

				\draw [thick,-]  (A) to(G) ;
				\draw [thick,-]  (K) to(C) ;
				\draw [thick,-]  (B) to(H) ;
				\draw [thick,-]  (D) to(I) ;
				%\draw [thick,-]  (D) to(J) ;
				
				\draw [thick,-]  (TEST) to(A) ;
				
				\draw [thick,-]  (A) to(B) ;
				%	\draw [thick,->] (B) -- (D) ;
				%	\draw [thick,->] (D) -- (C) ;
				\draw [thick,-] (B) to  (C) ;
				\draw [thick,-] (A) to  (D) ;
				\draw [thick,-] (A) to  (C) ;
				\draw [thick,-] (C) to  (D) ;
				\draw [thick,-] (B) to  (D) ;

				\draw [thick,-] (C) to  (E) ;
				\draw [thick,-] (D) to  (F) ;
				\draw [thick,-] (C) to  (F) ;
				\draw [thick,-] (D) to  (E) ;
				%\draw [thick,->] (C) -- (A) ;
				%\node[draw=black, fit=(C) (B) (A) ](FIt1) {};
			\end{tikzpicture}
		}

		\caption{Variable gadget of type A in the left network, and of type B in the right network. The gadgets above correspond to $p_i$, and we refer to the left node in the bottom pair as $p_i$, and to the right as $\neg p_i$.}\label{VariableGadget1}
	\end{figure}
	%\vspace{-5em}
	
	As a second step, we define \emph{clause gadgets}, associated to each clause  $C_j \in C_{\varphi}$, as depicted in Figure~\ref{ClaueGadget1}. {The top three nodes outside of the dashed rectangle are literal nodes and do not belong to the gadget.}
	Then, a clause gadget consists of sixteen nodes 
	, including a 5-clique. %In Figure~\ref{ClaueGadget1}, the clause gadget is shown in the dashed square, while the remaining nodes are literal nodes linked to some members of the gadget.
	In this structure, three members of the clique are adjacent to two dependants each and to  
	one additional node, which we call a \emph{co-dependant}. 
	The three co-dependants form a clique in this gadget. 
	{Members of the five-clique which are adjacent to a co-dependant, are also adjacent to particular literals nodes. For every literal $L$ in the $C_j$, $L$ is adjacent to exactly one of the mentioned members of the clique and at most one literal node is adjacent to each member of a clause gadget.} Connections between literal nodes and a clause gadget are shown in the Figure {\ref{ClaueGadget1}}.
	The remaining two nodes in the 5-clique have one dependant each.

	\begin{figure}[H]
		\centering
		
		\scalebox{0.4}{
			\begin{tikzpicture}
				[->,shorten >=1pt,auto,node distance=1.5cm,
				semithick]
				
				\node[shape=circle,draw=myred, pattern=crosshatch, pattern color=myred] (S)  { };
				\node[shape=circle,draw=myred, pattern=crosshatch, pattern color=myred] (A)[below of=S, left of=S] {};
				%	\node [shape=circle](G) [above of=A, yshift=-1cm] {\textcolor{white}{a}};
				\node[shape=circle,draw=myred, pattern=crosshatch, pattern color=myred] (B) [below of=S, right of= S] {};
				%		\node [shape=circle](G1) [above of=B, yshift=-1cm] {\textcolor{white}{a}};
				\node[shape=circle,draw=myred, pattern=crosshatch, pattern color=myred] (C) [below of = A] { };
				%			\node [shape=circle](G2) [below of=C, yshift=1cm] {\textcolor{white}{a}};
				%	\node[shape=circle,draw=black] (D) [below of = B] {d};
				\node[shape=circle,draw=myred, pattern=crosshatch, pattern color=myred] (D) [below of = B] { };
				
				\node[shape=circle,draw=black, fill=myblue] (L2) [above of=S] { };

				\node[shape=circle,draw=black, fill=myblue] (E) [below of = C] { };
				%	\node[shape=circle,draw=white] (E') [below of = E] {$p$ };
				\node[shape=circle,draw=black, fill=myblue] (F) [below of = D] { };
				%	\node[shape=circle,draw=white] (F') [below of = F] {$\neg p$ };
				
				\node[shape=circle,draw=black, fill=myblue] (G) [left of = A]{};
				\node[shape=circle,draw=black, fill=myblue] (G1) [above of = G]{};
				\node[shape=circle,draw=black, fill=myblue] (H) [right of = B]{};
				\node[shape=circle,draw=black, fill=myblue] (H1) [above of = H]{};
				
				\node[shape=circle,draw=myred, pattern=crosshatch, pattern color=myred] (L1) [above of=G1, right of=G1] { };
				\node[] (L1') [left of = L1] [fontscale=4] {$L_1$ };
				\node[shape=circle,draw=black, fill=myblue] (L3) [above of=H1, left of=H1] { };
				\node[shape=circle,draw=white] (L3') at ($(L2)!0.5!(L3)$) [fontscale=4] {$L_3$ };
				\node[shape=circle,draw=white] (L2')at ($(L2)!0.5!(L1)$)  [fontscale=4] {$L_2$ };
				\draw [thick,-]  (L1) to(A) ;
				\draw [thick,-]  (L2) to(S) ;
				\draw [thick,-]  (L3) to(B) ;
				
				\node[shape=circle,draw=black, fill=myblue] (I) [ below of=H]{};
				\node[shape=circle,draw=myred, pattern=crosshatch, pattern color=myred] (J) [ below of=I]{};
				\node[shape=circle,draw=black, fill=myblue] (K) [ below of=G]{};
				\node[shape=circle,draw=black, fill=myblue] (N) [below of = K]{};
				\node[shape=circle,draw=myred, pattern=crosshatch, pattern color=myred] (M) [right of = E]{};
				\draw [thick,-, bend right]  (M) to (J) ;
				\draw [thick,-, bend left]  (M) to (N) ;
				\draw [thick,-, bend left]  (J) to (N) ;

				\draw [thick,-]  (A) to(G) ;
				\draw [thick,-]  (A) to(N) ;
				\draw [thick,-]  (K) to(A) ;
				\draw [thick,-]  (B) to(H) ;
				\draw [thick,-]  (B) to(I) ;
				\draw [thick,-]  (S) to(M) ;
				\draw [thick,-]  (G1) to(S) ;
				\draw [thick,-]  (H1) to(S) ;
				
				\draw [thick,-]  (A) to(B) ;
				\draw [thick,-] (B) to  (C) ;
				\draw [thick,-] (A) to  (D) ;
				\draw [thick,-] (A) to  (C) ;
				\draw [thick,-] (C) to  (D) ;
				\draw [thick,-] (B) to  (D) ;
				\draw [thick,-] (A) to  (S) ;
				\draw [thick,-] (B) to  (S) ;
				\draw [thick,-] (C) to  (S) ;
				\draw [thick,-] (S) to  (D) ;
				\draw [thick,-] (J) to  (B) ;

				\draw [thick,-] (C) to  (E) ;
				\draw [thick,-] (D) to  (F) ;
				%\draw [thick,-] (C) to  (F) ;
				%\draw [thick,-] (D) to  (E) ;
				%\draw [thick,->] (C) -- (A) ;
				%\node[draw=black, fit=(C) (B) (A) ](FIt1) {};

				\node[draw,  dashed, fit=(G1) (H1) (J) ](FIt1) {};
		\end{tikzpicture}}
		\caption{Clause gadget, corresponding to a clause $(L_1, L_2, L_3)$,  enclosed in the dashed rectangle. The top three nodes are literals}\label{ClaueGadget1}
	\end{figure}
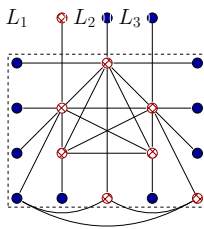

	Observe that in any labelling of this gadget inducing 1-majority illusion all members of the 5-clique need to be labelled $r$, as each of them has a dependant.
	Further, for a labelling of this gadget in which blue is the majority winner to induce 1-majority illusion only two nodes outside of the clique can be labelled $r$. Otherwise, at least eight nodes in the gadget would be labelled $r$ and thus $b$ would not be the unique majority winner. Also, note that at least two co-dependants need to be labelled red in order for all of three of them to be under illusion. So, in a labelling of this gadget which induces 1-majority illusion, exactly 7 nodes are labelled red.
	
	\begin{lemma}
		There exists a labelling of a clause gadget (not as a separate network) which induces 1-majority illusion with blue being a majority winner in this structure if and only if at least one node is adjacent to a literal node labelled $r$. 
	\end{lemma}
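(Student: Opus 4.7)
My plan is to exploit the structural constraints already established in the paragraph preceding the lemma. In any labelling of the clause gadget that induces 1-majority illusion with blue as the intra-gadget majority, all five clique members must be red (each has a dependant), exactly seven gadget nodes must be red (leaving a 9-vs-7 blue majority), and at least two of the three co-dependants must be red. Together these force exactly two co-dependants to be red, one to be blue, and all eight dependants to be blue. Writing $A, B, S$ for the three clique members attached to literals $L_1, L_3, L_2$ and to co-dependants $N, J, M$ respectively, the whole question then reduces to the behaviour of the single clique member whose co-dependant is blue.

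For the ``only if'' direction, let $X \in \{A, B, S\}$ be the clique member whose co-dependant is blue. I would enumerate $X$'s eight neighbours: four red clique members, two blue dependants, one blue co-dependant, and its literal $L$. Without $L$ this is a 4-4 tie, so obtaining the strict red majority needed for $X$ to be under illusion forces $L$ to be labelled red, delivering the required red literal adjacent to the gadget. For the ``if'' direction, assuming without loss of generality that $L_2$ is red, I would propose the labelling: all five clique members red; $N$ and $J$ red; $M$ blue; all eight dependants blue. Verification then splits into routine cases: $A$ and $B$ have their red co-dependants breaking the 4-4 tie regardless of $L_1, L_3$; $S$'s neighbourhood becomes 5R-3B because $L_2$ is red; $C$ and $D$ see four red clique neighbours against one blue dependant; each dependant's only neighbour in the gadget is its red clique anchor; the two red co-dependants see 2R-1B in the triangle-plus-anchor neighbourhood, while the blue co-dependant sees 3R-0B.

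The main subtlety I expect to address carefully is justifying the ``at least two co-dependants red'' step inherited from the preamble: with only one red co-dependant, the two blue co-dependants are still under illusion (seeing 2R-1B), but the lone red co-dependant sees 1R-2B and is therefore \emph{not} under illusion, since it perceives the global majority colour correctly. Once this short case analysis is made precise, the rest is a direct counting check, and the embedding of the gadget in a larger network causes no issues because the only edges leaving the gadget are those from $A$, $B$, $S$ to their respective literal nodes.
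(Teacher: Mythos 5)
Your proof is correct and follows essentially the same strategy as the paper: it leans on the forced structure established in the preamble (all five clique members red, exactly seven reds, at least two red co-dependants) and deduces from the single clique member with a blue co-dependant that its literal must be red. The paper's own ``only if'' argument is phrased slightly differently---it observes that if all three literals were blue, each of the three literal-adjacent clique members would need a red dependant or co-dependant, and since these supporters are pairwise distinct this forces at least eight reds, contradicting the blue strict majority---but this is the same counting argument from a different angle, and your version is if anything a cleaner completion given what the preamble already establishes. Your care in justifying the ``at least two red co-dependants'' step (noting that a lone red co-dependant sees $1R$--$2B$ and hence is not under illusion) fills a point the paper states without proof.
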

	
	%	Moreover, let us show that there exists a labelling of this gadget (not as a separate network) which induces 1-Majority Illusion with blue being a majority winner in this structure if and only if at least one node is linked to a literal node labelled $r$. 
	%	To see that, suppose that there is such a node in the clause gadget, call it $L$. {Then a labelling inducing 1-Majority Illusion in the gadget can be constructed as in Figure \ref{ClaueGadget1}, where $L$ is the left-top member of the clique.}
	%	On the contrary, if all literal nodes linked to members of a clause gadget were labelled $b$, the illusion could not hold, as then for all members of the gadget linked to literal nodes we would need to have at least one dependant or co-dependant labelled $r$. Otherwise, we would have at least as many red as blue members of these nodes' respective neighbourhoods. But then, as in a labelling of the gadget inducing 1-Majority Illusion we would also need the 5-clique to be labelled $r$, in such a labelling there would be at least as many red as blue nodes in the gadget. 
	
	The final component of the encoding of $\varphi$ is the \emph{balance gadget}. Given a natural number $k \geq 2$, if $k$ is even, it consists of  $\frac{k}{2}$ pairs of nodes. Otherwise, it consists of $\frac{k-1}{2}$ pairs of nodes, and 1 triple of nodes. %Observe that in all labellings of the encoding of $\varphi$ that induce a 1-majority illusion, all members of the balance gadget are labelled red. 
	
	\begin{comment}
		\begin{figure}[H]
			\centering
			\scalebox{0.6}{
				\begin{tikzpicture}
					[->,shorten >=1pt,auto,node distance=1cm,
					semithick]
					\node[shape=circle,draw=myred, dashed] (A)  {};
					\node[shape=circle, below of = A, draw=myred, fill=myred] (B)  {};
					\node[shape=circle, right of =B, draw=myred, fill=myred] (C)  {};

					\draw [thick, dashed,-] (A) to  (B) ;
					\draw [thick, dashed,-] (C) to  (A) ;
					\draw [thick,-] (C) to  (B) ;
					
					\node[shape=circle, below of=B, draw=myred, fill=myred] (D)  {};
					\node[shape=circle, below of =C, draw=myred, fill=myred] (E)  {};
					
					\draw [thick,-] (D) to  (E) ;
					
					\node[shape=circle, below of=D] (D)  {$\dots$};

					\node[shape=circle, below of=D, draw=myred, fill=myred] (F)  {};
					\node[shape=circle, right of =F, draw=myred, fill=myred] (G)  {};
					
					\draw [thick,-] (F) to  (G) ;
			\end{tikzpicture}}
			\caption{Balance gadget with a unique labelling such that all members are under illusion.}\label{balance1}
		\end{figure}
	\end{comment}

	\paragraph{Encoding of a 3-CNF formula.}
	We are now ready to construct a social network starting from a 3-CNF formula $\varphi$. 
	Firstly, for every $p \in P_{\varphi}$ create a variable gadget as in Figure \ref{VariableGadget1}. Further, for every clause $C_i=\{L_i^1, L_i^2, L_i^3\}$ in $C_{\varphi}$ create a clause gadget as in Figure \ref{ClaueGadget1}, with the literal nodes 
	corresponding to $L_i^1$ adjacent to the top left member of the 5-clique, corresponding to $L_i^2$ to the central top member , and corresponding to $L_i^3$ to the top right member.  As a final step, create a balance gadget with $k= m+2n-1$.
	%, linking the top node in the balance gadget to the top left clique member in {some} variable gadget, to ensure that the encoding of $\varphi$ is a connected network.
	Observe that as there are $m +2n-1$ nodes in the balance gadget, the total number of nodes in the encoding of $\varphi$ is $12m + 18n -1 $.

	Let us first observe a few facts regarding any labellings of $E_{\varphi}$ for a formula $\varphi$ in 3-CNF which induces a 1-majority illusion. 
	First note that $E_{\varphi}$ contains $m$ variable gadgets, with 11 nodes each. As observed earlier, in every labelling of $E_{\varphi}$ which induces 1-majority illusion, at least 5 nodes have to be labelled $r$ in every variable gadget. Furthermore, $E_{\varphi}$ contains $n$ clause gadgets, with 16 nodes each. Note that in a labelling of $E_{\varphi}$ which induces 1-majority illusion at least 7 nodes need to be labelled $r$ in every clause gadget, as the 5 clique has to be labelled all $r$ due to the presence of dependants, and at least 2  co-dependants need to be labelled red, as otherwise some of the nodes in the bottom 3-clique would not be under illusion. Observe further that in all labellings of the encoding of $\varphi$ that induce a 1-majority illusion, all $m + 2n -1$ members of the balance gadget are labelled red. 
	%Finally, note that in a balance gadget of $E_{\varphi}$ at least $m + 2n -1$ nodes need to be labelled $r$ (out of $m + 2n -1$) for the 1-majority illusion to hold. 
	Hence, due to the presence of the balance gadget, any labelling of $E_{\varphi}$ which induces a 1-majority illusion contains at least $6m+9n-1$ red nodes and at most $6m+9n$ blue nodes, while 
	% So, in each labelling of $E_{\varphi}$ inducing 1-Majority Illusion, 
	blue has at most a margin of victory of 1. %\umbe{at most or exactly?} ->  Exactly is a consequence
	
	\begin{lemma}
		In a labelling of $E_{\varphi}$ which induces a 1-majority illusion every variable gadget is  of type A or type B.
	\end{lemma}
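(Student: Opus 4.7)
The plan is to combine the global counting bound established just before the statement with a local illusion analysis inside a single variable gadget. Since blue must be the strict majority winner of the $(12m+18n-1)$-node network $E_\varphi$, there can be at most $6m+9n-1$ red nodes; combined with the lower bound of $6m+9n-1$ already derived, every per-gadget lower bound on red counts is attained with equality, so in particular the variable gadget in question contains exactly five red nodes.

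The next step is to argue that all four members of the central 4-clique are red. Each of them has at least one peripheral neighbour in the variable gadget whose only edge in $E_\varphi$ goes back to that clique member; these peripheral nodes have no edges to any clause gadget, since by construction only the literal nodes are wired outside the variable gadget. Because blue is globally majoritarian, a degree-one node is under illusion only when its unique neighbour is red, so every member of the 4-clique must be red. This accounts for four of the five available reds.

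Finally, I would locate the fifth red using the illusion requirement at the two 4-clique members adjacent to the literal nodes. Let these two members have non-clique neighbourhoods $\{u_1, p_i, \neg p_i\}$ and $\{u_2, p_i, \neg p_i\}$, where $u_1 \neq u_2$ are their two distinct peripheral dependants. Each of these clique members already sees three red neighbours from the other clique members, so a strict red majority at each of them demands at least one further red neighbour. Since only one extra red is available in the whole gadget, it must lie in the intersection of the two triples, namely $\{p_i, \neg p_i\}$. If the fifth red is $p_i$, the resulting restriction of the labelling to the variable gadget matches the (unique) type A labelling, and if it is $\neg p_i$ it matches type B, completing the argument.

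The only step that requires care is verifying that the 4-clique members and their peripheral dependants genuinely have no neighbours outside the variable gadget, so that their $E_\varphi$-neighbourhoods coincide with those used in the standalone analysis; this is immediate from the construction.
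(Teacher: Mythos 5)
Your proof is correct and uses essentially the same ingredients as the paper's: the exact count of $6m+9n-1$ red nodes forces five reds per variable gadget, the degree-one dependants force the 4-clique to be all red, and the illusion constraint on the two bottom clique members pins the fifth red to $\{p_i,\neg p_i\}$. The paper's own proof in the appendix is considerably terser (two sentences), so your version is a useful unpacking rather than a different route — in particular, you make explicit why the fifth red cannot sit on the unshared peripheral dependants $u_1$ or $u_2$, a step the paper glosses over with the phrase ``at least one literal node needs to be labelled red.''
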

	
	%	Note further that in a labelling of $E_{\varphi}$ which induces 1-Majority Illusion every variable gadget is  of type A or type B. Indeed, in such a gadget at least one literal node needs to be labelled red for the bottom nodes in the 4-clique to be under illusion. On the other hand, one of them needs to be blue, as otherwise blue would not be the majority colour in $E_{\varphi}$. 
	So, every labelling of $E_{\varphi}$ which induces 1-majority illusion corresponds to a unique valuation over $P_{\varphi}$, where a variable $p_i$ is said to be true if the labelling of the variable gadget corresponding to $p_i$ is of type A, and false if it is of type B. Note also that, as we argued before, a labelling of $E_{\varphi}$ can only induce 1-majority illusion if at least one node in every clause gadget is adjacent to a literal node labelled $r$. 
	Finally, observe that if every variable gadget is of type A or type B, and least one node in every clause gadget is adjacent to a literal node labelled $r$, we can find a labelling of $E_{\varphi}$ which induces illusion, as depicted in Figures \ref{VariableGadget1} and \ref{ClaueGadget1}, with all nodes in the balance gadget labelled red. 
	We are now ready to show that for every formula $\varphi$ in 3-CNF, $E_{\varphi}$ admits 1-majority illusion if and only if $\varphi$ is satisfiable.

	\begin{lemma}\label{lemma:1-hard}
		Let $\varphi$ be a formula in 3-CNF. Then, $\varphi$ is satisfiable if and only if $E_{\varphi}$ admits 1-majority illusion.
	\end{lemma}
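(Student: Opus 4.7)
The plan is to prove both directions of the biconditional by leveraging the three preceding lemmas (on variable gadgets, clause gadgets, and labellings of $E_\varphi$) and then verifying a global counting argument to confirm blue remains the strict majority winner in the whole network.

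For the forward direction, I would assume a satisfying assignment $\nu: P_\varphi \to \{\text{true}, \text{false}\}$ and construct a labelling $f$ as follows: label each variable gadget for $p_i$ with a type A labelling if $\nu(p_i) = \text{true}$ and with a type B labelling otherwise; label every node of the balance gadget red; and label each clause gadget via the canonical labelling guaranteed by the clause gadget lemma. To invoke that lemma I need at least one node of each clause gadget to be adjacent to a red literal node, which follows from $\nu$ satisfying the clause: some literal $L_i^j$ is true, so the corresponding literal node sits in a type A/type B gadget in the configuration where that node is red. Next I would do the counting: red nodes contributed are $5m$ (variable gadgets) $+\, 7n$ (clause gadgets) $+\, (m+2n-1)$ (balance gadget) $= 6m+9n-1$, versus $6m+9n$ blue, out of $12m+18n-1$ total, so blue is the strict majority winner with margin exactly $1$. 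Every node is under illusion by construction of the three gadget labellings.

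For the reverse direction, I would assume $f$ induces a 1-majority illusion on $E_\varphi$. By the variable gadget lemma (applied in-context with the help of the counting bound $6m+9n-1$ red nodes that forces each gadget to use the minimum red budget), every variable gadget is of type A or type B, so I can define a valuation $\nu$ by setting $\nu(p_i) = \text{true}$ iff the $p_i$ gadget is of type A. By the clause gadget lemma, each clause gadget must have at least one of its clique members adjacent to a red literal node. Because in a type A gadget the literal node $p_i$ is red and $\neg p_i$ is blue (and symmetrically for type B), a red literal node adjacent to the clause gadget for $C_i$ means exactly that the corresponding literal $L_i^j$ evaluates to true under $\nu$; hence every clause is satisfied and $\nu\models\varphi$.

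The main obstacle I anticipate is the reverse direction's use of the variable gadget lemma: that lemma is stated for a gadget considered as a separate network, whereas here the gadgets sit inside $E_\varphi$ and may have extra neighbours (in particular, the literal nodes are shared with clause gadgets). I would resolve this by combining the global red-count ceiling $6m+9n-1$ with the lower bounds (at least $5$ red per variable gadget, at least $7$ red per clause gadget, and exactly $m+2n-1$ red in the balance gadget) to conclude that each variable gadget contains \emph{exactly} $5$ red nodes and each clause gadget exactly $7$, leaving no slack; the structural argument from the variable gadget lemma then carries over and forces the configuration to be of type A or type B. This pigeonhole-style tightness argument is the load-bearing step that makes the reduction work.
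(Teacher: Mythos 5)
Your proposal is correct and takes essentially the same route as the paper: both directions are handled by translating between type~A/type~B variable-gadget labellings and truth assignments, using the clause-gadget lemma to certify each clause, and relying on the global red-budget ceiling of $6m+9n-1$ to force each gadget into its unique minimal-red configuration. The paper packages the tightness argument you work through explicitly (that the variable gadget must be type~A or type~B even though it is not a separate network) into its Lemma~3, and phrases the reverse direction contrapositively rather than directly, but these are presentational rather than substantive differences.
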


	\begin{proof} 
		Let us consider a formula $\varphi$ in 3-CNF with the set of variables $P_{\varphi}=\{p_1, \dots, p_m\}$ and the set of clauses $C_{\varphi}=\{C_1, \dots, C_n\}$. Then, let us construct the encoding $E_{\varphi}$ and show that it admits 1-majority illusion if and only if $\varphi$ is satisfiable.
		Suppose that it is. Then, take a model $M$ of $\varphi$ and construct the following labelling of $E_{\varphi}$. Colour variable gadgets so that for a gadget corresponding to $p_i$, it is of type A if if $p_i$ is true in $M$, and of type B otherwise. Note that, as $M$ is a model of $\varphi$,  by construction of  $E_{\varphi}$ at least one node in every clause gadget is adjacent to a literal node labelled $r$. So, there is a labelling of  $E_{\varphi}$ which induces 1-majority illusion. 
		Further, suppose that $\varphi$ is not satisfiable. Then, assume towards contradiction, that there is a labelling $f$ of  $E_{\varphi}$ which induces 1-majority illusion. Observe that as $f$ induces 1-majority illusion, it corresponds to a unique valuation $V$ over $P_{\varphi}$, where a variable $p_i$ is true in $V$ if it's corresponding gadget is labelled in type A, and false if it is labelled in type B. Furthermore, observe that as $\varphi$ is not satisfiable, there exists a clause $C_j \in C_{\varphi}$ such that for every literal $L$ in $C_j$, $L$ is false in $V$. But this entails that all literal nodes adjacent to the clause gadget corresponding to $C_{\varphi}$ are labelled $b$. But then $f$ does not induce 1-majority illusion, which contradicts the assumptions.
	\end{proof}
	
	We now show some further properties of $E_{\varphi}$.
	% which will be crucial to prove the main result of this section. 
	Given a 3-CNF formula $\varphi$, let $I_{\varphi} = 6m +9n -1$, where $m$ is the number of variables and $n$ the number of clauses in $\varphi$. Observe that this is the maximum number of nodes which can be labelled red in $E_{\varphi}$ if blue is the strict majority colour in this network.

	\begin{lemma}\label{lemma:RigidEncoding}
		For every 3-CNF formula $\varphi$, $k \leq I_{\varphi}$ and any labelling $f$ of $E_{\varphi}$ such that $R_f = I_{\varphi} - k$, the number of nodes under illusion in $E_{\varphi}$ under $f$ is at most $| N| - k $.
	\end{lemma}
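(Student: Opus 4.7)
The plan is to establish the bound by a disjoint-gadget potential argument. I would first partition the vertex set of $E_\varphi$ into the $m$ variable gadgets $V_1, \ldots, V_m$, the $n$ clause gadgets $C_1, \ldots, C_n$, and the balance gadget $B$; this is indeed a disjoint partition of $N$ because the three literal nodes drawn next to a clause gadget are explicitly stated to lie outside it, and each literal belongs to its originating variable gadget. For each gadget $G$ I would let $c_G$ denote the canonical red count extracted in the discussion preceding Lemma~\ref{lemma:1-hard}, namely $c_{V_i} = 5$, $c_{C_j} = 7$, and $c_B = m + 2n - 1$, which already sum to $I_\varphi$. The core of the proof will be the per-gadget inequality
\[
|G \cap R_f| + |G \setminus U_f| \;\geq\; c_G,
\]
which, summed over all $m + n + 1$ gadgets, immediately gives $|R_f| + |N \setminus U_f| \geq I_\varphi$, and hence $|N \setminus U_f| \geq I_\varphi - |R_f| = k$, as required.

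To prove this inequality I would use two observations repeatedly. First, any degree-$1$ vertex (a dependant in a variable or clause gadget, or either vertex of a pair in $B$) is under illusion exactly when its unique neighbour is red, so each blue ``anchor'' of such a dependant is a free non-illusion vertex. Second, every non-literal vertex of a variable or clause gadget has its neighbourhood entirely inside its gadget, so its illusion status is determined by the in-gadget labelling alone, and a red external literal can only help a clique member see a red majority. For the balance gadget, a short case check on the pair and triple components shows that each extra blue node produces at least one non-illusion, matching the deficit. For a variable gadget I would split on the number $a$ of reds in the $4$-clique and $b$ of reds in $\{p_i, \neg p_i\}$: the $4 - a$ blue clique members supply at least $4 - a$ non-illusion dependants, and in the boundary case $a = 4$, $b = 0$ the two bottom members of the $4$-clique each see a $3$-to-$3$ tie among their neighbours, providing the missing unit of non-illusion.

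The main obstacle will be the clause gadget, where the case analysis is more intricate and must juggle the $5$-clique, the triangle of co-dependants, and the three external literal nodes simultaneously. I would combine three effects: (i) each blue $5$-clique member loses its one or two dependants to non-illusion; (ii) each of the three outer clique members can be under illusion only if its co-dependant or its external literal is red; (iii) a co-dependant has degree $3$ and therefore requires at least two red neighbours to be under illusion, so any deficit of red co-dependants is immediately matched by non-illusion co-dependants. A careful accounting over all configurations of reds in the $5$-clique, the co-dependants, and the dependants then yields $|C_j \cap R_f| + |C_j \setminus U_f| \geq 7$. The subtle point is to verify this uniformly across the colours of the three external literal nodes; since a red literal can only help, the worst case to check is when all three literals are blue, and this is precisely the case covered by the accounting above.
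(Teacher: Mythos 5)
Your per-gadget decomposition is a genuinely different route from the paper's, which partitions vertices by \emph{role} across all gadgets at once (all $4$-clique members in one class, all $5$-clique members in another, all literal nodes, all co-dependants, all balance nodes, all dependants) and bounds non-illusion vertices role by role with corrective cross-terms. Your localized potential argument is arguably cleaner, but your handling of the external literal nodes is reversed. You claim that ``a red external literal can only help'' and therefore that ``the worst case to check is when all three literals are blue.'' In fact a red external literal is the \emph{adversarial} choice for the inequality $|C_j\cap R_f| + |C_j\setminus U_f| \geq 7$. Each of the three outer $5$-clique members has degree $8$: four clique neighbours, two dependants, one co-dependant, and one external literal. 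When the $5$-clique is fully red, a red external literal gives that vertex five guaranteed red neighbours out of eight, pushing it under illusion for free while contributing nothing to $|C_j\cap R_f|$; a blue literal instead leaves it at a $4$--$4$ tie whenever its dependants and co-dependant are blue, handing you a non-illusion vertex at no cost in reds. So red literals \emph{minimise} your potential, and your accompanying claim that ``every non-literal vertex of a variable or clause gadget has its neighbourhood entirely inside its gadget'' is false precisely for these three vertices. Checking only the all-blue case would therefore not establish the per-gadget bound.

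The good news is that your inequality does hold uniformly, and a count that never references the external literals establishes it, so the step is repairable rather than fatal. Let $j$ be the number of blue vertices among the three outer $5$-clique members (two dependants each) and $i$ the number of blue vertices among the two inner ones (one dependant each). The clique itself contributes $5-i-j$ reds; the dependants of blue clique members contribute $2j+i$ non-illusion vertices; and the co-dependant triangle always contributes at least $2$ to $|C_j\cap R_f| + |C_j\setminus U_f|$ regardless of the rest of the colouring (if at least two co-dependants are red this is immediate; if exactly one is red, its two triangle neighbours are blue so it has at most one red neighbour and is itself non-illusion, giving $1+1$; if none is red all three see at most one red neighbour and are non-illusion). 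Summing gives $|C_j\cap R_f| + |C_j\setminus U_f| \geq (5-i-j) + (2j+i) + 2 = 7+j \geq 7$, with no dependence on the literal colours. Substituting this literal-free count for the worst-case-is-blue step makes your outline correct.
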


	We also need the following technical lemma.

	\begin{lemma}\label{lemma:h-exists}
		Let $q$ be a rational number in  $(\frac{1}{2}, 1]$, and $k>0$ be a natural number. Then, there exists a natural number $h^*$ such that $\frac{k+ h^*}{k+ 2h^*  }  \geq q$, but $\frac{k+ h^* -1}{k+ 2h^*  } < q $. 
	\end{lemma}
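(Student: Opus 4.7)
The plan is to recast the pair of required inequalities as an integer-existence problem on a half-open real interval, and then finish by a short length argument. First, since $k + 2h^* > 0$, I would clear denominators. The condition $\tfrac{k+h^*}{k+2h^*} \geq q$ is equivalent to $k(1-q) \geq h^*(2q-1)$, i.e., $h^* \leq \tfrac{k(1-q)}{2q-1}$. The condition $\tfrac{k+h^*-1}{k+2h^*} < q$ is equivalent to $h^*(2q-1) > k(1-q) - 1$, i.e., $h^* > \tfrac{k(1-q)-1}{2q-1}$. Writing $H := \tfrac{k(1-q)}{2q-1}$, the lemma reduces to exhibiting a natural number in the half-open interval $\bigl(H - \tfrac{1}{2q-1},\, H\bigr]$.

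Next I would observe that the length of this interval equals $\tfrac{1}{2q-1}$. The hypothesis $q \in (\tfrac12, 1]$ gives $2q - 1 \in (0, 1]$, so this length is at least $1$. Any half-open interval $(a, b]$ of length at least one contains the integer $\lfloor b \rfloor$, since $\lfloor b \rfloor \leq b$ and $\lfloor b \rfloor > b - 1 \geq a$. Setting $h^* := \lfloor H \rfloor$ therefore satisfies both inequalities; moreover $q \leq 1$ and $k > 0$ imply $H \geq 0$, so $h^*$ is a non-negative integer.

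The argument is essentially routine algebra, and I do not anticipate any real obstacle. The only step that carries genuine content is the observation that the assumption $q \leq 1$ is precisely what forces the interval to have length at least one; were $q$ allowed to exceed $1$, the statement could fail for suitable $k$, since the gap between consecutive values of $\tfrac{k+h}{k+2h}$ as $h$ ranges over the integers could skip past the threshold $q$.
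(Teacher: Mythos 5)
Your proof is correct, and it takes a genuinely different and, I would say, cleaner route than the paper's. The paper introduces the function $f(h) = \tfrac{k+h}{k+2h}$, observes $f(0)=1$, monotone decrease, and the lower bound $\tfrac12$, defines $h^*$ as the largest $h$ with $f(h)\geq q$, and then establishes $\tfrac{k+h^*-1}{k+2h^*}<q$ via a contradiction: combining the inequalities coming from $f(h^*)\geq q$, $f(h^*+1)<q$, and the negation of the desired conclusion to derive $b\leq a$. Your approach instead linearizes both inequalities in $h^*$ (valid because $2q-1>0$), exhibits the solution set as the half-open real interval $\bigl(H-\tfrac{1}{2q-1},\,H\bigr]$ with $H=\tfrac{k(1-q)}{2q-1}$, and observes that its length $\tfrac{1}{2q-1}\geq 1$ forces it to contain $\lfloor H\rfloor$, which is non-negative since $k>0$ and $q\leq 1$. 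This buys you several things the paper's argument does not make explicit: a closed-form expression $h^*=\lfloor H\rfloor$ (which directly justifies the paper's later remark that $h^*_{k,q}$ is computable in polynomial time), a transparent explanation of precisely where the hypothesis $q>\tfrac12$ is used (to make the interval length at least one), and an avoidance of the somewhat delicate bookkeeping of inequality directions in the paper's contradiction step. Both proofs are correct; yours is the more elementary and self-contained of the two.
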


	We refer to such a number as $h^*_{k,q}$. It is not difficult to show that we can compute $h_{k,q}^*$ in polynomial time. This observation is crucial to ensure that the intended reduction is constructable in polynomial time.
	
	We are now ready to prove the main result of this section. To show that \textsc{$q$-majority illusion} is NP-hard for a particular, rational $q$ in  $(\frac{1}{2},1]$, we construct a network $E^q_{\varphi}$ for every formula $\varphi$ in 3-CNF. We start with constructing $E_{\varphi}$ and set of $h_{|E_{\varphi}|, q}^*$ pairs of nodes. Then, it follows from Lemma \ref{lemma:1-hard}, as well as Lemmata \ref{lemma:RigidEncoding} and \ref{lemma:h-exists} that $E^q_{\varphi}$ admits $q$-majority illusion if and only if $\varphi$ is satisfiable. The details of the proof can be found in the appendix. Observe further that \textsc{$q$-majority illusion} is in NP, as one can easily check the number of nodes under illusion in a labelled network. 
	This concludes the proof of the following theorem:

	\begin{theorem}\label{thm:verifyillusion}
		\textsc{$q$-majority illusion} is NP-complete for every rational $q$ in  $(\frac{1}{2},1]$.
	\end{theorem}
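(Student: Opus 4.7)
The plan is to combine the hardness of $1$-majority illusion (Lemma~\ref{lemma:1-hard}) with a padding construction calibrated by $q$. Membership in NP is routine: a labelling is a polynomial-size certificate and verifying whether at least $q|N|$ agents are under illusion requires only computing the global majority and each neighbourhood majority in polynomial time.

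For NP-hardness at a fixed rational $q\in(\tfrac{1}{2},1]$, I would reduce from \textsc{3-SAT}. Given a 3-CNF formula $\varphi$ on $m$ variables and $n$ clauses, first build $E_\varphi$, whose size is $N_0=12m+18n-1$. Then, using Lemma~\ref{lemma:h-exists} with $k=N_0$, compute the integer $h=h^*_{N_0,q}$; a direct calculation shows $h$ is polynomial in $N_0$ and the bit-length of $q$, so the whole reduction runs in polynomial time. Form $E^q_\varphi$ by adjoining to $E_\varphi$ a collection of $h$ disjoint \emph{pair gadgets}, each a fresh edge between two fresh nodes, designed so that under any global-blue-majority labelling each pair contributes at most one node to the illusion count while preserving the colour balance.

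For the forward direction, if $\varphi$ is satisfiable take the all-illusion labelling of $E_\varphi$ from Lemma~\ref{lemma:1-hard} and extend it to $E^q_\varphi$ by colouring each pair with one red and one blue endpoint. Global strict blue majority is preserved because the $E_\varphi$-margin of one carries over (each added pair is colour-balanced), the blue endpoint of every pair is under illusion (its unique neighbour is red), and every $E_\varphi$-node is already under illusion. This yields exactly $N_0+h$ agents under illusion out of $N_0+2h$, which meets the threshold $q$ by the defining inequality $\tfrac{N_0+h}{N_0+2h}\ge q$ of Lemma~\ref{lemma:h-exists}.

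The main obstacle is the converse. Assuming $\varphi$ unsatisfiable, I must show that no labelling $f$ of $E^q_\varphi$ reaches $N_0+h$ illusioned agents; by the strict second inequality of Lemma~\ref{lemma:h-exists}, even $N_0+h-1$ is insufficient. Writing $k=I_\varphi-|R_f\cap E_\varphi|\ge 0$ for the red deficit inside the $E_\varphi$-part, I would split into cases. For $k=0$, Lemma~\ref{lemma:1-hard} forbids full $E_\varphi$-illusion, so at most $N_0-1$ of those nodes can be under illusion, and the global blue-majority constraint then caps the pair contribution at $h$, totalling at most $N_0+h-1$. For $k\ge 1$, Lemma~\ref{lemma:RigidEncoding} loses at least $k$ from $E_\varphi$, while the global blue-majority constraint in $E^q_\varphi$ restricts the extra red budget available to the pair gadgets; combining both bounds through a careful per-pair illusion accounting shows that the trade-off between losing $E_\varphi$-illusions and gaining pair-illusions cannot close the deficit, again yielding at most $N_0+h-1$ in total. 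This strict gap contradicts the hypothesised $q$-majority illusion, completing the reduction and hence the proof.
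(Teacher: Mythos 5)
Your reduction is the same as the paper's: you build the same $E_\varphi$, append the same set $H$ of $h^*_{|E_\varphi|,q}$ disjoint pair gadgets calibrated by Lemma~\ref{lemma:h-exists}, argue the forward direction by the same colouring (one red per pair, giving $|E_\varphi|+h^*$ illusioned nodes), and attack the converse by the same case split on the red deficit $k=I_\varphi-|R_f\cap E_\varphi|$ using Lemmas~\ref{lemma:1-hard} and~\ref{lemma:RigidEncoding}. So this is not a different route but the paper's route, and the question is whether the converse is actually closed.

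Two steps in your converse are left open. First, $k\ge 0$ is asserted, not derived: you need the paper's preliminary observation that if fewer than $h^*$ pair-gadget nodes are red then at most $h^*-1$ pair nodes can be under illusion (each pair contributes one illusioned node per red endpoint), so the total is already below $|E_\varphi|+h^*$; only once at least $h^*$ pair nodes are red does the global blue majority force $|R_f\cap E_\varphi|\le I_\varphi$. Second, in the $k\ge 1$ subcase the two bounds you invoke do not by themselves give the strict inequality you claim. Lemma~\ref{lemma:RigidEncoding} bounds the $E_\varphi$-side by $|E_\varphi|-k$, and the blue-majority constraint gives $|R_H|\le h^*+k$, hence at most $h^*+k$ illusioned pair nodes; the sum is exactly $|E_\varphi|+h^*$, whereas you need $\le |E_\varphi|+h^*-1$. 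The missing ingredient is to re-use unsatisfiability here as well: since $\varphi$ is unsatisfiable, the maximum number $M$ of $E_\varphi$-nodes under illusion when $I_\varphi$ of them are red satisfies $M<|E_\varphi|$ (this is Lemma~\ref{lemma:1-hard}), and the paper uses this to tighten the $E_\varphi$-side bound to $M-k$, giving a total of $M+h^*<|E_\varphi|+h^*$. Your sketch gestures at this (``cannot close the deficit'') without supplying the argument, and as written the bounds you cite permit equality.
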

	
	%%%%%%%%%%%%%%%%%
	
	\section{Eliminating Illusion}\label{sec:eliminating}

	We now turn to the problem of reducing the number of nodes under illusion in a given labelled network, by modifying the connections between them.  
	Namely, we consider the problem of checking if it is possible to ensure that a $q$-majority illusion does not hold in a labelled network be altering only a bounded number of edges.

	\begin{quote}
		\noindent \textsc{$q$-Illusion Elimination}:\\
		\hspace*{-1em} \indent\textit{Input:}  $\textit{SN}=( N, E, f)$ such that $f$ induces $q$-majority illusion in \textit{SN}, $k \in \mathbb{N}$ such that $k \leq |E|$.\\
		\hspace*{-1em}\textit{Question:} Is there a $\textit{SN}'=(N,E',f)$ such that $| \{( e \in N^2: \ e \in E \textit{ iff } e \notin E'  \}   | \leq k$ and  $f$ does not induce $q$-majority illusion in \textit{SN'}?
	\end{quote}
	Subsequently, we consider the problem of eliminating a $q$-majority illusion just by adding edges to the network.
	\begin{quote}
		\noindent \textsc{Addition $q$-Illusion Elimination }:\\
		\hspace*{-1em} \indent\textit{Input:}  $\textit{SN}=( N, E, f)$ s.t. $f$ induces $q$-majority illusion in \textit{SN}, $k \in \mathbb{N}$ such that $k \leq |E|$.\\
		\hspace*{-1em}\textit{Question:} Is there a  $\textit{SN}'=(N,E',f)$ such that \textit{SN} is a subnetwork of \textit{SN'},  $ |E'| - |E| \leq k$ and  $f$ does not induce $q$-majority illusion in \textit{SN'}?
	\end{quote}
	
	%Finally, we address the problem of eliminating a $q$-Majority Illusion just by removing edges.
	%\begin{quote}
	%	\noindent \textsc{Removal $q$-Illusion Elimination }:\\
	%	\hspace*{-1em} \indent\textit{Input:} $\textit{SN}=( N, E, f)$, $M \leq |N|$ s.t. $f$ induces $q$-Majority Illusion in \textit{SN}, $k \in \mathbb{N}$ such that $k \leq |E|$.\\
	%	\hspace*{-1em}\textit{Question:} Is there a  $\textit{SN}'=(N,E',f)$ such that \textit{SN'} is a subnetwork of \textit{SN},  $ |E| - |E'| \leq k$ and  $f$ does not induce $q$-Majority Illusion in \textit{SN'}?
	%\end{quote}
	
	Finally, we can give an analogous definition for \textsc{Removal $q$-Illusion Elimination}, which looks for subnetworks of \textit{SN} obtained by removing at most $k$ edges  such that an existing $q$-illusion is eliminated.
	
	In this section we will show that these problems are NP-complete for every rational $q$ in $(0,1)$ by reduction from 2P2N-SAT problem, which has been shown to be NP-complete. In 2P2N-SAT it is checked if a CNF formula in which every variable appears twice in the positive, and twice in the negative form is satisfiable (see \cite{berman2004approximation}). We will commence with showing that \textsc{$q$-Illusion Elimination} is NP-complete for every rational $q$ in $(0,1)$. 
	We begin by presenting the structures that will form our reduction, and then sketch the main lines of the proof, which can be found in complete form in the appendix.
	%Let us first define some useful structures and demonstrate their useful properties.
	
	\paragraph{$k$-Pump-up gadget.}
	
	Let us construct what we call a \emph{$k$-pump-up gadget}. For a natural number $k \geq 1$ we create $k+4$ blue nodes which are not connected to each other. In addition we construct 4 red nodes, which are also not connected to each other. Furthermore, let each red node in the gadget be connected to all blue nodes in this structure.
	Observe that if a $k$-pump-up gadget is embedded in a network in which blue is the majority winner, then $k+4$ nodes are under illusion in this structure, while 4 are not. Also, for every blue node $i$ in the gadget, the margin of victory of $i$ is $-4$.

	\paragraph{$k$-Pump-down gadget.}
	
	Let us further construct what we call a \emph{$k$-pump-down gadget}. For an uneven, natural $k \geq 3$ the $k$-pump-down gadget is a $k$-clique in which blue has the majority of 1. Also, for an uneven, natural $k \geq 4$ we construct a gadget for $k-1$ and a disjoint red node.
	Observe that if a $k$-pump-down gadget is embedded in a network in which blue is the majority winner, then all $k$ members of the structure are not under illusion. Moreover, if a blue node in the gadget would be adjacent to an additional red node, then it would be pushed into illusion.

	%%%%%%%%%
	We also need the following technical lemmas. 
	
	\begin{lemma}\label{lemma:UpExists1}
		For every pair of natural numbers $m,k>0$ and any rational number $q$ in $(0,1)$ such that $\frac{m}{k}<q$ there exists an $h$ such that $\frac{m+h}{k+h+4} < q$ but $\frac{m+h+1}{k+h+4} \geq q$.
	\end{lemma}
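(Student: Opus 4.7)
The plan is to exploit the fact that the two fractions in the statement share the same denominator $k+h+4$, so they differ by exactly $\frac{1}{k+h+4}$; the goal is thus to find the smallest $h$ at which incrementing the numerator pushes the ratio up to $q$. Let $g(h) = \frac{m+h}{k+h+4}$. First I would check that $g$ is monotone increasing in $h$ (a short calculation using $m < k$, which follows from $\frac{m}{k} < q < 1$) and that $g(h) \to 1$ as $h \to \infty$. Since $g(0) = \frac{m}{k+4} \leq \frac{m}{k} < q$ and $g(h) \to 1 > q$, the auxiliary fraction $\frac{m+h+1}{k+h+4}$ eventually meets or exceeds $q$, so the set of non-negative integers $h$ with $\frac{m+h+1}{k+h+4} \geq q$ is non-empty. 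Let $h^*$ be its minimum.

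By the definition of $h^*$, the required inequality $\frac{m+h^*+1}{k+h^*+4} \geq q$ is immediate. To establish the other inequality $\frac{m+h^*}{k+h^*+4} < q$, I would split into two cases. If $h^* = 0$, the claim is just $\frac{m}{k+4} \leq \frac{m}{k} < q$, which we already observed. If $h^* \geq 1$, then the minimality of $h^*$ applied with $h = h^* - 1$ yields $\frac{m+h^*}{k+h^*+3} < q$; since $k+h^*+4 > k+h^*+3$ and $m+h^* > 0$, enlarging the denominator strictly shrinks the fraction, so $\frac{m+h^*}{k+h^*+4} < \frac{m+h^*}{k+h^*+3} < q$, as needed. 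Hence $h^*$ witnesses the lemma.

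There is no substantive obstacle in this argument: the only care needed is in separating the base case $h^* = 0$ (where the minimality of $h^*$ gives no usable information and we fall back on the hypothesis $\frac{m}{k} < q$) from the case $h^* \geq 1$, and in applying monotonicity in the correct direction when passing from the denominator $k+h^*+3$ to $k+h^*+4$.
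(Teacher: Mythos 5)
Your proof is correct, and it takes a somewhat different (and cleaner) route than the paper's. The paper defines $f(h)=\frac{m+h}{k+h+4}$, observes it is strictly increasing from $f(0)<q$ towards $1$, and picks the threshold $h^{\#}$ with $f(h^{\#})<q$ and $f(h^{\#}+1)\geq q$; it then verifies the remaining inequality $\frac{m+h^{\#}+1}{k+h^{\#}+4}\geq q$ by an algebraic contradiction argument (writing $q=a/b$, clearing denominators, and adding two inequalities to force $a\leq 0$). You instead track the second fraction $\frac{m+h+1}{k+h+4}$ directly, take $h^{*}$ to be the least index at which it reaches $q$, and obtain the complementary inequality $\frac{m+h^{*}}{k+h^{*}+4}<q$ either from the hypothesis (when $h^{*}=0$) or from minimality together with the elementary fact that enlarging the denominator of a positive fraction shrinks it (when $h^{*}\geq 1$). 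This sidesteps the algebra entirely; in fact the paper's verification could also be replaced by the one-line observation $\frac{m+h^{\#}+1}{k+h^{\#}+4}>\frac{m+h^{\#}+1}{k+h^{\#}+5}=f(h^{\#}+1)\geq q$, which is the same denominator-monotonicity idea you use. The two arguments may select different witnesses $h$, but both are valid, and since the lemma is purely existential this is immaterial. One minor remark: your side claim that $g$ is monotone increasing is not actually needed for the argument (only the limit $g(h)\to 1$, or equivalently that $\frac{m+h+1}{k+h+4}\to 1$, is used to show the set is non-empty), though it is of course true under $m<k$.
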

	
	We will further denote such a number as $h_{k,m,q}^{\#}$, or $h^{\#}$ if $k,m$ and $q$ are clear from the context.

	\begin{lemma}\label{lemma:DownExist1}
		For every rational number $q\in (0,1)$ and $m,k \in \mathbb{N}$ such that $\frac{m}{k} \geq q$ there is a natural $h$ such that $\frac{m}{k+h} < q$, but $\frac{m+1}{k+ h} \geq q$.
	\end{lemma}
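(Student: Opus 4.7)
The plan is to rewrite both inequalities as explicit bounds on $h$ and show that the resulting interval of feasible values is long enough and sufficiently far to the right to be guaranteed to contain a natural number.

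First, I would observe that $\frac{m}{k+h} < q$ is equivalent to $h > \frac{m}{q} - k$, while $\frac{m+1}{k+h} \geq q$ is equivalent to $h \leq \frac{m+1}{q} - k$. So the lemma reduces to showing that the half-open interval
\[
I \;=\; \left( \tfrac{m}{q} - k,\; \tfrac{m+1}{q} - k \right]
\]
contains a natural number.

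Next, I would verify the two properties of $I$ that make this immediate. The length of $I$ is $\tfrac{m+1}{q} - \tfrac{m}{q} = \tfrac{1}{q}$, and since $q \in (0,1)$ this length is strictly greater than $1$; hence any half-open interval of this form contains at least one integer. For the lower endpoint, the hypothesis $\tfrac{m}{k} \geq q$ rewrites as $m \geq qk$, hence $\tfrac{m}{q} - k \geq 0$. Therefore every integer in $I$ is strictly positive, and in particular is a natural number.

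The only mildly delicate point is the boundary case: one must rule out that the single integer supplied by the length argument happens to be $0$. This is exactly where the hypothesis $\tfrac{m}{k} \geq q$ is used, since it forces the lower endpoint of $I$ to be non-negative, so the strict inequality $h > \tfrac{m}{q} - k$ already excludes $h = 0$. Picking any natural number $h \in I$ then yields the desired conclusion, and this choice can be made effective (e.g., $h = \lfloor \tfrac{m+1}{q} - k \rfloor$), which matches the polynomial-time constructibility remark made for $h^*_{k,q}$ after Lemma~\ref{lemma:h-exists}.
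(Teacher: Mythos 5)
Your proof is correct and somewhat more direct than the paper's. The paper defines the decreasing sequence $g(h)=\frac{m}{k+h}$, picks the smallest $h^+$ with $g(h^+)<q$ (using $g(0)=\frac{m}{k}\ge q$ to guarantee $h^+\ge 1$), and then proves $\frac{m+1}{k+h^+}\ge q$ by a separate integer-arithmetic contradiction: writing $q=\frac{a}{b}$, it derives inequalities $\alpha$ and $\beta$ whose sum forces $a\ge b$, contradicting $q<1$. You instead translate both target inequalities into the single constraint $h\in\bigl(\tfrac{m}{q}-k,\ \tfrac{m+1}{q}-k\bigr]$ and observe that this half-open interval has length $\tfrac{1}{q}>1$, so it necessarily contains an integer; the hypothesis $\tfrac{m}{k}\ge q$ then pushes the left endpoint to $\ge 0$, so that integer is positive. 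Both proofs hinge on the same structural fact (the gap between $\tfrac{m}{q}$ and $\tfrac{m+1}{q}$ exceeds $1$), but your interval formulation makes the reason transparent in one step and dispenses with the case-by-case contradiction bookkeeping; it also hands you the explicit witness $h=\lfloor\tfrac{m+1}{q}-k\rfloor$ for free, matching the polynomial-time-computability remark the paper needs. One small presentational caveat: you should note explicitly that the rearrangements $\tfrac{m}{k+h}<q\iff h>\tfrac{m}{q}-k$ and $\tfrac{m+1}{k+h}\ge q\iff h\le\tfrac{m+1}{q}-k$ use $q>0$ and $k+h>0$, which hold here, so the equivalences are legitimate.
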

	
	We denote such a number as $h^+_{m,k,q}$, or $h^+$ if $m,k$ and $q$ are clear from the context.
	We are now ready to construct the labelled social network which we will call an \emph{encoding} of a formula $\varphi$ in 2P2N form, with the set of variables $P_{\varphi} = \{p_1, \dots, p_m\}$  and the set of clauses $C_{\varphi}=\{C_1, \dots, C_n\}$. We also refer to such a network as $E_{\varphi}=(N,E,f)$.
	\paragraph{Variable, clause, and balance gadgets.}  
	%%%%%%%%%%%%%%Variable Gadget%%%%%%%%%%%%%%%%%%%%
	Let us start with describing what we call a \emph{variable gadget}. For every variable $p_i \in P_{\varphi}$, construct two triples of nodes labelled blue, $\{p_i^1, p_i^2, p_i^3\}$ and  $\{ \neg p_i^1, \neg p_i^2, \neg p_i^3\}$. Let all literal nodes form a clique. We say that the first of them corresponds to the literal $p_i$, while the second to $\neg p_i$, and call members of these triples \emph{literal nodes}. Further, for every literal $L$ let us construct a node $A_L$, labelled blue, which we call an \emph{auxiliary node} of $L$ and let auxiliary nodes form a clique. For each literal $L$, let $A_L$ be adjacent to all literal nodes not corresponding to $L$. 
	Also, for every variable $p_i$ let us construct a node $E_i$ labelled blue, which we call the \emph{extra node} of $p_i$. Furthermore, for each variable $p_i$, let $E_i$ be adjacent to all auxiliary nodes and literal nodes not corresponding to $p_i$ or $\neg p_i$, and let all extra nodes form a clique.
	
	\begin{figure}[H]
		\centering
		
		\scalebox{0.5}{	\begin{tikzpicture}
				[->,shorten >=1pt,auto,node distance=1.2cm,
				semithick]

				\node[shape=circle,draw=black, fill=myblue] (A)  {};
				\node[shape=circle,draw=black, right of=A, fill=myblue] (B)  {};
				\node[shape=circle,draw=black, left of=A, fill=myblue] (A')  {};
				\node[draw,  fit=(A') (B), minimum height=1.2cm ](FIt1) {};
				\draw [thick,-] (A) to  (A') ;
				\draw [thick,-] (B) to  (A) ;
				\draw [thick, bend left, -] (A') to  (B) ;
				\node[shape=circle,  left of=A'] (A'')  [fontscale=4] {$p_i$};

				\node[shape=circle,draw=black, right of=B, fill=myblue] (C)  {};
				\node[shape=circle,draw=black, right of=C, fill=myblue] (D)  {};
				\node[shape=circle,draw=black, right of=D, fill=myblue] (D')  {};
				\node[draw,  fit=(C) (D'), minimum height=1.2cm ](FIt2) {};
				\draw [thick,-] (C) to  (D) ;
				\draw [thick,-] (D) to  (D') ;
				\draw [thick, bend left, -] (C) to  (D') ;
				\node[shape=circle,  right of=D'] (D'') [fontscale=4] {$\neg p_i$};
				
				\draw [thick,-] (FIt1) to  (FIt2) ;
				
				\node[shape=circle,draw=black, below of=B, fill=myblue] (N)  {};
				\node[shape=circle,  right of=N] (Q'') [fontscale=4] {$E_i$};
				
				\node[shape=circle,  below of=N] (N')  {$\dots$};
				\node[shape=circle,draw=black, left of=N',  pattern=crosshatch, pattern color=myred] (N'')  {};
				\node[shape=circle,draw=black, right of=N', pattern=crosshatch, pattern color=myred] (N''')  {};
				
				\node[draw,  fit=(N'')  (N''') ](FIt3) {};
				
				%\node[shape=circle,draw=black, below of=N, right of=N, fill=myred] (N'''')  {};
				
				\node[shape=circle,draw=black, fill=myblue, above of=B] (E)  {};
				\node[shape=circle,draw=black, fill=myblue, right of=E] (F)  {};
				\draw [thick,-] (E) to  (F) ;
				\node[shape=circle,  left of=E] (Q) [fontscale=4]  {$A_{p_i}$};
				\node[shape=circle,  right of=F] (Q') [fontscale=4] {$A_{\neg p_i}$};
				%	\draw [thick,-] (N) to  (N') ;
				%	\draw [thick,-] (N) to  (N'') ;
				%	\draw [thick,-] (N) to  (N''') ;
				%\draw [thick,-] (N) to  (N'''') ;
				
				\draw [thick,-] (F) to  (FIt1) ;
				\draw [thick,-] (FIt2) -- (E) ;
				\draw [thick,-] (FIt3) to  (N) ;

				%	[->,shorten >=1pt,auto,node distance=0.5cm,
				%	semithick]
				%	\draw [thick,-] (E) to  (C) ;
				%	\draw [thick,-] (E) to  (D) ;
				
				%\draw [thick,-] (F) to  (A) ;
				%\draw [thick,-] (F) to  (B) ;
				
				\node[shape=circle,draw=black, above of=E, pattern=crosshatch, pattern color=myred] (G)  {};
				\node[shape=circle,  left of=G] (H)  {$\dots$};
				\node[shape=circle,draw=black, left of=H, pattern=crosshatch, pattern color=myred] (I)  {};
				\node[draw, fit=(G)  (I) ](FIt4) {};

				%	\node[shape=circle,draw=black, left of=I, fill=myred] (J)  {};
				%	\node[shape=circle,  draw=black, left of=J, fill=myred] (K)  {};
				%	\node[shape=circle,  draw=black, left of=K, fill=myred] (AA)  {};
				%	\node[draw,  dashed, fit=(J) (AA) ](FIt5) {};
				
				%\draw [thick,-] (E) to  (G) ;
				%\draw [thick,-] (E) to  (H) ;
				%\draw [thick,-] (E) to  (I) ;
				\draw [thick,-] (E) to  (FIt4) ;
				
				%	\draw [thick,-] (E) to  (J) ;
				%	\draw [thick,-] (E) to  (K) ;
				%\draw [thick,-, bend left] (E) to  (FIt5) ;
				
				\node[shape=circle,draw=black, above of=F, pattern=crosshatch, pattern color=myred] (L)  {};
				\node[shape=circle,  right of=L] (M)  {$\dots$};
				\node[shape=circle,draw=black, right of=M, pattern=crosshatch, pattern color=myred] (q)  {};
				\node[draw,  fit=(L)  (q) ](FIt6) {};
				
				%	\node[shape=circle,draw=black, right of=q, fill=myred] (O)  {};
				%	\node[shape=circle, draw=black, right of=O, fill=myred] (P)  {};
				%	\node[shape=circle, draw=black, right of=P, fill=myred] (AAA)  {};
				%	\node[draw,  dashed, fit=(O) (AAA) ](FIt7) {};

				%	\draw [thick,-] (F) to  (L) ;
				%	\draw [thick,-] (F) to  (M) ;
				%	\draw [thick,-] (F) to  (q) ;
				\draw [thick,-] (F) to  (FIt6) ;

				%\draw [thick,-] (F) to  (O) ;
				%\draw [thick,-] (F) to  (P) ;
				%	\draw [thick,-, bend right] (F) to  (FIt7) ;
		\end{tikzpicture}}
		\caption{Variable Gadget}\label{fig:VariableGadget3}
	\end{figure}
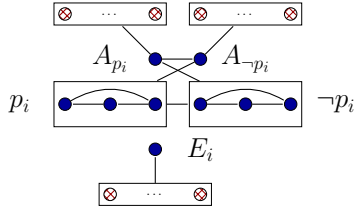

	%%%%%%%%%%%%%%%%%%%%%%Clause Gadget
	
	In addition, let us construct what we call a \emph{clause gadget}. For every clause $C_i \in C_{\varphi}$ let us create a \emph{verifier node} $v_{C_i}$, labelled blue. We say that this node corresponds to $C_i$. Furthermore, for each clause $C_i$ and each literal $L$ not in $C_i$, let $v_{C_i}$ be adjacent to all literal nodes corresponding to $L$, as well as all auxiliary and extra nodes. Finally, for each clause $C_i$, create a group of $3  |\neg P^i| + 3  |P_{\varphi}| +1 $ nodes labelled red, where $\neg P^i$ is the set of literals which are not in $C_i$. Let all nodes corresponding to members of $\neg P^i$ be adjacent to $v_{C_i}$. Observe that in an extension of the encoding of $\varphi$ in which one additional node labelled blue is adjacent to $v_{C_i}$ (and no edges from red nodes are added to the network), illusion is eliminated from this node.

	In addition, for every auxiliary node $A_L$ create a group of nodes labelled red, adjacent to $A_L$, of the size such that there number of red nodes in the neighbourhood of $A_L$ is greater than of those labelled blue by exactly 3. Namely, let the size of such a group be $9  |P_{\varphi}|  + |C_{\varphi}| -1$. Similarly, for every literal node $p_i^j$ construct a group of nodes labelled red, adjacent to $p_i^j$ such that there is one more red node in the neighbourhood of $L_i^j$ than blue. Namely, let there be $9 |P_{\varphi}|+ \neg C^L-2$ nodes adjacent to $L_i^j$, where $C^L$ is the number of clauses in which $L$ does not appear. Also, for every extra node $E_i$, construct a group of red nodes adjacent to $E_i$ of the size such that there is one more red node in the neighbourhood of $E_i$ than the number of blue nodes in the neighbourhood of $E_i$. Namely let there be $6 |P_{\varphi}| + 9| P_{\varphi}| -6 $ such nodes.
	Finally, create a group of disconnected blue nodes of the minimal size sufficient for blue to be the strict majority in the encoding of $\varphi$. 
	
	\paragraph{Budget and requirement.} %In the constructed instance of \textsc{General Bounded Illusion Elimination} we check the existence of a network different than the encoding of $\varphi$ by at most $6|P_{\varphi}|$ edges, in which there are at most $|P_{\varphi}|$ nodes under illusion. 
	We call $|P_{\varphi}|$ the \emph{requirement}, or $r_{\varphi}$. Also, we call $6|P_{\varphi}|$ the \emph{budget}, or $b_{\varphi}$. We say that network $E_{\varphi}'=(N,E',f)$ satisfies the requirement an the budget if $ | \{ e \in N^2 : e \in E \textit{ iff } e \notin E'  \} \leq  b_{\varphi}|  $ while less than $r_{\varphi}$ nodes are under illusion in $E_{\varphi}'$.

	\paragraph{Observations on modifications satisfying the budget and the requirement.}
	Let us first observe that the only nodes under illusion are literal nodes, extra nodes, auxiliary nodes and verifier nodes. Therefore, it is sufficient to eliminate the illusion from all literal, extra and verifier nodes, as well as from the half of auxiliary nodes to meet the requirement.  Furthermore, one can verify that there is no network satisfying the budget and the requirement in which some node is pushed into illusion. In addition, if no literal node is pushed into illusion, one can verify that if illusion is eliminated from more than a half of auxiliary nodes, then at least 1 extra node would remain under illusion. 
	
	For a variable gadget corresponding to $p_i$ such that in an extension \textit{SN'} of the encoding of $\varphi$ the illusion has been eliminated from $E_i$ and from one of the auxiliary nodes, we say that $p_i$ is false in \textit{SN'} if the illusion was eliminated from $A_{p_i}$, and true if it has been eliminated from $A_{\neg p_i}$. Furthermore, observe that in a network which satisfies budget and requirement, at least one auxiliary node is not under illusion in every variable gadget. Moreover, observe that by construction, in every network satisfying the budget and the requirement, exactly one edge to a blue node is added to each literal node. This entails that in every network satisfying the budget and the requirement, each verifier node $v_{C_i}$ is adjacent to a literal node corresponding to some true literal in $C_i$.

	\begin{lemma}\label{lemma:RemovalWorks}
		For every formula $\varphi$ in 2P2N form, there is a network $E'_{\varphi}=(N,E',f)$ which satisfies the requirement and the budget if and only if $\varphi$ is satisfiable.
	\end{lemma}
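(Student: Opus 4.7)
The plan is to prove both directions of the biconditional separately, leveraging the structural observations about budget-compliant extensions stated just before the lemma.

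For the \emph{if} direction, I would take a satisfying assignment $M$ of $\varphi$ and exhibit an extension $E'_\varphi$ that uses exactly $b_\varphi = 6|P_\varphi|$ edge additions. For each variable $p_i$, writing $F$ for the literal in $\{p_i, \neg p_i\}$ that is false under $M$ and $T$ for the true one, I would add six edges: first, the three currently missing blue--blue edges $(A_F, F^1)$, $(A_F, F^2)$, $(A_F, F^3)$, which simultaneously shifts the margin of $A_F$ from $-3$ to $0$ (eliminating its illusion) and raises each $F^k$'s margin from $-1$ to $0$; next, the edge $(E_i, T^1)$, which clears the illusion at both $E_i$ and $T^1$; finally, two edges incident to $T^2$ and $T^3$, routed to verifier nodes $v_{C_j}$ that pick $T$ as their witnessing true literal, and otherwise parked on a disconnected blue buffer node. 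A valid witness function exists because the 2P2N property guarantees each literal appears in at most two clauses, so the $2|P_\varphi|$ secondary true-literal slot nodes across the gadgets suffice to absorb all $|C_\varphi|$ verifier demands. Every added edge is blue--blue and moves its endpoints' margins strictly toward zero, so no blue node is pushed into illusion, and the count of remaining illusioned nodes matches the bound prescribed by the requirement.

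For the \emph{only if} direction, I would take any extension $E'_\varphi$ satisfying both constraints and read off a satisfying valuation of $\varphi$. The observations preceding the lemma already guarantee that in every variable gadget exactly one of $A_{p_i}$, $A_{\neg p_i}$ has its illusion eliminated, and that each verifier $v_{C_j}$ is adjacent in $E'_\varphi$ to at least one literal node corresponding to a literal in $C_j$ whose companion auxiliary remains under illusion (which is precisely a true literal under the stated convention). Setting $V(p_i)$ true iff illusion has been eliminated at $A_{\neg p_i}$ yields a well-defined valuation that satisfies every clause, so $\varphi$ is satisfiable.

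I expect the main obstacle to lie in the \emph{if} direction, specifically in the combinatorial bookkeeping: checking that the six edges per variable truly saturate the budget, that the witness-to-verifier matching is feasible across variables under the 2P2N occurrence bound, and --- most delicately --- that the cumulative effect of all blue--blue additions does not inadvertently push some blue node (a buffer vertex, a verifier, a non-targeted auxiliary, or a literal node) into illusion. The converse direction is by comparison almost a direct application of the structural observations already in hand.
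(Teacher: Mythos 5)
Your proposal is correct and follows essentially the same route as the paper's own proof: the \emph{if} direction constructs the identical $6|P_\varphi|$ edge additions (three false-literal-to-auxiliary edges per variable, one extra-node edge, and two edges from $T^2,T^3$ routed to witnessing verifiers or parked on disconnected blue buffers, with the 2P2N bound guaranteeing feasibility of the witness assignment), and the \emph{only if} direction reads a satisfying valuation off the structural observations stated before the lemma. If anything, you are slightly more explicit than the paper in insisting that the leftover edges go to \emph{disconnected} blue nodes and in remarking that blue--blue additions can only raise margins and hence cannot push a node into illusion, both of which are mild but worthwhile tightenings of the argument.
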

	
	\begin{proof}
		Take a formula $\varphi$ in 2P2N form. First, suppose it is not satisfiable. Let us further suppose, towards contradiction, that there exists a network $E'_{\varphi}$ satisfying the budget and the requirement. As $\varphi$ is not satisfiable, for every valuation $V$ over $P_{\varphi}$ there is a clause $C_i$ such that no literal in $C_i$ is true in $V$. Further, let $V$ be a valuation over $P_{\varphi}$ in which a literal $L$ is true if and only if it is true in  $E'_{\varphi}$. But this means that, by previous observations, there needs to exist a verifier node $v_{C_i}$ which is not to adjacent to a literal node corresponding to some true literal in $C_i$. So, $E'_{\varphi}$ does not satisfy the budget and the requirement which contradicts the assumptions.
		
		%Note that then, for every variable $p_i$, the edges between variable nodes corresponding to one of $p_i$ and $\neg p_i$ and verifier nodes have not been removed. By the previous observations, illusion is then eliminated from all verifier nodes by removing edges between them and literal nodes. Therefore, illusion has been eliminated from all verifier nodes. But then, there is a valuation over $P_{\varphi}$ under which $\varphi$ is true, which contradicts the assumptions.
		
		Suppose now that $\varphi$ is satisfiable. Let us construct a network $E'_{\varphi}$  satisfying the requirement and the budget. As $\varphi$ is satisfiable, there exists a valuation $V$ over $P_{\varphi}$ such that, for every clause $C_i$, there is a literal $L$ in $C_i$ which is true in $V$. Connect now edges between all literal nodes corresponding to literals false in $V$ and auxiliary nodes. Then, for every literal $L$ true in $E_{\varphi}$, construct an edge between node $L^1$  and an extra node. Finally, for every clause $C_i$, add an edge between $v_{C_i}$ and exactly one literal node corresponding to a literal true in \textit{SN}, represented in $C_i$. Note that this is always possible since $V$ is a model of $\varphi$. Notice further that $\varphi$ is in 2P2N form. Therefore, as $L$ occurs twice in $\varphi$, we can ensure that at most one edge is added between a node corresponding to $L$ and a verifier node. Finally, for every literal node $L^i$ still under illusion, add an edge between $L^i$ and any blue node in the encoding of $\varphi$. But then, in the constructed subnetwork, only $|  P_{\varphi}| $ nodes are under illusion, and edges have been added between $6 |P_{\varphi}|$ pairs of nodes. Thus,  the encoding of $\varphi$ satisfies the requirement and the budget.
	\end{proof}
	
	The above observations are sufficient to prove the main result of this section. 
	To show that \textsc{$q$-Illusion Elimination} is NP-complete for a given rational $q \in (0,1)$, we construct a labelled network $E_{\varphi}^q$ for every formula $\varphi$ in 2P2N form. 
	First, we construct $E_{\varphi}$. Let $I_{\varphi}$ denote the number of nodes under illusion in $E_{\varphi}$. Then, if $\frac{I_{\varphi} - r_{\varphi}}{|N|} < q $, we add a $ h_{|N|- r_{\varphi}, b_{\varphi}, q}^{\#}$-pump-up gadget. Otherwise, we add a $ h_{|N| - r_{\varphi}, b_{\varphi}, q}^{+}$-pump-down gadget. It follows from Lemma \ref{lemma:RemovalWorks}, as well as Lemmata \ref{lemma:DownExist1} and \ref{lemma:UpExists1} that in both cases the $q$-majority illusion can be eliminated from $E_{\varphi}^q$ if and only if $\varphi$ is satisfiable. Notice further that \textsc{$q$-Illusion Elimination } is in NP. Thus, NP-completeness of the problem follows.

	\begin{theorem}\label{theorem:removal}
		\textsc{$q$-Illusion Elimination} is NP-complete for every rational $q \in (0,1)$.
	\end{theorem}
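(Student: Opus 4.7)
The plan is to give a polynomial-time many-one reduction from \textsc{2P2N-SAT} to \textsc{$q$-Illusion Elimination}, using Lemma \ref{lemma:RemovalWorks} as a black box. Membership in NP is immediate: a witness is the set of at most $k$ edge toggles, and verifying that the modified labelled network has fewer than $q\cdot|N|$ nodes under illusion takes polynomial time by simply recomputing the neighbourhood margins.

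Given a 2P2N formula $\varphi$, I would first build $E_\varphi=(N,E,f)$ with the requirement $r_\varphi=|P_\varphi|$ and budget $b_\varphi=6|P_\varphi|$ as already described. Lemma \ref{lemma:RemovalWorks} tells us exactly that, within budget $b_\varphi$, one can leave strictly fewer than $r_\varphi$ illusion nodes in $E_\varphi$ iff $\varphi$ is satisfiable. The remaining task is therefore to turn the absolute threshold $r_\varphi$ into the fractional threshold $q$ required by \textsc{$q$-Illusion Elimination}.

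To this end, I would append a disjoint pump gadget. Let $I_\varphi$ be the initial number of illusion nodes in $E_\varphi$. If $\frac{I_\varphi-r_\varphi}{|N|}<q$, the unmodified $E_\varphi$-part is already too low in illusion fraction, so I would attach a pump-up gadget calibrated by Lemma \ref{lemma:UpExists1}: choose $h^\#$ so that the combined fraction sits just above $q$, but drops just below $q$ as soon as one more illusion node is removed from $E_\varphi$, which happens exactly at the $r_\varphi$-threshold. If instead $\frac{I_\varphi-r_\varphi}{|N|}\geq q$, I would attach a pump-down gadget and use Lemma \ref{lemma:DownExist1} analogously, diluting the fraction with non-illusion nodes so that the $q$-threshold coincides with the $r_\varphi$-threshold. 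Both $h^\#$ and $h^+$ are bounded by a polynomial in the instance size and can be computed by inspecting the defining inequalities, so the construction of $E_\varphi^q$ runs in polynomial time. Setting the \textsc{$q$-Illusion Elimination} budget parameter to $b_\varphi$, one direction of the correctness is then immediate from Lemma \ref{lemma:RemovalWorks}: if $\varphi$ is satisfiable, the strategy given in that lemma reduces the $E_\varphi$-part below $r_\varphi$ illusion nodes, which by our calibration makes the overall fraction dip below $q$.

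The main obstacle will be the converse: showing that when $\varphi$ is unsatisfiable, no distribution of $b_\varphi$ edge modifications—including ones that touch the pump gadget—can drop the fraction below $q$. Here the rigidity of the pump gadgets is the key. In a pump-up gadget every blue node has margin of victory $-4$, so lifting even a single pump-up node out of illusion costs at least four edge modifications, and symmetrically a pump-down gadget needs multiple modifications to push one of its nodes into illusion. A careful accounting shows that any budget spent on the gadget is strictly less efficient than budget spent within $E_\varphi$, so when $\varphi$ is unsatisfiable the adversary cannot reach the $q$-threshold no matter how the $b_\varphi$ modifications are allocated. Combining the two directions gives NP-hardness for every rational $q\in(0,1)$, which together with NP-membership completes the proof.
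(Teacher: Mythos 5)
Your proposal follows essentially the same route as the paper: establish NP membership by guessing the at most $k$ edge toggles, reduce from \textsc{2P2N-SAT} via the encoding $E_\varphi$ with requirement $r_\varphi=|P_\varphi|$ and budget $b_\varphi=6|P_\varphi|$, invoke Lemma~\ref{lemma:RemovalWorks} for the core equivalence, and then attach a pump-up gadget (Lemma~\ref{lemma:UpExists1}) when $\frac{I_\varphi-r_\varphi}{|N|}<q$ or a pump-down gadget (Lemma~\ref{lemma:DownExist1}) otherwise, so that the $q$-fraction threshold coincides with the $r_\varphi$ absolute threshold. The paper's converse direction is argued exactly as you describe it — removing a pump-up node from illusion costs more than $4$ edge changes whereas $3$ suffice per node inside $E_\varphi$ — so no genuine divergence from the paper's proof.
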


	Furthermore, NP-completeness of \textsc{Addition $q$-Illusion Elimination} and \textsc{Removal $q$-Illusion Elimination } can be shown with a reduction similar to the one in the proof of Theorem \ref{theorem:removal}, which can be found in the appendix. %\umbe{is it in appendix? if yes say it}
	
	\begin{theorem}\label{theorem:ExtraRemoval}
		\textsc{Addition $q$-Illusion Elimination} and \textsc{removal $q$-Illusion Elimination} are NP-complete for every rational $q \in (0,1)$.
	\end{theorem}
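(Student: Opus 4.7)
The plan is to adapt the reduction from \textsc{2P2N-SAT} used in the proof of Theorem \ref{theorem:removal}, by designing tailored encodings $E^{+}_\varphi$ for the \textsc{Addition} variant and $E^{-}_\varphi$ for the \textsc{Removal} variant, and then attaching the appropriate pump-up or pump-down gadget (together with the analogues of Lemmas \ref{lemma:UpExists1} and \ref{lemma:DownExist1}) to obtain hardness for arbitrary rational $q \in (0,1)$. Membership in NP is immediate in both cases: given a candidate edge set $E'$, one can count $| \{ e : e \in E \text{ iff } e \notin E' \}|$ and verify the fraction of nodes under illusion in polynomial time.

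For \textsc{Addition $q$-Illusion Elimination}, I would reuse the encoding $E_\varphi$ essentially unchanged. The forward direction of Lemma \ref{lemma:RemovalWorks} is already purely additive: the witnessing network attaches new edges from literal nodes to auxiliary nodes, from literal nodes $L^1$ to extra nodes, and from verifier nodes to literal nodes of true literals, and never removes an edge. Hence if $\varphi$ is satisfiable, the witness lies in the Addition regime. For the converse, the Addition regime is a subset of the unconstrained regime, so any solution witnesses a satisfying assignment by exactly the same analysis using the observations on budget and requirement; one only has to check that within the budget, no pure addition can push a previously non-illusioned node into illusion, which follows because every addition attaches a blue neighbor to a blue node (or is irrelevant for illusion) and therefore can only reduce its red margin.

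For \textsc{Removal $q$-Illusion Elimination} the natural idea is to take the dual encoding $E^{-}_\varphi$ obtained from $E_\varphi$ by \emph{pre-installing} all the blue edges the original proof would add, so that every illusioned node (literal, auxiliary, extra, verifier) sits with red margin of victory exactly $-1$; a single removal of an incidence to a red neighbor then eliminates its illusion. The 2P2N structure is encoded through which red edges are removed from the $A_L$, literal, and $E_i$ constellation, and verifier nodes enforce that each clause contains a satisfied literal. An analogue of Lemma \ref{lemma:RemovalWorks} with the same requirement $r_\varphi$ and budget $b_\varphi$ then goes through: the forward direction removes the red edges dictated by a model, and the converse reads a satisfying assignment off the pattern of removals.

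The main obstacle I foresee is the Removal construction. The sizes of the red neighborhoods must be recalibrated (replacing counts like $9|P_\varphi| + |C_\varphi| - 1$ by values that leave every illusioned node at margin $-1$ rather than $+3$ or $+1$), and one has to guarantee that no single removal can serve two illusioned nodes at once (otherwise the budget-to-requirement accounting collapses). The cleanest way is to route the pre-installed blue edges so that red neighbors are private to each illusioned node (except at the combinatorial core linking literal, auxiliary, extra, and verifier nodes, where the 2P2N degree bound suffices to force the intended correspondence). Once the correct constants are fixed, the remaining argument — applying Lemmas \ref{lemma:UpExists1} and \ref{lemma:DownExist1} to append a pump-up or pump-down gadget matching the ratio of illusioned nodes to $q$ — is identical to the proof of Theorem \ref{theorem:removal}.
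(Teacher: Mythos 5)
Your treatment of the \textsc{Addition} variant takes a genuinely different and shorter route than the paper. The paper builds a fresh encoding for \textsc{Addition} (with a differently-coloured extra node and recalibrated red neighbourhoods, giving requirement $|C_\varphi|+2|P_\varphi|$ and budget $|C_\varphi|+4|P_\varphi|$), whereas you observe that the witness in the proof of Lemma~\ref{lemma:RemovalWorks} for the Theorem~\ref{theorem:removal} encoding already consists only of edge additions, and that an addition-only solution is \emph{a fortiori} a solution to the unrestricted problem. This two-sided argument is sound and yields the same conclusion with no new construction; the small caveat about pushing nodes into illusion is in fact unnecessary, since it is already subsumed by the ``only if'' direction of Lemma~\ref{lemma:RemovalWorks} applied to the addition-only solution viewed as an unrestricted one.

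For the \textsc{Removal} variant, however, your sketch has a concrete gap. First, your ``dual encoding obtained by pre-installing all the blue edges the original proof would add'' does not describe what actually needs to happen: in the paper's \textsc{Removal} encoding the literal nodes are relabelled \emph{red}, so that deleting an incidence from a blue illusioned node to a literal node reduces that node's red count. A pure ``pre-install blue edges'' transformation of the Theorem~\ref{theorem:removal} encoding, which leaves literal nodes blue, would give you no red edge to delete in the combinatorial core. Second, your stipulation that every illusioned node should sit at margin exactly $-1$ conflicts with the key asymmetry the paper relies on: verifier and extra nodes are set at margin $-1$ (one deletion suffices), but auxiliary nodes are deliberately set at margin $-3$, so that ``fixing'' $A_L$ costs three deletions, one from each node of $L$'s triple. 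That is precisely what makes the budget $3|P_\varphi|$ and requirement $|P_\varphi|$ tight and forces illusion to be removed from exactly one of $A_{p_i}$ and $A_{\neg p_i}$ per variable, giving the valuation reading. If auxiliary nodes also sat at margin $-1$, one could cheaply fix both $A_{p_i}$ and $A_{\neg p_i}$ as well as $E_i$ inside the same budget, and the correspondence between solutions and satisfying assignments would collapse. You do flag the need to ``recalibrate constants'' and to prevent a single removal serving two nodes, which is the right instinct; but the specific constant you propose (uniform margin $-1$) is the wrong one, and fixing it essentially amounts to rediscovering the paper's non-uniform margins and its per-literal-node capacity bound (each literal node can tolerate at most one blue-edge deletion before being pushed into illusion). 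The remaining scaffolding — NP membership, and the pump-up/pump-down gadgets together with Lemmas~\ref{lemma:UpExists1} and~\ref{lemma:DownExist1} to handle arbitrary $q\in(0,1)$ — matches the paper.
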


	\section{Conclusions}\label{sec:conclusions}
	We have provided non-trivial constructions showing the algorithmic hardness of checking if it is possible to find a colouring of a social network in which a specified fraction of agents is under illusion, and of checking if the number of agents under illusion can be reduced to a desired level by modifying the connections between them. 
	%We provided NP-completeness results for both of these problems. It is worth noting that even though our findings are negative, they do not entail that efficient methods of solving them for practical purposes are not obtainable. Instead, they provide an important insight useful in designing such a method. 
	%For space reasons, we only covered the problem of reducing the number of agents under illusion by adding or deleting edges. However, the complexity of checking if the illusion can be eliminated from a sufficient number of agents only by adding edges, or only by removing them has also been studied. As it turns out, results analogous to Theorem \ref{theorem:removal} can be obtained.
	Our research opens a number of directions for further investigations. 
	Let us mention a few particularly interesting ones.
	
	\begin{itemize}
		
		\item Establishing the complexity of checking if a network admits a $q$-majority illusion for fractions smaller than or equal to $\frac{1}{2}$ remains open (cf. Theorem~\ref{thm:verifyillusion}).
		
		\item There are social networks that do not admit a  illusion but do admit a ``plurality illusion", i.e., misperceiving the more popular option, if three or more were to be used (see the Appendix for such an example). This is particularly relevant for voting contexts such as elections with multiple candidates.
		
		\item Real-world social networks often show a high level of clustering, i.e., agents with many connections in common also tend to be connected themselves (see, e.g. \cite{fox2020finding}). 
		It is of interest to study how the ``level'' of clustering 
		%through (indirect) closure 
		can impact the existence of illusion and our complexity results.
		%These type of closures have an impact on the  illusion. 
		For example, a network where any pair of nodes with a common connection is also connected, does not admit 1- majority illusion. 
		%It is not obvious and 

		\item Similarly, while our results show the hardness of the problems studied in the general case, identifying well-known graph parameterisations (e.g., treewidth of the network) under which they are fixed-parameter tractable is a natural direction of research that is motivated by our results. 
		Note that trees do not admit 1-majority illusion.
		%	 the tree-width of a network could be a promising candidate as a parameter to study.
		%, the study of parametrised complexity of the considered problems with respect to the level 
		%As mentioned above, the level of clustering of a network could also be a second parameter to take into consideration. 
	\end{itemize}
		\newpage
	\bibliographystyle{named}
	\bibliography{MajIll}
	\newpage
	\section*{Appendix}
	
		\subsection*{Proof of Theorem 1}
	
	\begin{lemma*}
		\normalfont \textbf{1.} A labelling of a variable gadget (considered as a separate network) induces a 1-majority illusion only if exactly one of the nodes in the bottom pair is labelled $r$. 
	\end{lemma*}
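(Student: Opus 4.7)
The plan is to combine a structural forced-colour step coming from the pendants of the gadget with a counting bound from the requirement that blue remain a strict network majority. The one subtle piece will be a local neighbourhood check that pins down which single extra red node is permitted.

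First, I would record the effect of every pendant (degree-one node) in the gadget. If $n$ has unique neighbour $n'$, then the neighbourhood majority at $n$ is simply $f(n')$, so $n$ being under illusion in a network with majority $b$ forces $f(n')=r$. Reading the figure, each of the four nodes $A,B,C,D$ of the top clique carries a pendant, so any illusion-inducing labelling has $f(A)=f(B)=f(C)=f(D)=r$. Next, since the gadget has $11$ nodes, $b$ being a strict majority caps the total red count at $5$; combined with the previous step this leaves at most one additional red node outside the $4$-clique, which already rules out both $p_i$ and $\neg p_i$ being red simultaneously.

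For the other direction I would show at least one of $p_i,\neg p_i$ must be red. Consider the open neighbourhood of the bottom-left clique node $C$, of size six, consisting of $A,B,D$, the pendant of $C$, and the two literal nodes $p_i,\neg p_i$. For $C$ (itself red) to be under illusion there must be a strict red majority in this neighbourhood, i.e.\ at least four reds. Three are already contributed by $A,B,D$, so the single available extra red must lie in $\{\text{pendant of }C,\,p_i,\,\neg p_i\}$. A symmetric argument applied to $D$ forces the extra red to lie in $\{\text{pendant of }D,\,p_i,\,\neg p_i\}$. Since the pendants of $C$ and $D$ are distinct, the intersection of these two sets is $\{p_i,\neg p_i\}$, so the extra red must be one of the literal nodes, as required.

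The main obstacle will be the bookkeeping in the last step: one must correctly read off from the figure which pendants are attached to which clique members, and check that both literal nodes really do lie in the open neighbourhood of each of $C$ and $D$. Once this is verified the argument is a short pigeonhole, and combining the two halves gives exactly one red node in the bottom pair; no delicate calculation is required.
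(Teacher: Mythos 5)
Your proof is correct and takes essentially the same approach as the paper's: pendants force the $4$-clique to be all red, the strict-majority constraint caps the reds at five, and a neighbourhood count at the two bottom clique nodes (each having three non-clique neighbours, of which only the literal nodes are shared) pins the single remaining red to $\{p_i,\neg p_i\}$. The paper phrases the last step as a contradiction (if both literals are blue, one of $C,D$ is tied) whereas you phrase it as an intersection of two constraint sets, but the content is identical.
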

	\begin{proof}
		Observe that a labelling of this gadget (considered as a separate network) induces a 1-majority illusion only if exactly one of the nodes in the bottom pair is labelled $r$. 
		To see that this observation holds, notice that all nodes in the 4-clique need to be labelled $r$ for the 1-majority illusion to hold, as all of them have at least one dependant.%, i.e., {a node only linked to them}. 
		Further, as there are eleven nodes in the gadget, only five of them can be labelled $r$ for a labelling to induce the 1-majority illusion in such a structure (considered as a separate network). But 1-majority illusion will not be induced for this structure if nodes $p_i$ and $\neg p_i$ situated at the bottom of the variable gadget  are both labelled $b$. {Indeed, if, in some labelling of the gadget which induces 1-majority illusion, $p_i$ and $\neg p_i$ were labelled $b$, then at least one of the nodes in the 4-clique would not be under illusion, which is not possible. This holds as, by the definition of 1-majority illusion, all of the nodes in the clique are labelled with $r$ and at least six nodes in the gadget are labelled $r$, in a labelling of a variable gadget which induces illusion. It implies that one of the nodes in the clique would be linked to three red nodes and three blue nodes.} %
	\end{proof}

	\begin{lemma*}
		\normalfont \textbf{2.} There exists a labelling of a clause gadget (not as a separate network) which induces 1-majority illusion with blue being a majority winner in this structure if and only if at least one node is linked to a literal node labelled $r$. 
	\end{lemma*}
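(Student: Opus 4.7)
The plan is to pin down the unique forced structure of any labelling of the clause gadget that induces 1-majority illusion and keeps blue as the strict majority on the gadget, and then to read off exactly when the three literal neighbours admit such a labelling.

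First I would fix the rigid skeleton. Each member of the 5-clique has at least one leaf dependant, and for that dependant to be under illusion its sole neighbour must be red, so all five clique nodes must be red. This accounts for five of the at most seven red nodes allowed in a 16-node gadget with blue strictly winning. Next, a short case check on the 3-clique formed by the co-dependants $M,J,N$ shows that if fewer than two of them are red then some co-dependant sees at most one red neighbour (the other two co-dependants together with its unique red 5-clique member) and is not under illusion; together with the upper bound of seven reds this forces \emph{exactly} two of $\{M,J,N\}$ to be red, and consequently all eight leaf dependants to be blue.

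Second, I would analyse the illusion condition at the 5-clique members themselves. The two members $C$ and $D$ with no co-dependant each have a neighbourhood consisting of the other four red clique members and a single blue leaf, which is already a strict red majority. For each of the three remaining members $A, B, S$, adjacent to one co-dependant and one literal node, the neighbourhood contains four red clique members, two blue leaves, one co-dependant and one literal; a strict red majority therefore requires at least one of its co-dependant or its literal neighbour to be red. Combining these three constraints with the ``exactly two of $\{M,J,N\}$ red'' count yields the equivalence: if all three literals were blue, each of $A, B, S$ would force its co-dependant to be red, making all three of $M, J, N$ red and contradicting the count; conversely, given any red literal neighbour, one can colour the two co-dependants attached to the other two 5-clique members red and leave the third co-dependant blue, which simultaneously meets every constraint and makes all 16 gadget nodes under illusion.

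The main obstacle is the final combinatorial step, where the global count bound on reds, the triangle structure of the co-dependants, and the three independent ``literal-or-co-dependant'' constraints at $A, B, S$ must all be reconciled at once. This simultaneous balance is precisely what the specific design of the clause gadget --- with one literal attached to each co-dependant-bearing member of the 5-clique --- is engineered to enforce, and making this correspondence explicit is the heart of the argument.
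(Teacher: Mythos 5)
Your argument is correct and covers both directions carefully, but it is organized somewhat differently from the paper's. For the ``only if'' direction, you first pin down the full rigid structure: use the co-dependant triangle constraint (at least two of $M, J, N$ red) plus the budget of seven reds to force exactly two red co-dependants and all eight leaf dependants blue, and then observe that if all three literals were blue, each of $A, B, S$ would need its own co-dependant red, forcing all three of $M,J,N$ red and contradicting the count. The paper instead argues more directly and never needs the triangle analysis for this direction: with all literals blue, each of $A, B, S$ must have at least one red node among its two leaf dependants and its co-dependant (otherwise it ties at $4$--$4$ and is not under illusion), and since these three sets of three private nodes are pairwise disjoint, this already forces at least $5 + 3 = 8$ reds among $16$ nodes, so blue cannot be a strict majority --- no triangle or budget bookkeeping required. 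Your route buys a sharper structural characterization of every inducing labelling (exactly $7$ reds in a prescribed pattern), which is nice but more than the lemma needs; the paper's route is the shorter counting bound. Both constructions for the ``if'' direction are essentially identical.
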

	
	%	Moreover, let us show that there exists a labelling of this gadget (not as a separate network) which induces 1-majority illusion with blue being a majority winner in this structure if and only if at least one node is linked to a literal node labelled $r$. 
	\begin{proof}
		Suppose that there is such a node in the clause gadget, call it $L$. {Then a labelling inducing 1-majority illusion in the gadget can be constructed as in Figure 3, where $L$ is the left-top member of the clique.}
		On the contrary, if all literal nodes linked to members of a clause gadget were labelled $b$, the illusion could not hold, as then for all members of the gadget linked to literal nodes we would need to have at least one dependant or co-dependant labelled $r$. Otherwise, we would have at least as many red as blue members of these nodes' respective neighbourhoods. But then, as in a labelling of the gadget inducing 1-majority illusion we would also need the 5-clique to be labelled $r$, in such a labelling there would be at least as many red as blue nodes in the gadget. 
	\end{proof}

	\begin{lemma*}
		\normalfont \textbf{3.} In a labelling of $E_{\varphi}$ which induces 1-majority illusion every variable gadget is  of type A or type B.
	\end{lemma*}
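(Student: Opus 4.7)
The plan is to exploit the tight counting argument already set up just above the lemma, combined with the local illusion constraints imposed by the 4-clique inside each variable gadget. First I would recall that any labelling $f$ of $E_\varphi$ inducing 1-majority illusion has exactly $I_\varphi = 6m+9n-1$ red nodes: blue is the strict majority and has margin of victory at most 1, while the per-gadget lower bounds $(5$ reds per variable gadget, $7$ per clause gadget, and all $m+2n-1$ nodes of the balance gadget$)$ already add up to $6m+9n-1$. Consequently each variable gadget must contain exactly 5 red nodes, with the two literal nodes $p_i$ and $\neg p_i$ counted among its 11 nodes.

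Next I would argue that the 4-clique at the top of the gadget must be entirely red: each of its members carries at least one dependant lying inside the gadget whose only neighbour is that clique member, and such a dependant can only be under illusion if its unique neighbour is red. This fixes 4 of the 5 available reds, leaving exactly one more red to be placed among the remaining 7 nodes (the five dependants/internal nodes and the two literals $p_i, \neg p_i$). The crucial step is to show the fifth red must be one of the literal nodes. I would focus on the two clique members $C$ and $D$ that are each adjacent to both $p_i$ and $\neg p_i$ together with a single private dependant. Each of $C$ and $D$ has six neighbours inside the gadget, three of which (the other clique members) are already red. If both $p_i$ and $\neg p_i$ were blue, then to make $C$ under illusion its private dependant would have to be red, and symmetrically for $D$, consuming two extra reds — contradicting the fact that only one red remains to place. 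Hence at least, and therefore exactly, one of $p_i, \neg p_i$ is red, so the gadget's labelling is of type A or type B.

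The main obstacle is really just pinpointing the two clique members whose local degree forces the leftover red onto a literal node; once that structural observation is identified, the rest is routine bookkeeping. A brief remark is worth including: although the literal nodes have additional neighbours outside the gadget (their edges into clause gadgets), these external edges do not weaken the argument, because the illusion conditions used above concern only the interior, non-literal nodes whose neighbourhoods lie entirely within the gadget, and the counting constraint ``exactly 5 reds in this gadget'' is likewise purely internal.
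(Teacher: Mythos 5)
Your proof is correct and follows the same approach as the paper's, which also derives the result from the exactly-five-reds-per-variable-gadget budget (forced by the global count) together with the observation that the two bottom members of the 4-clique each need a fourth red neighbour. The paper states this in two terse sentences; you spell out explicitly why a private dependant cannot substitute for a red literal node (both bottom clique members must be served, and only one red remains to place), which is exactly the reasoning the paper leaves implicit.
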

	\begin{proof}
		Indeed, in such a gadget at least one literal node needs to be labelled red for the bottom nodes in the 4-clique to be under illusion. On the other hand, one of them needs to be blue, as otherwise blue would not be the majority colour in $E_{\varphi}$. 
	\end{proof}
	
	\begin{lemma*}\label{lemma:RigidEncoding}
		\normalfont \textbf{5.} For every 3-CNF formula $\varphi$, $k \leq I_{\varphi}$ and any labelling $f$ of $E_{\varphi}$ such that $R_f = I_{\varphi} - k$, the number of nodes under illusion in $E_{\varphi}$ under $f$ is at most $| N| - k $.
	\end{lemma*}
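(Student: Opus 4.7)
The plan is to reduce the global bound to a sum of local bounds, one for each gadget composing $E_{\varphi}$. Write the vertex set of $E_{\varphi}$ as the disjoint union of the $m$ variable gadgets $V_1, \ldots, V_m$, the $n$ clause gadgets $C_1, \ldots, C_n$, and the balance gadget $B$. For each gadget $g$, let $r_g$ be the number of its nodes labelled red by $f$, and let $\ell_g$ be the number of its nodes that are \emph{not} under illusion. Define the canonical red counts $r^{*}_V = 5$, $r^{*}_C = 7$, and $r^{*}_B = m + 2n - 1$; these are exactly the counts established earlier in the paper and they sum to $I_{\varphi}$. Since the gadgets partition the vertex set, $\sum_g \ell_g = |N| - I$, where $I$ is the total number of illusioned nodes, and $\sum_g r_g = R_f$. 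Hence it is enough to prove the local inequality $\ell_g \geq r^{*}_g - r_g$ for every $g$; summing then gives $|N| - I \geq I_{\varphi} - R_f = k$, i.e., $I \leq |N| - k$.

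These local inequalities are trivial whenever $r_g \geq r^{*}_g$, so the content lies in the regime $r_g < r^{*}_g$, where a deficit of canonical red nodes must be paid for by non-illusioned nodes inside the gadget. For a variable gadget, each of the four $4$-clique members carries a blue dependant of degree $1$, so any $4$-clique member that is blue immediately produces one non-illusioned dependant; and when the deficit sits in the bottom literal pair (both literal nodes blue, rather than one red as in Type A or B), the two lower $4$-clique nodes each see three red clique-neighbours and three blue neighbours, have no strict majority winner in their neighbourhood, and are therefore not under illusion. A short case split over where the deficit lies yields $\ell_{V_i} \geq 5 - r_{V_i}$. For a clause gadget, the $5$-clique members again have dependants of degree $1$, and the three co-dependants form a triangle whose illusion status forces at least two of them to be red; since the external literal edges can only push the top three $5$-clique members \emph{more} firmly into illusion, any deficit in $r_{C_j}$ must be covered by at least $7 - r_{C_j}$ non-illusioned internal nodes. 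For the balance gadget, its pair/triple structure is such that any blue member fails to have a red majority in its own neighbourhood and so is not under illusion, giving $\ell_B \geq r^{*}_B - r_B$ directly.

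The main obstacle is the clause gadget analysis, because the deficit can be distributed across the $5$-clique, the co-dependants, and the external literal connections in several ways, and one must verify uniformly that each unit of deficit is matched by a distinct non-illusioned node. I expect this to be a finite case distinction along the lines of the paper's earlier accounting of which internal nodes must be red in order for the $5$-clique members, their dependants, and the co-dependant triangle to all simultaneously be under illusion. Once the three local inequalities are in place, the summation is immediate and the lemma follows.
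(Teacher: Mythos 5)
Your decomposition is genuinely different from the paper's, and arguably cleaner as a plan, but it has a flawed step in the balance gadget and leaves the hard case analysis for the clause gadget unfinished.

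The paper does \emph{not} partition by gadget. It partitions the vertex set by node \emph{type}: the set of all $4$-clique members across variable gadgets, the set of all $5$-clique members, the set of all literal nodes, the set of all co-dependants, the balance gadget, and the set of all degree-$1$ dependants. It then relates blue counts in each of these cross-gadget sets to counts of non-illusioned nodes, and handles the literal-to-clause interaction via an auxiliary set $I_C$ (those $4$-clique members adjacent to two blue literal nodes but nonetheless under illusion) in a somewhat delicate global accounting. Your gadget-by-gadget plan with a per-gadget inequality $\ell_g \geq \max(0, r^{*}_g - r_g)$, summed over gadgets, is a different route. It is legitimate as a plan because the summation step is sound and because the non-illusioned nodes you propose to count in each variable gadget (degree-$1$ dependants and lower $4$-clique members) have purely internal neighbourhoods, so their illusion status is decided inside the gadget.

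That said, two things need fixing. First, the balance-gadget justification is wrong as stated: a blue node whose only neighbour is red \emph{does} see a red majority and \emph{is} under illusion. The correct observation is the dual one: a red node paired with a blue neighbour sees a blue majority and is not under illusion, and in a pair or triple with $b$ blue members there are always at least $b$ non-illusioned members (this can be checked directly for $b=0,1,2$ in a pair and $b=0,1,2,3$ in a triple). The inequality you want is true, but your stated reason for it is backwards. Second, and more substantively, the clause-gadget inequality is the crux and you defer it. The phrasing ``the external literal edges can only push the top three $5$-clique members more firmly into illusion'' is also imprecise: a blue external literal pushes \emph{away} from illusion. What you actually want is the monotonicity statement that recolouring an external literal from blue to red never increases $\ell_{C_j}$, so the minimum of $\ell_{C_j}$ over external literal colourings is attained with all-red literals, and it suffices to verify $\ell_{C_j}\geq 7 - r_{C_j}$ in that worst case. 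You acknowledge this remains a finite case distinction but do not carry it out; until that is done (and similarly the variable-gadget case split), this is a proposal with the main computational work still outstanding rather than a proof.
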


	\begin{proof}
		Consider a 3-CNF formula $\varphi$ and $k \leq I_{\varphi}$, as well as a labelling $f$ of $E_{\varphi}$ such $R_f = I_{\varphi} - k$. We will show that the number of nodes under illusion in $E_{\varphi}$ under $f$ is at most $| N| - k $.		
		Let us denote as $A$ the set of all 4-cliques variable gadgets of $E_{\varphi}$. Moreover, let $B'$ be the set of all members of 5-cliques in clause gadgets. Also, let $C$ be the set of all literal nodes, and $D$ be the set of all co-dependants in clause gadgets. Further, let $E$ be the set of all nodes in the balance gadget, and $F$ be the set of all other nodes. Observe that each node in $F$ is a dependant of some node outside of $F$. 		
		Let us state some initial observations. Firstly, notice that $E_{\varphi} = A \cup B' \cup C \cup D \cup E \cup F$. Moreover, by construction of $E_{\varphi}$, $|A|+|B'|+ \frac{|C|}{2} + \frac{2|D|}{3} + |E| = I_{\varphi}$. Further, we will present crucial properties of sets $A$, $B'$, $C$, $D$ and $E$.
		%	\begin{itemize}
			%		\item 
			Observe that as each node $i \in A$ has a dependant, then there is a set $N_A \subseteq F$ with $|N_A| = |B^A |$ such that for every $i \in N_A$, $i$ is not under illusion. Similarly, as each node $i \in B'$ has a dependant, then there is a set $N_B' \subseteq F$ with $|N_B'| = |B^{B'} |$ such that for every $i \in N_B'$, $i$ is not under illusion. 
			Further, Let  $M_C= |B^C| - \frac{|C|}{2} $ if $|B^C| - \frac{|C|}{2} >0 $ and let $M_C=0$ otherwise. Observe that as $|C|$ is even, $M_C$ is a natural number. Also, let $I_C$ be the set of members of $A$ which are linked to 2 blue literal nodes but are under illusion. Observe further that for a variable gadget corresponding to some $p_i$, if both literal nodes in the gadget are labelled blue, then each of the 2 members of $A$ linked to them is under illusion only if its dependant is labelled red. Thus, $2 \cdot M_c \geq |I_{\varphi}|$.
			Also, notice that for every triple $T$ of co-dependants in a clause gadget corresponding to some clause $C_i$, if less than 2 members of $T$ are labelled red, then all members of $T$ are not illusion. Moreover, if 2 of them are labelled red, then all nodes in $T$ are under illusion. Then, let $M_D=|B^D| - \frac{2 |D|}{3} $ if $|B^D| > \frac{2 |D|}{3}$, and let $M_D=0$ otherwise. Note that $|D|$ is divisible by 3, so $M_D$ is a natural number. Furthermore observe, that the number of nodes not under illusion in $D$ is at least $\frac{3}{2}  \cdot M_D $.
			Let us also observe that given the set $B^E$, the number of nodes not under illusion in $E$ is greater or equal to $|B^E|$. 
			%	\end{itemize}
		Furthermore, observe that as  $|A|+|B'|+ \frac{|C|}{2} + \frac{2|D|}{3} + |E| = I_{\varphi}$ and at most $ I_{\varphi}$ nodes are labelled red in $E_{\varphi}$. Also, as for every $i \in I_C$ there is a $j \in F$ labelled red, we have that $ |B^A|+|B^{B'}|+ |B^C| + |B^D| + |B^E| \geq k +\frac{|C|}{2} + \frac{|D|}{3} +|I_C| $. Also, notice that by properties of $A$, $B'$ and $E$, we have that the number of nodes not under illusion in $E_{\varphi}$ is at least $|B^A|+|B^{B'}|+|B^E|$. Further, there are at least $\frac{3}{2}  \cdot M_D $ additional nodes under illusion. Lastly, there are at least $2 \cdot M_C - | I_C |$ additional nodes under illusion which, as we observed before, is greater than 0. Let us now show that there are at least $k$ nodes not under illusion in $E_{\varphi}$. Suppose that $M_C \geq | I_C |$. Then, the claim follows immediately, as we know that $|B^A|+|B^{B'}|+|B^E|+ M_C +M_D \geq k$. Also, if $M_C < | I_C |$, then $|B^A|+|B^{B'}|+|B^E|+M_D >k$, which entails that the number of nodes not under illusion in $E_{\varphi}$ is at least $k$. 
	\end{proof}
	
	\begin{lemma*}\label{lemma:h-exists}
		\normalfont \textbf{6.} Let $q$ be a rational number in  $(\frac{1}{2}, 1]$, and $k>0$ be a natural number. Then, there exists a natural number $h^*$ such that $\frac{k+ h^*}{k+ 2h^*  }  \geq q$, but $\frac{k+ h^* -1}{k+ 2h^*  } < q $. 
	\end{lemma*}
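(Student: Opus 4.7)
The plan is to rewrite both fractional inequalities as linear constraints on $h$ and show that the resulting interval of admissible values always contains a nonnegative integer. Since $q > 1/2$, the quantity $2q - 1$ is strictly positive, so clearing denominators preserves the inequalities: $\frac{k+h}{k+2h} \geq q$ becomes $h \leq \frac{k(1-q)}{2q-1}$, which I would call $U$, and $\frac{k+h-1}{k+2h} < q$ becomes $h > \frac{k(1-q) - 1}{2q-1} = U - \frac{1}{2q-1}$, which I would call $L$. So the task reduces to exhibiting an integer in the half-open interval $(L, U]$.

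The key quantitative observation is that the length of this interval is $U - L = \frac{1}{2q - 1}$, which is at least $1$ because $q \leq 1$ implies $2q - 1 \leq 1$. Any half-open interval of length at least $1$ contains an integer, so I would take $h^* := \lfloor U \rfloor$. A short case split then finishes the argument: if $U$ happens to be an integer, then $h^* = U \in (L,U]$ trivially; otherwise $\lfloor U \rfloor > U - 1 \geq U - \frac{1}{2q-1} = L$, so again $h^* \in (L,U]$. Nonnegativity of $h^*$ follows because $k > 0$ and $1 - q \geq 0$ force $U \geq 0$, and so $h^*$ is a genuine natural number.

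I do not anticipate any real obstacle; the argument is essentially routine once the reformulation is made. The only mild subtlety is being careful with strict versus non-strict inequalities at the two endpoints, which is what dictates using $\lfloor \cdot \rfloor$ (not $\lceil \cdot \rceil$) and treating the upper endpoint as closed and the lower as open. The same closed-form expression for $h^*$ also makes the polynomial-time computability claim in the paragraph following the lemma immediate, since $U$ is a ratio of numbers whose bit-lengths are polynomial in those of $k$ and $q$, and a floor of such a rational is computable in polynomial time.
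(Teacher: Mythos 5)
Your proof is correct, and it takes a genuinely different route from the paper's. The paper defines $f(h) = \frac{k+h}{k+2h}$, observes $f(0)=1$, uses strict monotonicity and the $\frac{1}{2}$ lower bound to argue that a maximal $h^*$ with $f(h^*)\geq q$ exists, and then proves the second (strict) inequality by assuming it fails and deriving a contradiction via careful manipulation of three cross-multiplied inequalities. Your proof instead rewrites both inequalities directly as linear bounds $L < h \leq U$ with $U = \frac{k(1-q)}{2q-1}$ and $L = U - \frac{1}{2q-1}$, observes that $q\leq 1$ forces the interval length $U-L = \frac{1}{2q-1}$ to be at least $1$, and exhibits $h^* = \lfloor U\rfloor$ as an explicit element of $(L,U]$. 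The interval-length argument cleanly replaces the paper's by-contradiction step, and the closed form $h^* = \lfloor U\rfloor$ makes the polynomial-time computability claim that the paper asserts (but does not prove) in the body text entirely transparent; the paper's monotonicity argument establishes existence but leaves the reader to back out the formula. Both are correct, but yours is shorter, constructive, and better aligned with the downstream algorithmic use of the lemma.
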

	
	\begin{proof}
		Take a $k \in \mathbb{N}_+$ and a fraction $q \in \mathbb{Q} \cap (\frac{1}{2},1  ]$. Observe that if $\frac{a}{b}=1$, then the claim holds immediately. So, we will only consider fractions such that $\frac{a}{b}<1$. Then, we define a function $f: \mathbb{N} \rightarrow \mathbb{Q}$ such that for a natural $h$, $f(h)= \frac{k+h}{k+2h}$. Observe first that $f(0)=1$. Also, $f$ is strictly downwards monotone, and is bounded by $\frac{1}{2}$. But then, as $q \in (\frac{1}{2},1  ]$, there needs to exist a maximal $h$ such that $f(h) \geq q$, and as $f$ is strictly downwards monotone, $f(h+1) < q$. We denote such a number as $h^*$.
		%%%%
		Notice now that (1): $ \frac{k+h^*}{k+2h^*} \geq \frac{a}{b}$ by definition of $h^*$. Note further that if $\frac{k+h^*}{k+2h^*} = \frac{a}{b}$, then the claim holds immediately. Let us assume then that $\frac{k+h^*}{k+2h^*} < \frac{a}{b}$.  Further, suppose towards contradiction that (2): $\frac{k+ h^* -1}{k+ 2h^*  } \geq \frac{a}{b} $. Also observe that  (3) $ \frac{k+ h^* +1}{k+ 2h^* +2 } < \frac{a}{b} $.  Now, from (1) we get that $ b(k+h^*) \geq a(k + 2h^*)  $. Also, from (2) we get that $b(k+h^*-1 ) \geq a(k+2h^*)$, which is equivalent to $ bk+bh^*-b \geq ak +2ah^*  $, and also to $-bk -bh^* +b \leq -ak -2ah^*$. We denote this inequality as $\alpha$. Also, from (3) we have $b(k+h^*+1) <a(k+2h^*+2)   $, which is equivalent to $ bk + bh^* +b < ak +2ah^* +2a$. We denote this inequality as $\beta$ By adding $\alpha$ and $\beta$ we get that $ 2b \leq 2a$, which is impossible since $a<b$.
	\end{proof}
	
	\begin{theorem*} \normalfont \textbf{1. }
		\textsc{$q$-majority illusion} is NP-complete for every rational $q$ in  $(\frac{1}{2},1]$.
	\end{theorem*}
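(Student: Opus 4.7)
The plan is a polynomial-time reduction from 3-SAT. NP-membership is immediate, since for any candidate labelling one can count in polynomial time the nodes under illusion by tallying neighbourhood colours. For NP-hardness, the case $q = 1$ is already delivered by Lemma \ref{lemma:1-hard}, whose proof exhibits $E_\varphi$ as the required reduction.

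For $q \in (\frac{1}{2}, 1)$, I plan to build $E^q_\varphi$ by appending to $E_\varphi$ a collection of $h^* := h^*_{|E_\varphi|, q}$ mutually disjoint pairs of fresh nodes, each pair being a single edge on two new vertices with no other incidences. The polynomial-time computability of $h^*$ (noted after Lemma \ref{lemma:h-exists}) keeps the reduction polynomial. For the forward direction, given a satisfying assignment I apply Lemma \ref{lemma:1-hard} to obtain a 1-majority illusion labelling of $E_\varphi$, then colour each new pair with one red and one blue endpoint. Blue remains the global strict majority by exactly one (each pair is colour-balanced), each pair's blue endpoint sees only red and is thus under illusion, and each pair's red endpoint sees only blue and is not. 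The resulting illusion count is $|E_\varphi| + h^*$ out of $|E_\varphi| + 2h^*$, meeting the threshold $q$ by Lemma \ref{lemma:h-exists}.

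For the reverse direction, I assume $\varphi$ is unsatisfiable and, towards a contradiction, fix a labelling $f'$ of $E^q_\varphi$ witnessing $q$-illusion; assume WLOG that blue is the global strict majority. Inspecting the three pair configurations shows that the pair-contribution to illusion equals the number of red endpoints placed in pairs, so writing $r_M$ for reds inside $E_\varphi$ and using the global-majority constraint $|R_{f'}| \leq I_\varphi + h^*$ gives $I_{\text{pair}} \leq I_\varphi + h^* - r_M$. If $r_M > I_\varphi$ then $I_{\text{pair}} < h^*$ and the trivial bound $I_{\text{main}} \leq |E_\varphi|$ yields total strictly below $|E_\varphi| + h^*$; if $r_M \leq I_\varphi$ then Lemma \ref{lemma:RigidEncoding} gives $I_{\text{main}} \leq |E_\varphi| - (I_\varphi - r_M)$, and the two bounds sum to $|E_\varphi| + h^*$. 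The unsatisfiability of $\varphi$, combined with Lemma \ref{lemma:1-hard}, sharpens this to $|E_\varphi| + h^* - 1$, which by Lemma \ref{lemma:h-exists} is strictly below $q \cdot (|E_\varphi| + 2h^*)$, contradicting the assumption.

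The main obstacle I expect is the strict inequality in the boundary sub-cases where the two bounds $I_{\text{main}}$ and $I_{\text{pair}}$ are simultaneously tight at $|E_\varphi| + h^*$. Here the extra $-1$ has to be extracted from Lemma \ref{lemma:1-hard}: a labelling of $E_\varphi$ achieving Lemma \ref{lemma:RigidEncoding}'s bound tightly would exhibit a ``near-1-illusion'' structure that is impossible for unsatisfiable $\varphi$. Lemma \ref{lemma:h-exists}'s built-in single-unit slack ($\frac{k + h^* - 1}{k + 2h^*} < q$) is exactly what converts this $-1$ into the strict fraction gap needed to refute $q$-illusion, thereby closing the argument.
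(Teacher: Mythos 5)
Your proposal is essentially identical to the paper's appendix proof: same construction (append $h^*_{|E_\varphi|,q}$ disjoint edges to $E_\varphi$), same forward direction (overlay one-red-one-blue pairs on a $1$-illusion labelling of $E_\varphi$), and the same reverse-direction bookkeeping, bounding illusion in the pairs by their red count and illusion in $E_\varphi$ by Lemma~\ref{lemma:RigidEncoding}, with $|E_\varphi|=2I_\varphi+1$ and the global blue majority forcing total reds $\le I_\varphi+h^*$. You also flag the genuine delicate point: when $r_M<I_\varphi$, the two bounds sum to exactly $|E_\varphi|+h^*$ and the strict $-1$ must be squeezed out of unsatisfiability; the paper handles this rather tersely by writing $I_{\text{main}}\le M-k$ with $M<|E_\varphi|$ (where $M$ is the maximum illusion achievable in $E_\varphi$ with $I_\varphi$ reds), a claim slightly stronger than the literal statement of Lemma~\ref{lemma:RigidEncoding}, so your acknowledged hand-wave mirrors an imprecision already present in the paper rather than introducing a new gap.
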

	
	\begin{proof}
		Take any rational $q$ in $(\frac{1}{2},1]$. First, observe that \textsc{$q$-majority illusion} is in NP. We will now show that it is NP-hard by reduction from \textit{3-SAT}. 
		
		Consider a 3-CNF formula $\varphi$ with the set $P_{\varphi} = \{p_1, \dots, p_n \} $ of variables and the set $C_{\varphi}=\{C_1, \dots, C_m\}$ of clauses. Let us construct what we call a \emph{$q$-encoding} of $\varphi$. First, let $E_{\varphi}$ be a subnetwork of the $q$-encoding of $\varphi$. Moreover, construct $h_{|E_{\varphi}|, q}^*$ pairs of nodes, such that nodes in each such pair are connected to each other, but not to any other node in the network. We call this set of pairs $H$. Observe further that the $q$-encoding of $\varphi$ can be constructed in polynomial time. Also, by Lemma 6, the $q$-encoding of $\varphi$ is a $q$-majority illusion for some labelling $f$ if at least $|E_{\varphi}| + h^*_{|E_{\varphi}|, q}$ nodes are under illusion in $f$.
		
		Let us show that the $q$-encoding of $\varphi$ admits $q$-majority illusion if and only if $\varphi$ is satisfiable. First, suppose that $\varphi$ is satisfiable. Then observe that as $\varphi$ is satisfiable, by Lemma 4 $E_{\varphi}$ admits 1-majority illusion as a separate network. Hence, there is a labelling of the $q$-encoding of $\varphi$ such that exactly $I_{\varphi}$ nodes in $E_{\varphi}$, as well as one of nodes in each additional pairs are labelled red, and $|E_{\varphi}| +  h^*_{|E_{\varphi}|, q}$ nodes are under illusion. Hence, the $q$-encoding of $\varphi$ admits $q$-majority illusion.
		
		Suppose now that $\varphi$ is not satisfiable. Then, suppose that there is a labelling $f$ of the $q$-encoding of $\varphi$ which induces $q$-majority illusion. Let us first observe that if less than $h^*_{|E_{\varphi}|, q}$ are labelled red in $H$, then $f$ does not induce $q$-majority illusion. Indeed, if it was the case, then less than $h^*_{|E_{\varphi}|, q}$ nodes in $H$ would be under illusion, and hence the number of nodes under illusion in the $q$-encoding of $\varphi$ would be strictly smaller than $|E_{\varphi}| + h^*_{|E_{\varphi}|, q}$. But then, as $f$ induces $q$-majority illusion, at least $h^*_{|E_{\varphi}|, q}$ are labelled red in $H$. So, the number of nodes labelled red in $E_{\varphi}$ is smaller or equal to $I_{\varphi}$. If it is equal to $I_{\varphi}$, then the number of nodes under illusion in $H$ is  $h^*_{|E_{\varphi}|, q}$, but as $\varphi$ is not satisfiable, not all members of $E_{\varphi}$ are under illusion, and hence $f$ does not induce $q$-majority illusion. Now, suppose that less than $I_{\varphi}$ nodes are labelled red in $|\varphi|$. Let $k = |I_{\varphi} - R_{\varphi}|$. Further, let us denote as $M$ the maximum number of nodes under illusion in $E_{\varphi}$ if $I_{\varphi}$ nodes are labelled red in this subnetwork. Now, by Lemma \ref{lemma:RigidEncoding} we have that the number of nodes under illusion is at most $M-k$. But then, the number of nodes labelled red in $H$ is at most $h^*_{|E_{\varphi}|, q} +k$, and hence the number of nodes under illusion in the $q$-encoding of $\varphi$ is at most $ M-k + h^*_{|E_{\varphi}|,q} $, which is smaller than $|E_{\varphi}| +  h^*_{|E_{\varphi}|,q}$ since $M < |E_{\varphi}|$.

	\end{proof}
	
	\subsection*{Proof of Theorem 2}
	
	\begin{lemma*}\label{lemma:UpExists1}
		\normalfont \textbf{7.} For every pair of natural numbers $m,k>0$ and any rational number $q$ in $(0,1)$ such that $\frac{m}{k}<q$ there exists an $h$ such that $\frac{m+h}{k+h+4} < q$ but $\frac{m+h+1}{k+h+4} \geq q$.
	\end{lemma*}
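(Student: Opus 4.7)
The plan is to recast the two desired inequalities as a single statement about a linear function of $h$, and then use a discrete intermediate value argument that relies crucially on the step size being strictly less than $1$.

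First I would introduce the function $F(h) = q(k+h+4) - (m+h)$, which I would regard as defined on $\mathbb{N}$. The key observation is that the two inequalities $\frac{m+h}{k+h+4} < q$ and $\frac{m+h+1}{k+h+4} \geq q$ are equivalent (after clearing the positive denominator $k+h+4$) to $0 < F(h) \leq 1$. So the whole lemma reduces to showing that $F$ takes a value in $(0,1]$ on some non-negative integer $h$.

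Next I would record two monotonicity/boundary facts. Since $F(h+1) - F(h) = q - 1$, the function $F$ strictly decreases by exactly $1-q$ at each integer step, and in particular $F(h) \to -\infty$ as $h \to \infty$. At $h = 0$, since $\tfrac{m}{k} < q$ gives $m < qk \leq q(k+4)$, we have $F(0) = q(k+4) - m > 0$.

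Now comes the main argument. Let $h^*$ be the smallest non-negative integer with $F(h^*) \leq 1$; this is well-defined because $F(h) \to -\infty$. If $h^* = 0$ then $0 < F(0) \leq 1$ and we are done. Otherwise $F(h^* - 1) > 1$, and
\[
F(h^*) = F(h^*-1) - (1-q) > 1 - (1-q) = q > 0,
\]
so again $F(h^*) \in (0,1]$. In either case, setting $h = h^*$ satisfies $0 < F(h) \leq 1$, which translates back into the two required inequalities.

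The only potentially delicate point is the last displayed estimate, where it is essential that the step size $1-q$ is \emph{strictly} less than $1$ (guaranteed by $q \in (0,1)$): this is precisely what prevents $F$ from jumping over the length-one window $(0,1]$. Everything else is a routine algebraic unpacking of the inequalities, so I would expect the whole proof to take only a few lines once $F$ is introduced.
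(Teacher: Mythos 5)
Your proof is correct. The core idea — find the first $h$ at which a monotone quantity crosses a threshold, then argue the step size is too small to jump the window — is the same discrete intermediate-value argument the paper uses, but your execution is cleaner and more transparent. The paper works directly with $f(h) = \frac{m+h}{k+h+4}$, shows it is increasing and bounded by $1$, picks the first $h^\#$ with $f(h^\#+1) \geq q$, and then establishes $f(h^\#+1) \geq q$ together with $f(h^\#) < q$ via a clearing-denominators contradiction (writing $q = a/b$ and adding two inequalities to get $a < 0$). Your linearization $F(h) = q(k+h+4)-(m+h)$, which decreases by exactly $1-q$ per step, makes the mechanism that the paper's $\alpha/\beta$ contradiction is secretly exploiting — that the per-step change is strictly less than the length-one target interval $(0,1]$ — explicit and avoids the $a/b$ bookkeeping entirely. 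Both are fine; yours is the one I would rather read.
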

	
	\begin{proof}
		Take any such $k,m$ and $q$ , and let $q=\frac{a}{b}$. We define a function $f_{m,k,q}: \mathbb{N} \rightarrow \mathbb{Q}$ such that for a natural number $h$, $f_{k,m,q}=  \frac{m+h}{k+h+4}$. Firstly observe that as $\frac{m}{k}<q$ it holds that $f_{m,k,q}(0)<q$. Moreover, observe that $f_{k,m,q}$ is strictly increasing and bounded by 1. Therefore, there exists an $h \in \mathbb{N}$ such that $f_{m,k,q}(h) < q$ while $f_{m,k,q}(h+1) \geq q$. We call such a number $h^{\#}$.
		Then, suppose towards contradiction that $\frac{m+h^{\#}+1}{k+h^{\#}+4} < \frac{a}{b}$. Then, we have that $b(m+h^{\#}+1)<a(k+h^{\#} +4)$, which is equivalent to $a(m+h^{\#})+4a > b(m+h^{\#})+b$. We denote this inequality as $\alpha$. Additionally, as $f_{k,m,q}(h^{\#}+1) \geq q$, we know that $\frac{m+h^{\#}+1}{k+h^{\#}+5} \geq \frac{a}{b}  $. So, $a(k+h^{\#}+5) \leq b(m+h^{\#}+1)$, and thus $-a(k+h^{\#}+5) \geq -b(m+h^{\#}+1)$. This is equivalent to $ -a(k+h^{\#})-5a \geq -b(m+h^{\#}+1) $. We denote this inequality as $\beta$. By adding $\alpha$ and $\beta$ we get that $-a \geq 0$, so $a \leq 0$. But this is impossible since $\frac{a}{b} > 0$.

	\end{proof}

	\begin{lemma*}\label{lemma:DownExist1}
		\normalfont \textbf{8.} For every rational number $q\in (0,1)$ and $m,k \in \mathbb{N}$ such that $\frac{m}{k} \geq q$ there is a natural $h$ such that $\frac{m}{k+h} < q$, but $\frac{m+1}{k+ h} \geq q$.
	\end{lemma*}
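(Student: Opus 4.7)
The plan is to mimic the proof of Lemma 7 (the companion statement for pump-up gadgets), using the monotonicity of the relevant one-parameter family of fractions and then a small integer argument to close the gap. Fix such $m$, $k$, and $q=\frac{a}{b}$ with $0<a<b$ (so that $b-a\geq 1$, since $a,b$ are positive integers with $q<1$). I would define $g:\mathbb{N}\to\mathbb{Q}$ by $g(h)=\frac{m}{k+h}$. This function is strictly decreasing, satisfies $g(0)=\frac{m}{k}\geq q$ by hypothesis, and tends to $0$ as $h\to\infty$, so there is a \emph{minimal} $h\geq 1$ with $g(h)<q$. I call this $h$ the candidate and verify that it witnesses the lemma.

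By minimality of $h$ we have $g(h-1)\geq q$, i.e.\ $\frac{m}{k+h-1}\geq \frac{a}{b}$, which rearranges to $bm\geq a(k+h-1)=a(k+h)-a$, and hence
\begin{equation*}
bm + a \;\geq\; a(k+h).
\end{equation*}
Since $b-a\geq 1$, adding $b-a$ to the left-hand side preserves the inequality and yields $bm+b\geq a(k+h)+(b-a)\geq a(k+h)$, i.e.\ $b(m+1)\geq a(k+h)$. This is exactly $\frac{m+1}{k+h}\geq \frac{a}{b}=q$, giving the second desired inequality, while $\frac{m}{k+h}<q$ holds by construction.

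There is no real obstacle here beyond being careful with the two boundary cases, namely $\frac{m}{k}=q$ (where one must use $h\geq 1$, not $h=0$) and the fact that $q<1$ is what guarantees $b-a\geq 1$ and therefore the single-unit jump from $m$ to $m+1$ suffices to cross the threshold. Both are addressed by the setup above. Since the only facts used are monotonicity of $g$, the hypothesis $\frac{m}{k}\geq q$ (for the existence of the minimal crossing point), and the integrality of $a,b$ with $a<b$, the argument is essentially mechanical once structured as a minimal-crossing-point argument.
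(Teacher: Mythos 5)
Your proof is correct and follows essentially the same strategy as the paper's: define the strictly decreasing function $g(h)=\frac{m}{k+h}$, locate the minimal $h$ where it first drops below $q$, and then exploit the integrality of $a<b$ to conclude $\frac{m+1}{k+h}\geq q$. The only cosmetic difference is that the paper closes the argument by contradiction (adding two inequalities to derive $a\geq b$), whereas you argue directly via $bm+a\geq a(k+h)$ and $b-a\geq 1$; both are the same computation.
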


	\begin{proof}
		Take any such $m,k $ and $q$, and let $q=\frac{a}{b}$. We define a function $g_{m,k,q}: \mathbb{N} \rightarrow \mathbb{Q}$ such that for a natural number $h$, $f_{k,m,q}=  \frac{m}{k+h}$. Observe that $g_{m,k,q}(0) = \frac{m}{k}$ and that $g_{m,k,q}$ is strictly decreasing and bounded by 0. So, there exists a natural $h$ such that $g_{m,k,q}(h) < q$ but $g_{m,k,q}(h-1) \geq q $, as $q>0$. We will further call  $h^+_{m,k,q}$.
		Then, suppose towards contradiction that $\frac{m+1}{k+ h^+_{m,k,q}} < \frac{a}{b}$. We will further call  $h^+_{m,k,q}$ as $h^+$ for brevity. Then, we have that $bm+b<ak + ah^+$, and so $ -bm-b > -ak - ah^+$. We denote this inequality as $\alpha$. Also, notice that by definition of $h^+$ we get that $\frac{m}{k+h^+-1} \geq \frac{a}{b}$. So, $bm \geq ak+ah^+-a$. We denote this inequality as $\beta$. By adding $\alpha$ and $\beta$ we get that $-b \geq -a$, and so $a \geq b$ which is impossible since $\frac{a}{b}<1$.
	\end{proof}
	
	\begin{theorem*} \textbf{1.}
		\textsc{$q$-Illusion Elimination } is NP-complete for every rational $q \in (0,1)$.
	\end{theorem*}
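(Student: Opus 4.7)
The plan is to establish NP-completeness by reducing from 2P2N-SAT, using the construction $E_\varphi$ introduced in the main body together with a calibration gadget. Membership in NP is immediate: given a candidate edge set $E'$ with $|E \triangle E'| \leq k$, one can compute each agent's neighbourhood majority and verify in polynomial time whether strictly fewer than $q \cdot |N'|$ agents are under illusion in the resulting network.

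For hardness my starting point is the machinery already in place. Given a 2P2N formula $\varphi$, the construction $E_\varphi$ and Lemma \ref{lemma:RemovalWorks} jointly give a clean \emph{integer}-valued characterisation: $\varphi$ is satisfiable iff $E_\varphi$ admits an edge modification of size at most $b_\varphi = 6|P_\varphi|$ reducing the number of illusion nodes strictly below $r_\varphi = |P_\varphi|$. The task of the reduction is to convert this into a \emph{fractional} threshold at level $q$. I would build $E_\varphi^q$ by attaching a calibration gadget to $E_\varphi$, and set the edge-budget of the reduced instance to exactly $b_\varphi$.

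The choice of calibration gadget depends on whether $(I_\varphi - r_\varphi)/|N|$ lies strictly below $q$ or not, where $I_\varphi$ is the number of illusion nodes in $E_\varphi$ and $|N|$ its size. In the first case I attach a pump-up gadget of size $h^{\#}_{|N| - r_\varphi,\,b_\varphi,\,q}$ as provided by Lemma \ref{lemma:UpExists1}, which adds $h^{\#}+4$ illusion-bearing blue nodes and $4$ non-illusion red nodes. Lemma \ref{lemma:UpExists1} guarantees that the resulting illusion ratio sits at or above $q$ in $E_\varphi^q$ and drops strictly below $q$ the moment $r_\varphi$ additional nodes in the $E_\varphi$ part are rescued from illusion. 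In the complementary case I attach a pump-down gadget of size $h^{+}_{|N| - r_\varphi,\,b_\varphi,\,q}$ via Lemma \ref{lemma:DownExist1}, whose non-illusion nodes tip the ratio below $q$ after exactly $r_\varphi$ rescues. The polynomial-time computability of $h^{\#}$ and $h^{+}$ makes the whole construction polynomial.

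Correctness proceeds by the usual two directions. If $\varphi$ is satisfiable, I apply the modifications given by Lemma \ref{lemma:RemovalWorks} to the $E_\varphi$ part, leave the pump gadget untouched, and invoke Lemma \ref{lemma:UpExists1} or \ref{lemma:DownExist1} to conclude that the resulting illusion fraction is strictly below $q$. The main obstacle will be the converse direction: I must argue that no budget-$b_\varphi$ modification of $E_\varphi^q$ can drop the illusion fraction below $q$ unless it effectively rescues at least $r_\varphi$ nodes inside $E_\varphi$, which by Lemma \ref{lemma:RemovalWorks} would force $\varphi$ to be satisfiable. Concretely I would show that any budget spent on edges incident to the pump gadget is wasteful: in the pump-up case every blue node there has margin $-4$, so the number of rescues obtainable within $b_\varphi$ inside the gadget is strictly less than what the calibration leaves room for; in the pump-down case edge modifications inside the gadget either leave its nodes out of illusion as before or push additional nodes into illusion, never helping. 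Pinning down this robustness, together with the arithmetic tying $h^{\#}$, $h^{+}$, $b_\varphi$ and $r_\varphi$ so that no off-budget trick succeeds, is the core technical content of the reduction.
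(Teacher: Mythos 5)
Your proposal takes essentially the same route as the paper: reduce from 2P2N-SAT via $E_\varphi$, use Lemma~\ref{lemma:RemovalWorks} to characterise satisfiability by an integer budget/requirement pair, attach a pump-up or pump-down gadget sized via Lemmas~\ref{lemma:UpExists1} and~\ref{lemma:DownExist1} depending on whether $(I_\varphi - r_\varphi)/|N|$ is below or at/above $q$, and argue in the reverse direction that spending budget inside the pump gadget is always wasteful. The case split, the gadget sizing, the margin-$(-4)$ robustness argument for the pump-up gadget, and the appeal to the same three lemmas all match the paper's appendix proof.
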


	\begin{proof}
		
		Consider any rational $q \in (0,1)$. First, observe that \textsc{$q$-Illusion Elimination} is in NP. Let us further construct a network $E^q_{\varphi}$ for a 2P2N formula $\varphi$.
		The first component of $E^q_{\varphi}$ is $E_{\varphi}$. If $\frac{I_{\varphi} - r_{\varphi}}{|E_{\varphi}|} < q $, then  construct a pump up gadget for $k= h_{|N|- r_{\varphi}, b_{\varphi}, q}^{\#}$. Otherwise, construct a pump down gadget for $k'= h_{|N| - r_{\varphi}, b_{\varphi}, q}^{+}$. Let us now show that the answer to  \textsc{$q$-Illusion Elimination} for $E^q_{\varphi}$ and $b_{\varphi}$ is positive if and only if $\varphi$ is satisfiable.
		
		$(\Rightarrow)$ Suppose that $\varphi$ is satisfiable. We will show that the answer to $E^q_{\varphi}$ and $b_{\varphi}$ is positive. We denote as $|E^q_{\varphi}|$ the number of nodes in $E^q_{\varphi}$.  Let us first consider the case in which $\frac{I_{\varphi} - r_{\varphi}}{|E^q_{\varphi}|} < q $. Then observe that as $\varphi$ is satisfiable, by Lemma 6 we have that it is possible to find a subnetwork $E'_{\varphi}$ of $E_{\varphi}$ in which $b_{\varphi}$ edges are eliminated, and where illusion was eliminated from $r_{\varphi}$ nodes. But then, by Lemma 7 we get that $\frac {I_{\varphi} - r_{\varphi} + k'} {I_{\varphi} - r_{\varphi} + k' +4} <q $. So we can construct a netnetwork of $E^q_{\varphi}$ in which only $b_{\varphi}$ edges altered but $q$-majority illusion does not hold.
		Similarly, if   $\frac{I_{\varphi} - r_{\varphi}}{|E^q_{\varphi}|} \geq q $ we observe that by Lemma 8 we get that  $\frac {I_{\varphi} - r_{\varphi}}  {|E^q_{\varphi} | + k} < q$. So, we get that as we can eliminate illusion from $r_{\varphi}$ nodes in  $E_{\varphi}$ by modifying $b_{\varphi}$ edges. But then we can construct a subnetwork of $E^q_{\varphi}$ in which only $b_{\varphi}$ edges are removed but $q$-majority illusion does not hold.
		
		$(\Leftarrow)$ Suppose now that $\varphi$ is not satisfiable. We will show that the answer to the considered problem for $E^q_{\varphi}$ and $b_{\varphi}$ is negative. let us first consider the case in which $\frac{I_{\varphi} - r_{\varphi}}{|E^q_{\varphi}|} < q $. Notice that by Lemma 7  that the minimum number of nodes from which illusion needs to be removed for $q$-majority illusion not to hold in $E^q_{\varphi}$ is $r_{\varphi}$. Furthermore observe that in the pump up gadget, the minimum number of edges which is needed to be added to eliminate the illusion from a single node is greater than 4. Moreover, there is a set of nodes $S$ under illusion in $E_{\varphi}$ such that for every $i \in S$, illusion can be eliminated from $i$ by adding 3 edges, without pushing any node under illusion. Further, as $\varphi$ is not satisfiable, by Lemma 6 we get that at it is not possible to remove the illusion from at least $r_{\varphi}$ nodes in $E_{\varphi}$. But then, it is also not possible to remove the illusion from at least $r_{\varphi}$ in $E^q_{\varphi}$	The reasoning for the case in which $\frac{I_{\varphi} - r_{\varphi}}{|E^q_{\varphi}|} \geq q $ is symmetric.

	\end{proof}

	\subsection*{Proof of Theorem 3}

	Let us show that \textsc{Addition $q$-Illusion Elimination } is NP-complete for every rational $q \in (0,1)$. To do that, we will provide a reduction of \textsc{Addition $q$-Illusion Elimination } from 2P2N-SAT for every such $q$.
	
	Let us construct the labelled social network which we will call an \emph{encoding} of a formula $\varphi$ in 2P2N form, or $E_{\varphi} = (N,E,f)$. We assume that $\varphi$ has  a set of variables $P_{\varphi}=\{p_1, \dots, p_m\}$ and a set of clauses $C_{\varphi} = \{C_1, \dots, C_n   \}$.
	
	%Let us provide a reduction from the \textsc{2P2N-SAT} problem. There, an instance is a CNF formula $\varphi$ such that for every variable $p$ represented in $\varphi$, $p$ appears twice in the negative form and twice in the positive form. We will say that such a formula is in \emph{2P2N form}.   
	
	\paragraph{Variable, clause and balance gadgets}
	Let us describe what we call a \emph{variable gadget}. For every variable $p_i \in P_{\varphi}$ construct two triples of nodes labelled blue, $\{p_i^1, p_i^2, p_i^3\}$ and  $\{\neg p_i^1, \neg p_i^2, \neg p_i^3\}$. We say that the first of them corresponds to the literal $p_i$, while the second to the literal $\neg p_i$. We call nodes in such triples \emph{literal nodes}. Further, for the literal triple corresponding to a literal $L$, construct a node $A_L$, which we call an \emph{auxiliary node}. Let $A_L$ be linked to all literal nodes apart from the triple corresponding to $L$. Furthermore, for every auxiliary node $A_L$ construct a group of red nodes $R_L$ of size $6|P_{\varphi}| -3$.
	%$4 \cdot |P_{\varphi}|$ 
	and link them to $A_L$. Notice that by construction the illusion is eliminated from $A_L$ if it is linked to three additional blue nodes and that it is already linked to all literal nodes not corresponding to $L$. Also, for a variable $p_i$, construct a red node $E_{i}$, which we call an \emph{extra node}. Let it be linked to all literal nodes apart from the nodes corresponding to $p_i$ or $\neg p_i$. Furthermore, for every node $E_{p_i}$ construct a group of red nodes $R^E_{i}$ of size $6|P_{\varphi}| -5$. Notice that by construction of the gadget, the illusion is eliminated from $E_{i}$ if it is linked to one additional blue node and that it is already linked to all literal nodes not corresponding to $p_i$ or to $\neg p_i$.
	
	%	Finally, for every literal node $i$ construct a blue node $x_i$ linked only to $i$, and a red node $y_i$ only linked to $x_i$. Observe that then, by construction, a literal node   
	
	The construction for a variable $p_i$ is depicted in Figure 5.
	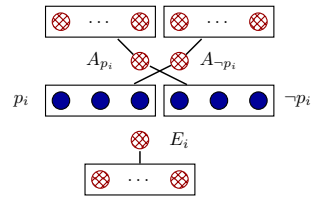
\begin{figure}[H]
		\centering
		
		\scalebox{0.7}{	\begin{tikzpicture}
				[->,shorten >=1pt,auto,node distance=0.75cm,
				semithick]

				\node[shape=circle,draw=black, fill=myblue] (A)  {};
				\node[shape=circle,draw=black, right of=A, fill=myblue] (B)  {};
				\node[shape=circle,draw=black, left of=A, fill=myblue] (A')  {};
				\node[draw,  fit=(A') (B) ](FIt1) {};
				\node[shape=circle,  left of=A'] (A'')  {$p_i$};

				\node[shape=circle,draw=black, right of=B, fill=myblue] (C)  {};
				\node[shape=circle,draw=black, right of=C, fill=myblue] (D)  {};
				\node[shape=circle,draw=black, right of=D, fill=myblue] (D')  {};
				\node[draw,  fit=(C) (D') ](FIt2) {};
				\node[shape=circle,  right of=D'] (D'')  {$\neg p_i$};

				\node[shape=circle,draw=myred, pattern=crosshatch, pattern color=myred, below of=B] (N)  {};
				\node[shape=circle, right of=N] (Z)  {$E_i$};
				
				\node[shape=circle,draw=myred, pattern=crosshatch, pattern color=myred, below of=N, left of=N] (N')  {};
				\node[shape=circle,  right of=N'] (N'')  {$\dots$};
				\node[shape=circle,draw=myred, pattern=crosshatch, pattern color=myred, right of=N''] (N''')  {};
				\node[draw,  fit=(N')  (N''') ](FIt3) {};
				
				%\node[shape=circle,draw=red, below of=N, right of=N] (N'''')  {};
				
				\node[shape=circle,draw=myred, pattern=crosshatch, pattern color=myred, above of=B] (E)  {};
				\node[shape=circle,draw=myred, pattern=crosshatch, pattern color=myred, right of=E] (F)  {};
				%	\draw [thick,-] (N) to  (N') ;
				%	\draw [thick,-] (N) to  (N'') ;
				%	\draw [thick,-] (N) to  (N''') ;
				%\draw [thick,-] (N) to  (N'''') ;
				
				\draw [thick,-] (F) to  (FIt1) ;
				\draw [thick,-] (FIt2) -- (E) ;
				\draw [thick,-] (FIt3) to  (N) ;

				%	[->,shorten >=1pt,auto,node distance=0.5cm,
				%	semithick]
				%	\draw [thick,-] (E) to  (C) ;
				%	\draw [thick,-] (E) to  (D) ;
				
				%\draw [thick,-] (F) to  (A) ;
				%\draw [thick,-] (F) to  (B) ;
				
				\node[shape=circle,draw=myred, pattern=crosshatch, pattern color=myred, above of=E] (G)  {};
				\node[shape=circle,  left of=G] (H)  {$\dots$};
				\node[shape=circle,draw=myred, pattern=crosshatch, pattern color=myred, left of=H] (I)  {};
				\node[draw, fit=(G)  (I) ](FIt4) {};
				
				%	\node[shape=circle,draw=red, left of=I] (J)  {};
				%	\node[shape=circle,  draw=red, left of=J] (K)  {};
				%	\node[shape=circle,  draw=red, left of=K] (AA)  {};
				%	\node[draw,  dashed, fit=(J) (AA) ](FIt5) {};
				
				%\draw [thick,-] (E) to  (G) ;
				%\draw [thick,-] (E) to  (H) ;
				%\draw [thick,-] (E) to  (I) ;
				\draw [thick,-] (E) to  (FIt4) ;
				\node[shape=circle, left of=E](Z1) {$A_{p_i}$};
				%	\draw [thick,-] (E) to  (J) ;
				%	\draw [thick,-] (E) to  (K) ;
				%	\draw [thick,-, bend left] (E) to  (FIt5) ;
				
				\node[shape=circle,draw=myred, pattern=crosshatch, pattern color=myred, above of=F] (L)  {};
				\node[shape=circle,  right of=L] (M)  {$\dots$};
				\node[shape=circle,draw=myred, pattern=crosshatch, pattern color=myred, right of=M] (q)  {};
				\node[draw,  fit=(L)  (q) ](FIt6) {};
				
				%	\node[shape=circle,draw=red, right of=q] (O)  {};
				%	\node[shape=circle, draw=red, right of=O] (P)  {};
				%	\node[shape=circle, draw=red, right of=P] (AAA)  {};
				%	\node[draw,  dashed, fit=(O) (AAA) ](FIt7) {};

				%	\draw [thick,-] (F) to  (L) ;
				%	\draw [thick,-] (F) to  (M) ;
				%	\draw [thick,-] (F) to  (q) ;
				\draw [thick,-] (F) to  (FIt6) ;
				\node[ right of =F](Z2) {$A_{\neg p_i}$};
				
				%\draw [thick,-] (F) to  (O) ;
				%\draw [thick,-] (F) to  (P) ;
				%	\draw [thick,-, bend right] (F) to  (FIt7) ;
		\end{tikzpicture}}
		\caption{Variable Gadget}\label{fig:VariableGadget3}
	\end{figure}
	
	Furthermore, let us define what we call a \emph{clause gadget}. For every clause $C_i \in C_{\varphi}$ construct what we call a \emph{verifier} node $v_{C_i}$. Further, let $L_{i}$ denote the number of literals in $C_i$. Construct a group of $6n- 3L_{i} +1$ red nodes and link them to $v_{C_i}$. Furthermore, let $v_{C_i}$ be linked to all literal nodes corresponding to literals not represented in $C_i$. Notice that $v_{C_i}$ is linked to $6n- 3L_{i}$ blue nodes. Hence, it is sufficient to link $v_{C_i}$ to one additional blue node to eliminate illusion from it, but it cannot be achieved by linking it to a literal node corresponding to a literal outside of $C_i$. %Also, it cannot be linked to any other blue node in the construction without pushing it into illusion.
	
	The construction of a  clause gadget is depicted in Figure 6.
	\begin{figure}[H]
		\centering
		
		\scalebox{0.5}{\begin{tikzpicture}
				[->,shorten >=1pt,auto,node distance=1.2cm,
				semithick]
				\node[shape=circle,draw=myred, pattern=crosshatch, pattern color=myred] (A)  {};
				\node[shape=circle,  right of=A] (Q) [fontscale=4] {$v_{C_k}$};
				\node[shape=circle,draw=myred, pattern=crosshatch, pattern color=myred, below of= A, left of=A] (B)  {};
				\node[shape=circle,draw=myred, pattern=crosshatch, pattern color=myred, below of=A] (C)  {};
				\node[shape=circle,  right of=C] (D)  {$\dots$};
				\node[shape=circle,draw=myred, pattern=crosshatch, pattern color=myred, right of=D] (E)  {};
				
				\draw [thick,-] (A) to  (B) ;
				\draw [thick,-] (A) to  (C) ;
				\draw [thick,-] (A) to  (D) ;
				\draw [thick,-] (A) to  (E) ;
				\node[draw,  dashed, fit=(A) (C)  (B) (C)(E) ](FIt1) {};
				
				\node[shape=circle,draw=black, fill=myblue, above of= A] (A')  {};
				\node[shape=circle,  right of=A'] (B')  {$\dots$};
				\node[shape=circle,draw=black, fill=myblue, right of= B'] (C')  {};
				
				\draw [thick,-] (A) to  (A') ;
				\draw [thick,-] (A) to  (B') ;
				\draw [thick,-] (A) to  (C') ;

		\end{tikzpicture}}
		\caption{Clause Gadget.}\label{fig:clausegadget3}
	\end{figure}
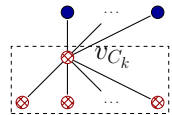
	Also, for every triple corresponding to a literal $L$, create what we call a \emph{balance gadget}. Let $B_{L}$ denote the number of clauses that $L$ is not in. 
	Then, for a node $v$ in the triple corresponding to $L$ construct and link to $v$ 
	a minimal number blue nodes such that after linking them to $v$, blue has majority of 2 in the neighbourhood of $v$. Namely, this number amounts to $B_L + 3|P_{\varphi}| -1$. Also, let each of these blue nodes be linked to a newly constructed red node. Observe that the newly connected blue nodes have the same number of blue and red neighbours.
	
	A balance gadget is depicted in Figure 7.
	\begin{figure}[H]
		\centering
		
		\scalebox{0.6}{	\begin{tikzpicture}
				[->,shorten >=1pt,auto,node distance=.8cm,
				semithick]

				\node[shape=circle,draw=black, fill=myblue] (A)  {};
				\node[shape=circle,draw=black, right of=A, fill=myblue] (B)  {};
				\node[shape=circle,draw=black, right of=B, fill=myblue] (Q)  {};
				\node[draw,  fit=(A) (Q) ](FIt1) {};
				\node[right of = FIt1 ](Z) {$L$};

				\node[shape=circle,draw=black, above of=A, fill=myblue] (C)  {};
				\node[shape=circle,  left of=C] (D)  {$\dots$};
				\node[shape=circle,draw=black, left of=D, fill=myblue] (E)  {};
				%\node[draw,  dashed, fit=(C)  (E) ](FIt1) {};
				
				\node[shape=circle,draw=myred, pattern=crosshatch, pattern color=myred, above of=C] (F)  {};
				\node[shape=circle,   left of=F] (G)  {$\dots$};
				\node[shape=circle,draw=myred, pattern=crosshatch, pattern color=myred, left of=G] (H)  {};
				%\node[draw,  dashed, fit=(F)  (H) ](FIt1) {};
				
				\draw [thick,-] (C) to  (F) ;
				\draw [thick,-] (D) to  (G) ;
				\draw [thick,-] (E) to  (H) ;
				
				\draw [thick,-] (C) to  (A) ;
				\draw [thick,-] (E) to  (A) ;

				\node[shape=circle,draw=black, above of=Q, fill=myblue] (I)  {};
				\node[shape=circle,  right of=I] (J)  {$\dots$};
				\node[shape=circle,draw=black, right of=J, fill=myblue] (K)  {};
				%\node[draw,  dashed, fit=(C)  (E) ](FIt1) {};
				
				\node[shape=circle,draw=myred, pattern=crosshatch, pattern color=myred, above of=I] (L)  {};
				\node[shape=circle,   above of=J] (M)  {$\dots$};
				\node[shape=circle,draw=myred, pattern=crosshatch, pattern color=myred, above of=K] (N)  {};
				%\node[draw,  dashed, fit=(F)  (H) ](FIt1) {};
				
				\draw [thick,-] (I) to  (L) ;
				\draw [thick,-] (J) to  (M) ;
				\draw [thick,-] (K) to  (N) ;
				
				\draw [thick,-] (I) to  (Q) ;
				\draw [thick,-] (K) to  (Q) ;
				
				\node[shape=circle,draw=black, below of=Q, fill=myblue] (R)  {};
				\node[shape=circle,  left of=R] (R')  {$\dots$};
				\node[shape=circle,draw=black, left of=R', fill=myblue] (R'')  {};
				%\node[draw,  dashed, fit=(C)  (E) ](FIt1) {};
				
				\node[shape=circle,draw=myred, pattern=crosshatch, pattern color=myred, below of=R] (R''')  {};
				\node[shape=circle,   left of=R'''] (R'''')  {$\dots$};
				\node[shape=circle,draw=myred, pattern=crosshatch, pattern color=myred, left of=R''''] (R''''')  {};
				
				\draw [thick,-] (B) to  (R) ;
				\draw [thick,-] (B) to  (R'') ;
				
				\draw [thick,-] (R'') to  (R''''') ;
				\draw [thick,-] (R) to  (R''') ;

		\end{tikzpicture}}
		\caption{Balance gadget for a literal $L$.}\label{fig: balance3}
	\end{figure}
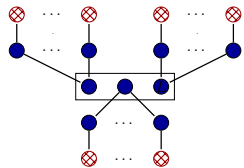
	
	Finally, connect all constructed red nodes other than verifier, auxiliary and extra nodes into a clique. Also, add the minimal number of disconnected blue nodes sufficient for blue to be the strict majority colour in the encoding.
	\paragraph{Budget and requirement}
	
	Let $I_{\varphi}$ denote the total number of nodes which are under illusion in the encoding of $\varphi$. In the constructed instance of \textsc{Bounded Illusion Elimination} we require that in a constructed social network $\textit{SN'}=(N,E,f)$ extending the encoding of $\varphi$,  $|\{n \in N: W^n_{\it SN'} =r \}| \leq I_{\varphi} - |C_{\varphi} | - 2 |P_{\varphi}|$ and that connections between not more than $|C_{\varphi}|+4|P_{\varphi}|$ pairs of nodes are added to the encoding. We call $ |C_{\varphi} | + 2 |P_{\varphi}|$ the \emph{requirement} and denote it as $r_{\varphi}$. We further call $|C_{\varphi}|+4|P_{\varphi}|$ the \emph{budget} and denote it as $b_{\varphi}$.  We say that such a network \textit{SN'} satisfies the budget and the requirement.
	
	\paragraph{Observations on networks satisfying the budget and requirement} 
	
	Let us show that in any such extension of the encoding of $\varphi$, illusion is eliminated from all clause nodes, extra nodes and a half of the auxiliary nodes. To see that it is enough to notice that, by construction, these nodes maximize the sum of margins of victory among the sets of nodes under illusion of size  $|C_{\varphi}| + 2 |P_{\varphi}|$, and that this sum is equal to $|C_{\varphi}|+4|P_{\varphi}|$. As a consequence of this observation we get that in any network extending the encoding of $\varphi$ satisfying the budget and requirement, no node is pushed into illusion.  
	
	Let us also observe that in a network extending the encoding of $\varphi$ satisfying the budget and requirement, for a variable $p_i$, illusion is eliminated from exactly one of $A_{p_i}$ and $A_{\neg p_i}$. Notice that as a consequence of this observation we get that the choice of auxiliary nodes from which illusion is eliminated corresponds to a valuation over $P_{\varphi}$, in which a literal $L$ is true if illusion has not been eliminated from $A_L$, and false otherwise.
	
	\begin{lemma*}
		\normalfont \textbf{10.} For every formula $\varphi$ in 2P2N form, $E_{\varphi}$ is a subnetwork of some network $E'_{\varphi}$ including at most $k$ edges not present in $E_{\varphi}$ and less than $b_{\varphi}$ are under illusion in $E'_{\varphi}$ if and only if $\varphi$ is satisfiable.
	\end{lemma*}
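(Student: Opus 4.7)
The plan is to mirror the proof of the analogous Lemma for edge modification, adapting it to the addition-only setting by exploiting the numerical match between the budget $b_\varphi = |C_\varphi| + 4|P_\varphi|$ and the minimum cost of eliminating illusion from $r_\varphi = |C_\varphi| + 2|P_\varphi|$ well-chosen red nodes: each auxiliary node needs three extra blue neighbours while each extra and verifier node needs one. The observations preceding the lemma already pin down that any $E'_\varphi$ satisfying budget and requirement must eliminate illusion from exactly the $|C_\varphi|$ verifier nodes, all $|P_\varphi|$ extra nodes, and exactly one auxiliary node per variable, consuming the whole budget with no room to push any new node into illusion. These observations do most of the structural heavy lifting in both directions.

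For the forward direction, given a model $V$, I build $E'_\varphi$ as follows. For each variable $p_i$, letting $L$ denote whichever of $p_i, \neg p_i$ is false under $V$, I add the three edges between $A_L$ and the literal nodes of $L$'s triple (precisely the literal nodes $A_L$ was not already adjacent to). I then add one edge between $E_i$ and the literal node $M^1$ of the complementary (true) literal $M$, so that $E_i$ gains its single needed blue neighbour without loading any literal node already touched by $A_L$. For each clause $C_k$, some true literal $M \in C_k$ exists since $V$ is a model; I add one edge between $v_{C_k}$ and $M^2$ or $M^3$. Because each literal appears in exactly two clauses by 2P2N and has three literal nodes, these verifier edges can be distributed so that no literal node receives two of them or collides with its extra-node edge. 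A tally yields $3|P_\varphi| + |P_\varphi| + |C_\varphi| = b_\varphi$ added edges and illusion eliminated from exactly $r_\varphi$ nodes, and the balance-gadget calibration ensures each literal node absorbs its new red edges within its margin, so no node is pushed into illusion.

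For the backward direction, suppose $E'_\varphi$ satisfies budget and requirement, and define $V$ so that $p_i$ is true iff the illusion at $A_{\neg p_i}$ is eliminated. For each clause $C_k$, the single new blue neighbour of $v_{C_k}$ cannot be a balance-gadget blue node or a disconnected blue node (either would itself be pushed into illusion by an incoming red connection), and $v_{C_k}$ is already adjacent to every literal node of a literal outside $C_k$; hence the new neighbour must be some literal node $L^j$ with $L \in C_k$. I claim $L$ is true under $V$: if instead $L$ were false, then $A_L$ would already have contributed a red edge to $L^j$, so the verifier's additional red edge would exceed the balance-gadget margin on $L^j$ and push it into illusion, contradicting the observation that no node is pushed into illusion. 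It follows that every clause contains a literal true under $V$, so $V$ models $\varphi$. The main technical obstacle is this last margin-accounting step: making the constants in the balance gadget, the auxiliary degree, and the 2P2N literal multiplicities line up tightly so that every literal node's margin is exactly consumed by legitimate edges while any attempt to connect a verifier to a false literal is genuinely forbidden.
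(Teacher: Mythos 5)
Your proposal is correct and takes essentially the same route as the paper's proof: the forward direction gives $A_L$ its three new blue neighbours from the false literal's triple, each verifier one true-literal node, and each extra node a spare true-literal node (feasible by 2P2N since each true literal sees at most two verifier edges across its three copies), spending exactly $b_\varphi$ while eliminating exactly $r_\varphi$ illusions; the backward direction reads a valuation off the eliminated auxiliary nodes and uses the tight literal-node margins to force each verifier's single new blue neighbour to be a node of a true literal in its clause. The margin bookkeeping you flag as the main obstacle is also the load-bearing step in the paper's argument and is handled there at roughly the same level of detail, so you have not taken a genuinely different route.
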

	
	\begin{proof}
		
		Take a formula $\varphi$ in 2P2N form.  Let us show that the answer to \textsc{Addition $q$Illusion Elimination}  given the encoding of $\varphi$ and specified requirement and budget is positive if and only if $\varphi$ is satisfiable. 
		
		$(\Rightarrow)$  Suppose that $\varphi$ is not satisfiable. Suppose further that there exists a network $\textit{SN}$ extending the encoding of $\varphi$ which satisfies the budget and requirement. Recall that the choice of auxiliary nodes from which illusion is eliminated corresponds to a valuation $V$ over $P_{\varphi}$. Further, since $\varphi$ is not satisfiable, $V$ is not a model of $\varphi$.  Therefore, there must exist a clause node $v_{c_i}$ such that for every literal $L$ in $C_i$, $L$ is false in $V$. So, all nodes corresponding to $L$ have been linked to $A_L$. Hence, by construction no blue node can be linked to $v_{c_i}$ without being pushed into illusion. Hence, illusion cannot be eliminated from  $v_{c_i}$ without some node being pushed into illusion. Hence, $\textit{SN}$ cannot satisfy the budget and the requirement.
		
		$(\Leftarrow)$ Suppose now that $\varphi$ is satisfiable. Let us show that it is possible to remove illusion from $|C |+ 2 |V|$ nodes without any node being pushed into illusion by adding at most $|C|+4|V|$ edges to the encoding. Take a model $M$ of $\varphi$, which exists since $\varphi$ is satisfiable. Notice that for every clause $C_i$ there is a literal $L$ in $C_i$ which is true in $M$. Denote such a literal as $L_i$. Observe that since $\varphi$ is in 2P2N, for every literal $L$ there are at most two clauses $C_j, C_k$ such that $L_j=L_k=L$. Let us now construct an extension of the encoding of $\varphi$ satisfying the budget and requirement. Link every clause node $v_{c_k}$ to a node corresponding to $L_k$, with no literal node having an added link to more than one clause node. Note that this is possible since every literal $L$ only appears twice in $\varphi$ and there are three nodes corresponding to $L$. Observe that, by construction, illusion is eliminated from $|C_{\varphi}|$ clause nodes and no literal node was pushed into illusion. Further, notice that since $M$ is a model of $\varphi$, for every variable $p_i$, there is a literal $L_i \in \{p_i, \neg p_i \}$ such that nodes corresponding to $L$ where not used to eliminate illusion from clause nodes. Then, for every variable $p_i$ select such a literal $L_i$. Link all nodes corresponding to $L_i$ to $A_{L_i}$. Finally, observe that for every literal $L$ such that the nodes corresponding to $L$ were not linked to $A_L$, there is at least one node corresponding to $L$ which can be linked to a red node without being pushed into illusion. For every variable $p_i$, link such a node to $E_{p_i}$.   Notice that then the illusion was eliminated from $2|V|$ nodes by adding $4|V|$ edges to the encoding of $\varphi$. So, the illusion has been removed from $|C_{\varphi}|+ 2 |P_{\varphi}|$ nodes without any node being pushed into illusion by adding at most $|C_{\varphi}|+4|P_{\varphi}|$ edges to the encoding. Hence, the defined extension of the encoding of $\varphi$ satisfies the budget and requirement. 
		
	\end{proof}

	\begin{lemma*}\label{lemma:Component1}
		\normalfont \textbf{11.} \textsc{Bounded $q$-Illusion Elimination } is NP-complete for every rational $q \in (0,1)$.
	\end{lemma*}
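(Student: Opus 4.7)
The plan is to mirror closely the proof of Theorem 2, adapted to the addition-only setting, via a reduction from \textsc{2P2N-SAT}. Membership in NP is immediate: given a candidate extension $E' \supseteq E$ with $|E'| - |E| \leq k$, one verifies in polynomial time that strictly fewer than $q|N|$ nodes of $(N, E', f)$ are under illusion, by directly counting majorities in each open neighbourhood.

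For the hardness reduction, given a 2P2N formula $\varphi$ I would first construct the encoding $E_\varphi$ described just prior to Lemma 10, with the variable, clause and balance gadgets tailored to the addition setting, together with the budget $b_\varphi = |C_\varphi| + 4|P_\varphi|$ and requirement $r_\varphi = |C_\varphi| + 2|P_\varphi|$. By Lemma 10, an extension of $E_\varphi$ of at most $b_\varphi$ added edges that eliminates illusion from at least $r_\varphi$ nodes without pushing any other node into illusion exists iff $\varphi$ is satisfiable. To calibrate the global illusion fraction to $q$, I would then attach a pump gadget to $E_\varphi$: if $(I_\varphi - r_\varphi)/|E_\varphi| < q$ I attach a pump-up gadget whose size is the parameter supplied by Lemma 7, and otherwise a pump-down gadget whose size comes from Lemma 8. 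Call the resulting labelled network $E_\varphi^q$ and ask whether a $q$-majority illusion in it can be eliminated by adding at most $b_\varphi$ edges.

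Correctness in the forward direction is direct: given a satisfying assignment, Lemma 10 supplies the intended $b_\varphi$ edge additions inside $E_\varphi$ that clear $r_\varphi$ illusion nodes without side-effects, the attached gadget is untouched, and the chosen pump parameter (by Lemmas 7 and 8) ensures that removing exactly this many illusion nodes pushes the overall illusion fraction strictly below $q$. For the reverse direction, assuming a valid extension of $E_\varphi^q$ exists within budget, I would argue that every one of the $b_\varphi$ added edges must lie inside $E_\varphi$: in a pump-up block each blue illusion node has margin of victory $-4$, so four blue neighbours must be added just to move a single node out of illusion, while the tightness of $h^\#$ from Lemma 7 leaves slack for only one extra illusion-elimination beyond the $r_\varphi$ inside $E_\varphi$; and a pump-down block contains no illusion nodes at all. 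Combined with the observation (preceding Lemma 10) that no network within budget and requirement pushes a new node into illusion, this forces the extension to be of the form characterised by Lemma 10, which then certifies a satisfying assignment for $\varphi$.

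The main obstacle will be formalising the ``budget cannot leak into the pump gadget'' step with the sharp parameters from Lemmas 7 and 8. One has to show that reallocating even a single edge from inside $E_\varphi$ to the pump gadget either fails to drop the illusion fraction below $q$, or fails to eliminate enough illusion nodes inside $E_\varphi$, because of the rigid margin-of-victory structure of the gadget and the tight inequalities $\frac{m+h}{k+h+4} < q \leq \frac{m+h+1}{k+h+4}$ (respectively $\frac{m}{k+h} < q \leq \frac{m+1}{k+h}$). Once this tightness is in place, the NP-completeness of \textsc{Bounded $q$-Illusion Elimination} follows, and a symmetric construction using removal-friendly balance gadgets yields the analogous result for the removal variant asserted in Theorem 3.
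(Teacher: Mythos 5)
Your proposal follows essentially the same approach as the paper's proof: a reduction from 2P2N-SAT via the addition encoding $E_\varphi$ with budget $b_\varphi$ and requirement $r_\varphi$ (Lemma 10), calibrated to the target fraction $q$ by attaching a pump-up or pump-down gadget sized by Lemma 7 or Lemma 8, and then arguing in the reverse direction that the budget cannot productively leak into the pump gadget because illusion-elimination there costs strictly more edges per node than the 3-edge nodes available inside $E_\varphi$. The one small imprecision is that the pump-up blue nodes have margin $-4$, so eliminating illusion there requires at least four (the paper says more than four) added blue neighbours, but since either figure exceeds 3 the argument goes through unchanged.
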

	
	\begin{proof}
		
		Consider any rational $q \in (0,1)$. First, observe that \textsc{Bounded Removal $q$-Illusion Elimination } is in NP. Let us further construct a network $E^q_{\varphi}$ and a number $k$ for a 2P2N formula $\varphi$ such that the answer to  \textsc{Bounded Removal $q$-Illusion Elimination } for $E^q_{\varphi}$ and $k$ is positive if and only if $\varphi$ is satisfiable. 
		
		In the considered instance we will check whether we can find a subnetwork of $E^q_{\varphi}$ in which edges have been remove between at most  $b_{\varphi}$ pairs of nodes and in which $q$-majority illusion does not hold. The first component of $E^q_{\varphi}$ is $E_{\varphi}$. If $\frac{I_{\varphi} - r_{\varphi}}{|E_{\varphi}|} < q $, then  construct a pump up gadget for $k= h_{|E_{\varphi}|- r_{\varphi}, b_{\varphi}, q}^{\#}$. Otherwise, if  $\frac{I_{\varphi - r_{\varphi}}}{|E_{\varphi}|} \geq q $, construct a pump down gadget for $k= h_{|E_{\varphi}| - r_{\varphi}, b_{\varphi}, q}^{+}$. Let us now show that the answer to  \textsc{Bounded Removal $q$-Illusion Elimination } for $E^q_{\varphi}$ and $b_{\varphi}$ is positive if and only if $\varphi$ is satisfiable.
		
		$(\Rightarrow)$ Suppose that $\varphi$ is satisfiable. We will show that the answer to $E^q_{\varphi}$ and $b_{\varphi}$ is positive.   Let us first consider the case in which $\frac{I_{\varphi} - r_{\varphi}}{|E^q_{\varphi}|} < q $. Then observe that as $\varphi$ is satisfiable, by Lemma 12 we have that it is possible to find a subnetwork $E'_{\varphi}$ of $E_{\varphi}$ in which $b_{\varphi}$ edges are eliminated, and where illusion was eliminated from $r_{\varphi}$ nodes. But then, by Lemma 7 we get that $\frac {I_{\varphi} - r_{\varphi} + h_{|E_{\varphi}|, b_{\varphi}, q}^{+}} {I_{\varphi} - r_{\varphi} + h_{|E_{\varphi}|, b_{\varphi}, q}^{+} +4} <q $. So we can construct a subnetwork of $E^q_{\varphi}$ in which only $b_{\varphi}$ edges are removed but $q$-majority illusion does not hold.
		
		Similarly, if   $\frac{I_{\varphi} - r_{\varphi}}{|E^q_{\varphi}|} \geq q $ we observe that by Lemma 8 we get that  $\frac {I_{\varphi} - r_{\varphi}}  {|E^q_{\varphi} | +h_{|E^q_{\varphi}|, b_{\varphi}, q}^{+}} < q$. So, we get that as we can eliminate illusion from $r_{\varphi}$ nodes in  $E_{\varphi}$ by eliminating $b_{\varphi}$ edges. But then we can construct a subnetwork of $E^q_{\varphi}$ in which only $b_{\varphi}$ edges are removed but $q$-majority illusion does not hold.
		
		$(\Leftarrow)$ Suppose now that $\varphi$ is not satisfiable. We will show that the answer to $E^q_{\varphi}$ and $b_{\varphi}$ is negative. let us first consider the case in which $\frac{I_{\varphi} - r_{\varphi}}{|E^q_{\varphi}|} < q $. Notice that by Lemma \ref{lemma:UpExists1}  that the minimum number of nodes from which illusion needs to be removed for $q$-majority illusion not to hold in $E^q_{\varphi}$ is $r_{\varphi}$. Furthermore observe that in the pump up gadget, the minimum number of edges which is needed to be added to eliminate the illusion from a single node is greater than 4. Moreover, there is a set of nodes $S$ under illusion in $E_{\varphi}$ such that for every $i \in S$, illusion can be eliminated from $i$ by adding 3 edges, without pushing any node under illusion. Further, as $\varphi$ is not satisfiable, by Lemma 12 we get that at it is not possible to remove the illusion from at least $r_{\varphi}$ nodes in $E_{\varphi}$. But then, it is also not possible to remove the illusion from at least $r_{\varphi}$ in $E^q_{\varphi}$.	The reasoning for the case in which $\frac{I_{\varphi} - r_{\varphi}}{|E^q_{\varphi}|} \geq q $ is symmetric.

	\end{proof}

	We will further show that \textsc{Bounded Removal $q$-Illusion Elimination } is NP-complete for every rational $q \in (0,1)$, To show that, we will provide a reduction of \textsc{Bounded Removal $q$-Illusion Elimination } from 2P2N-SAT problem for every such $q$. 
	
	Let us construct the labelled social network which we will call an \emph{encoding} of a formula $\varphi$ in 2P2N form, with a set of variables $P_{\varphi}=\{p_1, \dots , p_m\}$ and a set of clauses $C_{\varphi} = \{C_1, \dots, C_n\}$. We will also refer to it as $E_{\varphi}$. 
	
	\paragraph{Variable, clause, and balance gadgets} Let us construct the encoding of a formula $\varphi$ in 2P2N form, with a set of variables $P_{\varphi}=\{p_1, \dots , p_k\}$ and a set of clauses $C_{\varphi} = \{C_1, \dots, C_l\}$. For convenience, for a literal $L$ we denote as $C^L$ the set of clauses in which $L$ appears.
	%%%%%%%%%%%%%%Variable Gadget%%%%%%%%%%%%%%%%%%%%
	Let us start with describing what we call a \emph{variable gadget}. For every variable $p_i \in P_{\varphi}$, let us construct two triples of nodes labelled red, $\{ p_i^1, p_i^2, p_i^3 \}$ and $\{ \neg p_i^1, \neg p_i^2, \neg p_i^3 \}$. We say that the first of them corresponds to the literal $p_i$, while the second to $\neg p_i$, and call members of such triples \emph{literal nodes}. Further, for every literal $L$ construct a node $A_L$, labelled blue, which we call an \emph{auxiliary node} of $L$. For a literal $L$, let $A_L$ be connected to all nodes in the triple corresponding to $L$. Furthermore, for every literal $L$ and every node $L^j$ in the triple corresponding to $L$, construct a group of $|C^L|+2$ red nodes and link them to $L^j$. Moreover, for every such node $n$ linked to $L^j$, create a blue node, link it to $n$ and let all such blue nodes form a clique.  Also, for every variable $p_i$ construct a node $E_i$ labelled blue, which we call an \emph{extra node} of $p_i$. For a variable $p_i$, let $E_i$ be linked to all nodes in the triples corresponding to $p_i$ and $\neg p_i$. Finally, for every variable $p_i$ create five nodes labelled blue, and let them be linked to $E_i$.

	%%%%%%%%%%%%%%Variable Gadget%%%%%%%%%%%%%%%%%%%%
	
	\begin{figure}[H]
		\centering
		
		\scalebox{0.6}{	\begin{tikzpicture}
				[->,shorten >=1pt,auto,node distance=1.2cm,
				semithick]

				\node[shape=circle,draw=myred, pattern=crosshatch, pattern color=myred] (A)  {};
				\node[shape=circle,draw=myred, pattern=crosshatch, pattern color=myred, right of=A] (B)  {};
				\node[shape=circle,draw=myred, pattern=crosshatch, pattern color=myred, left of=A] (A')  {};
				\node[draw,  fit=(A') (B)](FIt1) {};
				\node[shape=circle,  left of=A'] (A'')  [fontscale=2] {$p_i$};

				\node[shape=circle,draw=myred, pattern=crosshatch, pattern color=myred, right of=B] (C)  {};
				\node[shape=circle,draw=myred, pattern=crosshatch, pattern color=myred, right of=C] (D)  {};
				\node[shape=circle,draw=myred, pattern=crosshatch, pattern color=myred, right of=D] (D')  {};
				\node[draw,  fit=(C) (D')](FIt2) {};
				\node[shape=circle,  right of=D'] (D'') [fontscale=2] {$\neg p_i$};

				\node[shape=circle,draw=black, below of=B, fill=myblue] (N)  {};
				\node[shape=circle,  right of=N] (N'''') [fontscale=2] {$E_i$};
				
				\draw [thick,-] (N) to  (FIt1) ;
				\draw [thick,-] (N) to  (FIt2) ;
				
				\node[shape=circle,draw=black, below of=N,  fill=myblue] (N')  {};
				\node[shape=circle,  left of=N',  fill=myblue] (k1)  {};
				\node[shape=circle,  left of=k1,  fill=myblue] (k2)  {};
				\node[shape=circle,  right of=N',  fill=myblue] (k3)  {};
				\node[shape=circle,draw=black, right of=k3,  fill=myblue] (N''')  {};
				
				\node[draw,  fit=(k2)  (N''') ](FIt3) {};

				%	\node[shape=circle,draw=black, below of=N, right of=N, fill=red] (N'''')  {};
				
				\node[shape=circle,draw=black, fill=myblue, above of=A] (E)  {};
				\node[shape=circle,  right of=E] (N''''') [fontscale=2] {$A_{p_i}$};
				\node[shape=circle,draw=black, fill=myblue, above of=D] (F)  {};
				\node[shape=circle,  right of=F] (N'''''') [fontscale=2] {$A_{\neg p_i}$};

				%	\draw [thick,-] (N) to  (N') ;
				%	\draw [thick,-] (N) to  (N'') ;
				%	\draw [thick,-] (N) to  (N''') ;
				%	\draw [thick,-] (N) to  (N'''') ;
				
				\draw [thick,-] (E) to  (FIt1) ;
				\draw [thick,-] (FIt2) -- (F) ;
				\draw [thick,-] (FIt3) to  (N) ;

		\end{tikzpicture}}
		\caption{Variable gadget for the variable $p_i$.}\label{fig:VariableGadget4}
	\end{figure}
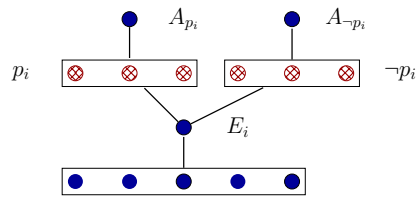

	%%%%%%%%%%%%%%%%%%%%%%Clause Gadget 
	In addition, let us construct what we call a \emph{clause gadget}. For every clause $C_i \in C_{\varphi}$ let us create a \emph{verifier node} $v_{C_i}$, labelled blue. We say that it corresponds to $C_i$. Furthermore, for every literal $L$ in $C_i$, let all nodes in the triple of literal nodes corresponding to $L$ be linked to the $v_{C_i}$. Finally, for a clause $C_i$, let us create a group of $3 \cdot | P^i  | -1$ nodes labelled blue, connected to $v_{C_i}$, where $P^i$ is the set of literals in $C_i$. It is important to notice that in every network which is a subnetwork of the encoding of $|  \varphi | $ where the illusion is eliminated from  $v_{C_i}$, at least one edge is removed from $v_{C_i}$ and the nodes corresponding to literals in  $C_i$. This condition is also sufficient for the elimination of illusion from $v_{C_i}$. Also, notice that the only node in the clause gadget corresponding to $C_i$ which is under illusion is $v_{C_i}$.
	
	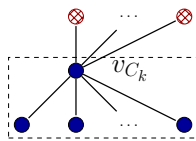
\begin{figure}[H]
		\centering
		
		\scalebox{0.6}{\begin{tikzpicture}
				[->,shorten >=1pt,auto,node distance=1.2cm,
				semithick]
				\node[shape=circle,draw=black, fill=myblue] (A)  {};
				\node[shape=circle,  right of=A] (Q) [fontscale=3]  {$v_{C_k}$};
				\node[shape=circle,draw=black, fill=myblue, below of= A, left of=A] (B)  {};
				\node[shape=circle,draw=black, fill=myblue, below of=A] (C)  {};
				\node[shape=circle,  right of=C] (D)  {$\dots$};
				\node[shape=circle,draw=black, fill=myblue, right of=D] (E)  {};
				
				\draw [thick,-] (A) to  (B) ;
				\draw [thick,-] (A) to  (C) ;
				\draw [thick,-] (A) to  (D) ;
				\draw [thick,-] (A) to  (E) ;
				\node[draw,  dashed, fit=(A) (C)  (B) (C)(E) ](FIt1) {};
				
				\node[shape=circle, draw=myred, pattern=crosshatch, pattern color=myred, above of= A] (A')  {};
				\node[shape=circle,  right of=A'] (B')  {$\dots$};
				\node[shape=circle,draw=myred, pattern=crosshatch, pattern color=myred, right of= B'] (C')  {};
				
				\draw [thick,-] (A) to  (A') ;
				\draw [thick,-] (A) to  (B') ;
				\draw [thick,-] (A) to  (C') ;

		\end{tikzpicture}}
		\caption{Clause gadget for the clause $C_k$.}\label{fig:clausegadget4}
	\end{figure}
	
	Observe that by construction blue is the strict majority colour in the encoding of $\varphi$. Also, let there be no edges in the encoding of $\varphi$ other than those defined before. Observe further that in the encoding of $\varphi$ there exactly $  3|P_{\varphi}| + |C_{\varphi}| $ nodes under illusion.
	
	Also, observe that in each variable gadget, the only nodes under illusion are auxiliary nodes and extra nodes. It is also worth noting that in every subnetwork of the encoding of $\varphi$, in which a node $A_L$ is not under illusion, all edges between $A_L$ and literal nodes are removed. Similarly, in a subnetwork of the encoding of $\varphi$ in which an extra node $E_i$ is not under illusion, at least one edge between $E_i$ and a literal nodes is removed.

	\begin{comment}
		\paragraph{Budget and requirement} %Let $\# I$ denote the number of nodes under illusion in the encoding of $\varphi$.
		In the constructed instance of \textsc{Bounded Removal Illusion Elimination } we check the existence of a subnetwork of the encoding of $\varphi$ in which at most $|P_{\varphi }|$ nodes are under illusion, and in which edges are removed not more than $3 \cdot | P_{\varphi}  |$ nodes. We call $|P_{\varphi }|$ \emph{requirement}, and we name  $3 \cdot | P_{\varphi}  |$ as \emph{budget}.  We also say that such a network satisfies the requirement and the budget.
	\end{comment}
	
	\paragraph{Budget and requirement}Furthermore, we will show that for some specific $k,m \in \mathbb{N}$ it holds that in some subnetwork of $E_{\varphi}$ in which at most $k$ edges are eliminated there are at most $m$ nodes under illusion if and only if $\varphi$ is satisfiable. 
	%Let $\# I$ denote the number of nodes under illusion in the encoding of $\varphi$.
	In the constructed instance of \textsc{Bounded Removal Illusion Elimination } we check the existence of a subnetwork of the encoding of $\varphi$ in which at most $|P_{\varphi }|$ nodes are under illusion, and in which edges are removed not more than $3 \cdot | P_{\varphi}  |$ nodes. We call $|P_{\varphi }|$ \emph{requirement}, and we name  $3 \cdot | P_{\varphi}  |$ as \emph{budget}, or $b_{\varphi}$.  We also say that such a network satisfies the requirement and the budget.

	\paragraph{Observations on modifications satisfying the budget and the requirement} Let us firstly observe that in the encoding of $\varphi$ there exactly $|C_{\varphi}| + 3 \cdot | P_{\varphi} |$ nodes under illusion, namely all clause, auxiliary and extra nodes. Furthermore, observe that the only red nodes they are linked to are literal nodes. Finally, let us notice that in a network satisfying the budget and the requirement the maximum number of nodes from which the illusion is eliminated amounts to the number of clause nodes, extra nodes and half of auxiliary nodes. This holds as, following previous observations, it is sufficient and necessary for a clause node or for an extra node to eliminate one edge between it and a literal node, in order to eliminate the illusion from this node. Furthermore, it is sufficient and necessary to eliminate to eliminate three edges between an auxiliary node and literal nodes to ensure that it is not under illusion. This entails that in a network satisfying the requirement and the budget, illusion is eliminated from all clause nodes, extra nodes and a half of the auxiliary nodes. Also, one can verify that in such a network no node is pushed into illusion. 
	
	In addition, let us observe that in any network satisfying the requirement and the budget, for every variable $p_i$, illusion is eliminated from exactly one of $A_{p_i}$ and $A_{\neg p_i}$. To see that observe that for each literal node at most one edge can be removed from it and any of its neighbours labelled blue, without pushing the literal node into illusion. Therefore, it is not possible to remove the illusion from $E_{p_i}$, if it is removed from both $A_{p_i}$ and $A_{\neg p_i}$. Given a network \textit{SN} satisfying the requirement and the budget, we will say that $p_i$ is false in the network if illusion is eliminated from $A_{p_i}$, and that it is true otherwise. We denote as the \emph{valuation} in \textit{SN} the set of literals true in \textit{SN}. Finally, given a network satisfying the budget and the requirement, a verifier node $v_{C_i}$ and a literal node corresponding to a literal $L$ such that an edge was removed between $v_{C_i}$ and this node, we say that $C_i$ is \emph{satisfied} in the network.
	
	%%%%%%%%%%%%%%%%New
	\begin{lemma*}\label{lemma:RemovalWorks}
		\normalfont \textbf{12.} For every formula $\varphi$ in 2P2N form, there is a subnetwork of $E_{\varphi}$ in which at most $k$ edges are eliminated there are at most $m$ nodes under illusion if and only if $\varphi$ is satisfiable. 
	\end{lemma*}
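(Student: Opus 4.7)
The plan is to mirror the structure of Lemma~10, adapting it to the edge-removal setting and drawing on the pre-lemma observations about which nodes must have their illusion eliminated in any valid modification.

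For the direction from satisfiability to the existence of a good subnetwork, I would begin with a model $M$ of $\varphi$ and translate it into deletions as follows. For each variable $p_i$, let $L_i$ denote the literal in $\{p_i,\neg p_i\}$ that is false under $M$, and delete the three edges between $A_{L_i}$ and the literal nodes of $L_i$; by the preceding observations, this is exactly what clears illusion from $A_{L_i}$ while leaving $A_{\neg L_i}$ under illusion. For each extra node $E_i$, delete one edge between $E_i$ and a literal node of the true literal of $p_i$. For each clause $C_j$, choose a literal $L\in C_j$ true under $M$ (guaranteed by $M \models \varphi$) and delete one edge from $v_{C_j}$ to a literal node of $L$. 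The 2P2N hypothesis is used crucially here: since each literal appears in at most two clauses while its triple contains three literal nodes, the clause-witness deletions can be distributed so no literal node is reused by different clauses. A direct count shows the total number of deleted edges meets the budget and the number of nodes remaining under illusion meets the requirement, while the structural balance maintained at the literal nodes ensures no new node is pushed into illusion.

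For the reverse direction, I would take a subnetwork $E'_{\varphi}$ satisfying the budget and requirement and invoke the pre-lemma observations: illusion must be eliminated from every clause node, every extra node, and exactly one of $A_{p_i}, A_{\neg p_i}$ for each variable $p_i$, with no node pushed into illusion. Define the valuation $V$ by setting $p_i$ true precisely when the illusion at $A_{p_i}$ is \emph{not} eliminated (equivalently, when it is $A_{\neg p_i}$ whose three incident edges are removed). Fix a clause $C_j$: since illusion is cleared from $v_{C_j}$, at least one edge incident to $v_{C_j}$ has been removed, and by construction this edge joined $v_{C_j}$ to a literal node of some $L \in C_j$. I then claim $L$ must be true in $V$; otherwise $L$ is false, so all three edges between $A_L$ and the literal nodes of $L$ have already been removed, and the literal node that additionally lost its $v_{C_j}$-edge would have strictly more red than blue neighbours, contradicting the ``no node pushed into illusion'' invariant. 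Hence $V$ satisfies every clause and $\varphi$ is satisfiable.

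The main obstacle will be the combinatorial bookkeeping in the forward direction: verifying that the proposed deletions neither cascade into pushing literal nodes into illusion nor exceed the budget. The 2P2N property does the heavy lifting by bounding the demand on each literal's triple of nodes, so the clause-witness deletions can be distributed across the three literal nodes of each literal without creating a blue-red imbalance at any literal node beyond what the construction's carefully tuned red groups already absorb.
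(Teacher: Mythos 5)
Your proposal follows the paper's own proof essentially step for step: the forward direction (model $\Rightarrow$ cheap subnetwork) deletes the three $A_L$--triple edges for each false literal, one edge per extra node to a true-literal node, and one edge per verifier node to a true-literal node, using the 2P2N bound to avoid reusing literal nodes; the backward direction reads off the valuation from which auxiliary nodes were cleared. Your explicit argument that a clause's witness deletion must hit a \emph{true} literal node (since a false literal's triple has already exhausted its one-blue-edge slack and a second deletion would push that literal node into illusion) is precisely the content the paper leaves implicit behind its earlier observation that ``at most one blue edge can be removed from a literal node without pushing it into illusion,'' so you have simply made the same reasoning more explicit rather than taken a different route.
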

	
	\begin{proof}
		%%%%%%%%%%%%%%%End New
		
		%\paragraph{Correctness of the construction} 
		Take a formula $\varphi$ in 2P2N form. Let us show that the answer to \textsc{Bounded Removal Illusion Elimination} given the encoding of $\varphi$, as well as budget and requirement is positive if and only if $\varphi$ is satisfiable.
		
		$(\Rightarrow)$ Suppose that $\varphi$ is not satisfiable. Let us further suppose towards contradiction that there exists a subnetwork \textit{SN} of the encoding of $\varphi$ satisfying the budget and the requirement. As $\varphi$ is not satisfiable, for every valuation $V$ over $P_{\varphi}$ there is a clause $C_i$ such that all literals in $C_i$ are false in $V$. Further, consider the valuation in \textit{SN}.  Then notice that, by previous observations, illusion is eliminated in \textit{SN} from all verifier nodes by removing edges between them and literal nodes. Therefore, all clauses are satisfied in \textit{SN}. But then, there is a valuation over $P_{\varphi}$ under which $\varphi$ is true, which contradicts the assumptions.
		
		$(\Leftarrow)$ Suppose that $\varphi$ is satisfiable. Let us construct a subnetwork \textit{SN} of the encoding of $\varphi$ satisfying the requirement and the budget. As $\varphi$ is satisfiable, there exists a valuation $V$ over $P_{\varphi}$ such that for every clause $C_i$ there is a literal $L$ in $C_i$ which is true in $V$. Then, let us eliminate edges between all literal nodes corresponding to literals false in $V$, and auxiliary nodes. Then, for every literal $L$ true in \textit{SN}, eliminate an edge from the node $L^1$  and an extra node. Finally, for every clause $C_i$, eliminate an edge between $v_{C_i}$ and exactly one literal node corresponding to a literal true in \textit{SN}, represented in $C_i$, which is always possible since $V$ is a model of $\varphi$. Notice that $\varphi$ is in 2P2N form. Therefore, as $L$ occurs twice in $\varphi$, we can ensure that at most one edge is eliminated between a node corresponding to $L$ and a verifier node. But then, in the constructed subnetwork, only $|  P_{\varphi}| $ nodes are under illusion, and edges have been eliminated between $3 \cdot |P_{\varphi}|$ pairs of nodes. Thus,  the encoding of $\varphi$ satisfies the requirement and the budget.
	\end{proof}

	We are now ready to show that \textsc{Bounded Removal $q$-Illusion Elimination } is NP-complete for every rational $q \in (0,1)$. Let $I_{\varphi}$ denote the number of nodes under illusion in $E_{\varphi}$.
	
	\begin{lemma*}\label{lemma:Component2}
		\normalfont \textbf{13.} \textsc{Bounded Removal $q$-Illusion Elimination } is NP-complete for every rational $q \in (0,1)$.
	\end{lemma*}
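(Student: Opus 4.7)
The plan is to closely mirror the structure already used to prove Lemma 11 (and the analogous Theorem 2), adapting the construction so that only edge removals (not additions) are allowed. Membership in NP is immediate, since given a guessed subnetwork $E' \subseteq E$ with $|E|-|E'| \le k$, one can check in polynomial time how many nodes remain under illusion and compare this with $q \cdot |N|$. So the work is entirely on NP-hardness.

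For NP-hardness, I would reduce from 2P2N-SAT. Given a 2P2N formula $\varphi$, first build the encoding $E_\varphi$ of the removal version, as used in Lemma 12. From Lemma 12 we already know that a subnetwork of $E_\varphi$ obtainable by removing at most $b_\varphi$ edges can reduce the number of nodes under illusion to at most $|P_\varphi|$ if and only if $\varphi$ is satisfiable. Let $I_\varphi$ denote the number of nodes under illusion in $E_\varphi$. To lift this from the absolute count to the fractional threshold $q$, I would attach one extra gadget to $E_\varphi$ whose internal edges cannot profitably be removed within the budget. Specifically, if $\frac{I_\varphi - r_\varphi}{|E_\varphi|} < q$, attach an $h^{\#}$-pump-up gadget for $h^{\#} = h^{\#}_{|E_\varphi|-r_\varphi, b_\varphi, q}$ (which exists by Lemma 7); otherwise attach an $h^{+}$-pump-down gadget for $h^{+} = h^{+}_{|E_\varphi|-r_\varphi, b_\varphi, q}$ (which exists by Lemma 8). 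Call the resulting network $E^q_\varphi$ and take the budget $b_\varphi$.

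The soundness argument then splits into the same two cases handled in Lemma 11. In the pump-up case, Lemma 7 is exactly calibrated so that reducing the count of illusioned nodes in the $E_\varphi$ part by $r_\varphi$ (i.e., down from $I_\varphi$ to $I_\varphi - r_\varphi$) brings the overall fraction strictly below $q$, while any smaller reduction keeps it at least $q$; in the pump-down case Lemma 8 plays the same role. Combined with Lemma 12, this gives the equivalence: a $b_\varphi$-bounded set of edge removals breaks $q$-majority illusion in $E^q_\varphi$ iff $\varphi$ is satisfiable.

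The main obstacle to verify carefully is that the auxiliary gadgets are genuinely inert under edge removal: one must check that within budget $b_\varphi$ no cheaper illusion-reduction is available by cutting edges inside the pump-up or pump-down gadget instead of inside $E_\varphi$. For the pump-up gadget this follows because each blue node in it has margin of victory $-4$, so removing only red-to-blue edges from it is strictly less efficient per edge than the removals inside $E_\varphi$ guaranteed by Lemma 12, and edges between blue nodes are absent; a symmetric argument, using that the pump-down gadget contains no node under illusion and no removal there can decrease the illusion count, handles the pump-down case. Once this inertness is established, the proof concludes exactly as in Lemma 11.
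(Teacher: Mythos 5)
Your proposal is correct and follows essentially the same route as the paper: build the removal encoding $E_\varphi$, append an $h^{\#}$-pump-up or $h^{+}$-pump-down gadget calibrated via Lemmas~7 and~8, and argue that the budget forces illusion removal to happen inside $E_\varphi$ (since each blue node in the pump-up gadget has margin $-4$, costing at least four edge removals versus the cheaper removals available in $E_\varphi$), then invoke Lemma~12. The ``inertness'' check you flag as the main thing to verify is indeed exactly what the paper's backward direction establishes.
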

	
	\begin{proof}
		Consider any rational $q \in (0,1)$. First, observe that \textsc{Bounded Removal $q$-Illusion Elimination } is in NP. Let us further construct a network $E^q_{\varphi}$ and a number $k$ for a 2P2N formula $\varphi$ such that the answer to  \textsc{Bounded Removal $q$-Illusion Elimination } for $E^q_{\varphi}$ and $k$ is positive if and only if $\varphi$ is satisfiable. 
		
		In the considered instance we will check whether we can find a subnetwork of $E^q_{\varphi}$ in which edges have been remove between at most  $b_{\varphi}$ pairs of nodes and in which $q$-majority illusion does not hold. The first component of $E^q_{\varphi}$ is $E_{\varphi}$. If $\frac{I_{\varphi} - r_{\varphi}}{|E_{\varphi}|} < q $, then  construct a pump up gadget for $k= h_{|E_{\varphi}|- r_{\varphi}, b_{\varphi}, q}^{\#}$. Otherwise, if  $\frac{I_{\varphi - r_{\varphi}}}{|E_{\varphi}|} \geq q $, construct a pump down gadget for $k= h_{|E_{\varphi}|, b_{\varphi}, q}^{+}$. Let us now show that the answer to  \textsc{Bounded Removal $q$-Illusion Elimination } for $E^q_{\varphi}$ and $b_{\varphi}$ is positive if and only if $\varphi$ is satisfiable.
		
		$(\Rightarrow)$ Suppose that $\varphi$ is satisfiable. We will show that the answer to $E^q_{\varphi}$ and $b_{\varphi}$ is positive.   Let us first consider the case in which $\frac{I_{\varphi} - r_{\varphi}}{|E^q_{\varphi}|} < q $. Then observe that as $\varphi$ is satisfiable, by Lemma 12 we have that it is possible to find a subnetwork $E'_{\varphi}$ of $E_{\varphi}$ in which $b_{\varphi}$ edges are eliminated, and where illusion was eliminated from $r_{\varphi}$ nodes. But then, by Lemma 8 we get that $\frac {I_{\varphi} - r_{\varphi} + h_{|E_{\varphi}|, b_{\varphi}, q}^{+}} {I_{\varphi} - r_{\varphi} + h_{|E_{\varphi}|, b_{\varphi}, q}^{+} +4} <q $. So we can construct a subnetwork of $E^q_{\varphi}$ in which only $b_{\varphi}$ edges are removed but $q$-majority illusion does not hold.
		
		Similarly, if   $\frac{I_{\varphi} - r_{\varphi}}{|E^q_{\varphi}|} \geq q $ we observe that by Lemma 7 we get that  $\frac {I_{\varphi} - r_{\varphi}}  {|E^q_{\varphi} | +h_{|E^q_{\varphi}|, b_{\varphi}, q}^{+}} < q$. So, we get that as we can eliminate illusion from $r_{\varphi}$ nodes in  $E_{\varphi}$ by eliminating $b_{\varphi}$ edges. But then we can construct a subnetwork of $E^q_{\varphi}$ in which only $b_{\varphi}$ edges are removed but $q$-majority illusion does not hold.
		
		$(\Leftarrow)$ Suppose now that $\varphi$ is not satisfiable. We will show that the answer to $E^q_{\varphi}$ and $b_{\varphi}$ is negative. let us first consider the case in which $\frac{I_{\varphi} - r_{\varphi}}{|E^q_{\varphi}|} < q $. Notice that by Lemma 7 we get that we get that the minimum number of nodes from which illusion needs to be removed for $q$-majority illusion not to hold in $E^q_{\varphi}$ is $r_{\varphi}$. Furthermore observe that in the pump up gadget, the minimum number of edges which is needed to be removed to eliminate the illusion from a single node is 4. Moreover, there is a set of nodes $S$ under illusion in $E_{\varphi}$ such that for every $i \in S$, illusion can be eliminated from $i$ by removing 3 edges, without pushing any node under illusion. Further, as $\varphi$ is not satisfiable, by Lemma \ref{lemma:RemovalWorks} we get that at it is not possible to remove the illusion from at least $r_{\varphi}$ nodes in $E_{\varphi}$. But then, it is also not possible to remove the illusion from at least $r_{\varphi}$ in $E^q_{\varphi}$.	
		
	\end{proof}
	
	\begin{theorem}\label{theorem:ExtraRemoval}
		\textsc{Addition $q$-Illusion Elimination } and \textsc{removal $q$-Illusion Elimination } are NP-complete for every rational $q \in (0,1)$.
	\end{theorem}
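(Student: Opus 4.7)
The plan is to establish both NP-completeness results by following the same two-stage strategy used for Theorem~\ref{theorem:removal}: first reduce 2P2N-SAT to a restricted version of the problem in which a tightly balanced budget forces a unique mode of behaviour, and then compose the resulting construction with a pump-up or pump-down gadget (chosen according to whether $\frac{I_{\varphi}-r_{\varphi}}{|E_{\varphi}|}$ sits below or above $q$) to translate the combinatorial condition into a threshold on the fraction of illusioned nodes. Membership in NP is immediate in both cases because one can verify a modification in polynomial time, so only hardness requires argument.

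For \textsc{Addition $q$-Illusion Elimination}, I would design an encoding in which literal nodes are blue and come in triples, each padded by enough private red neighbours that the literal node is comfortably not-under-illusion but also cannot accept a new red edge without flipping. Auxiliary nodes $A_L$, extra nodes $E_i$, and verifier nodes $v_{C_i}$ are blue and pre-wired with red ``weight'' chosen so that each is under illusion but can be pulled out by one (or a small fixed number of) newly added blue edges. Each literal node corresponding to $L$ is pre-adjacent to all auxiliary/extra/verifier nodes \emph{except} the few it is meant to serve, so that the only productive additions come from the intended literal nodes. The 2P2N structure then guarantees that each literal has exactly enough ``spare'' literal nodes to feed either its own auxiliary node or its two verifier nodes, but not both. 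Setting the budget $b_{\varphi}$ and the requirement $r_{\varphi}$ so that the margin-of-victory accounting is tight forces an all-or-nothing choice per variable --- corresponding to a Boolean valuation --- with simultaneous satisfaction of every clause via its verifier. The correspondence with satisfiability is then argued exactly as in Lemma~\ref{lemma:RemovalWorks}.

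For \textsc{Removal $q$-Illusion Elimination}, the construction is dual: literal nodes are now red and each auxiliary/extra/verifier node is blue and under illusion \emph{because} of its edges to red literal nodes. Removing a small set of such edges is what eliminates illusion from the affected blue node, and again the 2P2N balance lets a budget of $O(|P_{\varphi}|)$ removals cover precisely the clauses satisfied by a Boolean valuation. Care must be taken that removing an edge cannot accidentally push a previously safe node into illusion; this is ensured by attaching private red ``padding'' adjacent to all literal nodes so that their red/blue neighbourhood margin remains safe after any budget-respecting removal.

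Finally, to obtain the result for every rational $q\in(0,1)$, append either a pump-up gadget with $h=h^{\#}_{|E_{\varphi}|-r_{\varphi},b_{\varphi},q}$ (when $\frac{I_{\varphi}-r_{\varphi}}{|E_{\varphi}|}<q$) or a pump-down gadget with $h=h^{+}_{|E_{\varphi}|-r_{\varphi},b_{\varphi},q}$ (otherwise), invoking Lemmas~\ref{lemma:UpExists1} and~\ref{lemma:DownExist1} exactly as in the proof of Theorem~\ref{theorem:removal}; since these pump gadgets have no edges to the main component, they remain inert under any budget-respecting modification. The main obstacle will be proving that within the tight budget no unintended use of additions (respectively removals) can meet the requirement --- i.e.\ that every feasible modification really does correspond to a satisfying valuation. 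The cleanest route is a margin-of-victory accounting that shows the total margin among illusioned nodes is exactly the budget, leaving no slack for alternative strategies such as eliminating illusion from more than half the auxiliary nodes, or for incidentally pushing extra nodes into illusion to free budget elsewhere.
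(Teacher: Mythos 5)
Your plan is essentially the paper's: reduce from 2P2N-SAT via variable/clause gadgets whose margins of victory are tuned so tightly that a budget-respecting modification is forced to encode a Boolean valuation, then splice in a pump-up or pump-down gadget (chosen by comparing $\frac{I_\varphi - r_\varphi}{|E_\varphi|}$ with $q$ and sized via Lemmas~\ref{lemma:UpExists1} and~\ref{lemma:DownExist1}) to hit arbitrary rational $q\in(0,1)$, with membership in NP being trivial. The paper does exactly this, giving a separate addition encoding and a removal encoding (Lemmas in the appendix numbered 10--13), and uses the same margin-of-victory accounting you anticipate as the key tightness argument.

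Two small cautionary notes on your description. First, your statement that the pump gadgets ``remain inert under any budget-respecting modification'' because they are disconnected from the main component is not a valid inference on its own: nothing stops the adversary from spending budget inside a pump gadget. What actually keeps them inert is a cost comparison --- in the pump-up gadget each blue node has margin $-4$, so eliminating its illusion costs strictly more edges than eliminating illusion from a node in $E_\varphi$ (which costs at most $3$), and the pump-down gadget has no illusioned nodes to begin with. The paper spells this out, and your ``margin-of-victory accounting'' sentence hints at it, but the disconnectedness claim by itself does not carry the argument. Second, some of your colouring choices differ from the paper's (e.g.\ in the addition encoding the paper's auxiliary, extra, and verifier nodes are red, and the literal nodes are in fact under illusion and are left under illusion, rather than being ``comfortably not-under-illusion'' as you propose). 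These are calibration choices rather than structural ones --- either can be made to work --- but if you take your variant you must redo the margin arithmetic, since a node's own colour affects the margins of all its neighbours and hence which modifications are budget-feasible.
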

	
	\begin{proof}
		The claim follows immediately from Lemma \ref{lemma:Component1} and Lemma \ref{lemma:Component2}.
	\end{proof}
	
	\subsection*{Plurality illusion}
	
	Let us show that there are networks which allow for colouring with multiple colours where all agents perceive an option different than the plurality winner as the most popular option, but which do not admit 1-majority illusion. Let us first provide a definition of a \emph{plurality illusion}. Let $C$ be a set of colours. Given a labelled social network \textit{SN}=$(N,E,f)$, where $f$ is a function $f: N \rightarrow C$,  we denote the set of most popular colours in the network as $\textit{Pl}_{\textit{SN}}$. If the most popular colour is unique, we will call it the \emph{plurality winner}. Formally, $Pl_{\textit{SN}}= \displaystyle \argmax_{c \in C} |\{i \in N : f(i)=c\}|$. Similarly, for an agent $i$, $\textit{Pl}_{\textit{SN}}^i$ is the set of most popular options in $i$'s neighbourhood. Formally, $\textit{Pl}_{\textit{SN}}^i= \displaystyle \argmax_{c \in C} |\{i \in {\it E }(i) \mid f(i)=c\}|$. If $\textit{Pl}_{\textit{SN}}^i=c$ for some $c \in C$, we say that $c$ is the plurality winner in $i$'s neighbourhood.  Then, we say that an agent $i \in N$ is under illusion plurality illusion if plurality winner in $i$'s neighbourhood is different than the plurality winner (while both exist). Further, $f$ induces plurality illusion if all agents are under plurality illusion in $(N,E,f)$. Also, we say that $(N,E)$ \emph{admits} plurality illusion if some $f:N \rightarrow C$ induces plurality illusion.
	
	\begin{observation}
		There are networks which admit a plurality illusion, but not majority illusion.
	\end{observation}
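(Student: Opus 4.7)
The plan is to prove the observation by exhibiting a specific cluster graph together with a 3-colouring that induces a plurality illusion, and then by appealing to a structural obstruction for 1-majority illusion on such graphs. Concretely, I would take $G$ to be the disjoint union of two copies of $K_8$, say $K_8^1$ and $K_8^2$, and three-colour $G$ by assigning five vertices of $K_8^1$ the colour $r$ and the remaining three the colour $b$, and symmetrically giving five vertices of $K_8^2$ the colour $g$ and three the colour $b$. The global colour counts are then $5r$, $5g$, and $6b$, so $b$ is the unique plurality winner on $(N,E,f)$.

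Next I would verify plurality illusion by inspecting each vertex's open neighbourhood. In $K_8^1$, an $r$-vertex sees four $r$'s and three $b$'s, giving unique local plurality $r$; a $b$-vertex in $K_8^1$ sees five $r$'s and two $b$'s, again giving local plurality $r$. By the symmetric argument in $K_8^2$, every vertex there has unique local plurality $g$. Since every local plurality lies in $\{r, g\}$ whereas the global plurality is $b$, all $16$ vertices are under plurality illusion.

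To rule out 1-majority illusion I would prove the auxiliary claim that cluster graphs do not admit one. Under any $2$-labelling with $b$ as the global strict majority, each vertex of every clique $K$ must see a strict $r$-majority in its open neighbourhood. Testing this separately for the $r$-vertices of $K$ (which see $|R_K|-1$ reds and $|B_K|$ blues, forcing $|R_K|\geq |B_K|+2$) and the $b$-vertices of $K$ (which see $|R_K|$ reds and $|B_K|-1$ blues, forcing $|R_K|\geq |B_K|$) yields the stronger inequality $|R_K|\geq |B_K|+2$ inside every clique. Summing over all $k$ cliques gives $|R|\geq |B|+2k$, which contradicts $|B|>|R|$. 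Since the two-$K_8$ graph is a cluster graph, this closes the argument.

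The main combinatorial obstacle is calibrating the clique sizes and colour splits so that each clique locally displays a plurality distinct from the accumulating global plurality $b$. The $5{+}3$ split inside each $K_8$ precisely threads this needle: the local plurality is secured by strict majorities inside the clique ($4>3$ for the majority-colour vertices, $5>2$ for the minority-colour vertices), while the aggregated $b$-count just overtakes each of the $r$- and $g$-counts ($6>5$) and so fixes $b$ as the unique global plurality winner.
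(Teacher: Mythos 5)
Your proof is correct, and it takes a genuinely different route from the paper's. The paper exhibits a specific connected 13-node graph (a 4-clique with pendant vertices, plus a star on 5 vertices, as in Figure~9), colours it with five blue, four red and four green nodes, and rules out 1-majority illusion by a counting argument driven by dependants: the five nodes with dependants must be red, the degree-4 centre must have three red neighbours, and this already forces at least eight reds out of thirteen, contradicting blue being a strict majority. You instead take the disjoint union of two copies of $K_8$ with a $5{+}3$ split in each component and prove a standalone structural lemma: in any cluster graph (disjoint union of cliques), a 1-majority illusion with blue as strict global majority would force $|R_K|\geq|B_K|+2$ in every clique $K$, hence $|R|\geq|B|+2k$, contradiction. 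Your lemma needs a sentence to dispose of cliques with no red vertex (an all-blue clique either has an isolated vertex with no local majority or a vertex whose local majority is blue, and in either case that vertex is not under illusion), but once you say that, the argument is airtight.

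It is worth pointing out that your cluster-graph lemma is precisely the unproved remark the paper makes in its Conclusions section (``a network where any pair of nodes with a common connection is also connected, does not admit 1-majority illusion''), so your proof also supplies that missing justification. The trade-offs: the paper's example is smaller (13 vs.\ 16 nodes) and connected, which may be aesthetically preferable for a social-network illustration; your example is more symmetric, your impossibility argument is a clean general lemma rather than an ad hoc count tied to one picture, and it simultaneously closes the paper's unproved claim about triangle-closed graphs.
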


	\begin{figure}[H]	
		\centering
		\scalebox{0.6}{\begin{tikzpicture}
				[->,shorten >=1pt,auto,node distance=1.2cm,
				semithick]
				\node[shape=circle,draw=black, pattern=crosshatch, pattern color=myred] (A)  {};
				\node[shape=circle,draw=black,  right of=A,pattern=crosshatch, pattern color=myred] (B)  {};
				\node[shape=circle,draw=black, below of =A, fill=green] (C)  {};
				\node[shape=circle,draw=black, right of =C, fill=green] (D)  {};
				
				\node[shape=circle,draw=black, above of =A, fill=blue] (E)  {};
				\node[shape=circle,draw=black, above of =B, fill=blue] (F)  {};					
				\node[shape=circle,draw=black, below of =C, fill=blue] (G)  {};
				\node[shape=circle,draw=black, below of =D, fill=blue] (H)  {};
				
				\draw [thick,-] (A) to  (E) ;
				\draw [thick,-] (B) to  (F) ;
				\draw [thick,-] (C) to  (G) ;
				\draw [thick,-] (D) to  (H) ;
				
				\draw [thick,-] (A) to  (B) ;
				\draw [thick,-] (A) to  (C) ;
				\draw [thick,-] (A) to  (D) ;				
				\draw [thick,-] (B) to  (C) ;
				\draw [thick,-] (B) to  (D) ;
				\draw [thick,-] (C) to  (D) ;
				
				\node[shape=circle,draw=black, right of = B ,fill=green] (I)  {};
				\node[shape=circle,draw=black, below of = I ,fill=blue] (J)  {};
				\node[shape=circle,draw=black, above of = I ,pattern=crosshatch, pattern color=myred] (K)  {};
				\node[shape=circle,draw=black, right of = K ,pattern=crosshatch, pattern color=myred] (L)  {};
				\node[shape=circle,draw=black, right of = L ,fill=green] (M)  {};
				
				\draw [thick,-] (I) to  (J) ;
				\draw [thick,-] (I) to  (K) ;
				\draw [thick,-] (I) to  (L) ;
				\draw [thick,-] (I) to  (M) ;
				
		\end{tikzpicture}}

		\caption{Example of a social network admitting a 1-plurality illusion with three colours, but not admitting a 1-majority illusion.}\label{fig:PlNotMaj}
	\end{figure}
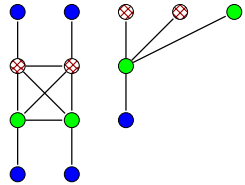
	
	\begin{example}
		
		Consider the undirected graph shown in Figure \ref{fig:PlNotMaj}. Let us begin with showing that this network admits a 1-plurality illusion with three colours. To see that, consider the labelling in Figure \ref{fig:PlNotMaj}. Notice that 5 nodes are labelled blue, 4 labelled red and 4 labelled green. Thus, blue is the plurality winner. However, one can verify that there is a plurality winner other than blue in the neighbourhood of every node in the network. So, the proposed labelling is a plurality illusion.
		
		Furthermore, let us demonstrate that this network does not admit 1-majority illusion. Suppose towards contradiction that there is a labelling $L$ of this network which is a majority illusion. Then observe that there are 13 nodes in the network, and hence at most 6 nodes are labelled red. Moreover, all nodes in the 4-clique in the left subnetwork and the central node in the right subnetwork have dependants, and hence need to be labelled red. Furthermore, the central node in the right subnetwork has four neighbours, and hence by assumption it is under the majority illusion, at least three of the nodes linked to it need to be labelled red. But then at least 8 nodes are labelled red in $L$, which contradicts the assumptions. So, the network in Figure \ref{fig:PlNotMaj} does not admit a 1-majority illusion.
		
	\end{example}

\end{document}